\DeclareMathOperator*{\argmax}{argmax}
\DeclareMathOperator{\diag}{diag}
\DeclareMathOperator{\rank}{rank}
\newtheorem{theorem}{Theorem}
\newtheorem{lemma}{Lemma}
\newtheorem{proposition}{Proposition}
\title{High-dimensional changepoint estimation with heterogeneous missingness}
\author{Bertille Follain$^{*}$, Tengyao Wang$^{\dagger}$ and Richard J. Samworth$^*$ \\
$^*$Statistical Laboratory, University of Cambridge \\
$^\dagger$Department of Statistical Science, University College London 
}
\begin{document}
\maketitle

\begin{abstract}
We propose a new method for changepoint estimation in partially-observed, high-dimensional time series that undergo a simultaneous change in mean in a sparse subset of coordinates.  Our first methodological contribution is to introduce a `MissCUSUM' transformation (a generalisation of the popular Cumulative Sum statistics), that captures the interaction between the signal strength and the level of missingness in each coordinate.  In order to borrow strength across the coordinates, we propose to project these MissCUSUM statistics along a direction found as the solution to a penalised optimisation problem tailored to the specific sparsity structure.  The changepoint can then be estimated as the location of the peak of the absolute value of the projected univariate series.  In a model that allows different missingness probabilities in different component series, we identify that the key interaction between the missingness and the signal is a weighted sum of squares of the signal change in each coordinate, with weights given by the observation probabilities.   More specifically, we prove that the angle between the estimated and oracle projection directions, as well as the changepoint location error, are controlled with high probability by the sum of two terms, both involving this weighted sum of squares, and  representing the error incurred due to noise and the error due to missingness respectively.  A lower bound confirms that our changepoint estimator, which we call \texttt{MissInspect}, is optimal up to a logarithmic factor.  The striking effectiveness of the \texttt{MissInspect} methodology is further demonstrated both on simulated data, and on an oceanographic data set covering the Neogene period.  
\end{abstract}

\textit{Key words}: changepoint estimation; missing data; high-dimensional data; segmentation; sparsity

\section{Introduction}
The Big Data era offers the exciting prospect of being able to transform our understanding of many scientific phenomena, but at the same time many traditional statistical techniques may perform poorly, or may no longer be computable at all, when applied to contemporary data challenges.  A core assumption that underpins much of statistical theory, as well as the way in which we think about statistical modelling, is that our data are realisations of independent and identically distributed random variables.  However, practical experience reveals that this is typically unrealistic for modern data sets, and developing methods and theory to handle departures from this important but limited setting represents a key theme for the field.

In contexts where data are collected over time, one of the simplest generalisations of an independent and identically distributed data stream is given by \emph{changepoint} models.  Here, we postulate that our data may be segmented into shorter, homogeneous series.  Of course, the structural break, or changepoint, between these series is often of interest in applications, such as distributed denial of service monitoring of network traffic \citep{peng2004proactively}, disease progression tracking via the alignment of electronic medical records \citep{huopaniemi2014disease} and the analysis of `shocks' in stock price data \citep{chen1997testing}.   

Another issue that turns out to be critical in working with Big Data in practice is that of missing data.  One reason for this is that when each observation is high-dimensional, it is frequently the case that most or even every observation has missingness in some coordinates; thus a complete-case analysis, which simply discards such observations, is unviable \citep{zhu2019high}.   

The aim of this paper is to study the core, high-dimensional changepoint problem of a sparse change in mean, but where our data are corrupted by missingness.  In fact, in cases where our data arise as discrete observations of several continuous processes, the observation times in different coordinates may not be the same, and such a setting also fits within our framework.  A key feature of both our methodology and theory is that we wish to be able to handle \emph{heterogeneous} missingness, i.e.~where the levels of missingness may differ across coordinates.  Specifically, our primary theoretical goal is to understand the way in which the missingness \emph{interacts} with the signal strengths in the different series to determine the difficulty of the problem.

In Section~\ref{Sec:Methodology}, we consider a setting where the practitioner has access to a partially-observed $p \times n$ data matrix, where $p$ is the number of series (coordinates) being monitored, and $n$ is the number of time points.  We seek to identify a time at which the $p$-dimensional mean vector changes, in at least one coordinate.  One of the key ideas that underpins our methodological contribution is to define a new version of the popular Cumulative Sum (CUSUM) transformation \citep{Page1955} that is able to handle the missingness appropriately.  This operation, which we refer to as the MissCUSUM transformation, returns a $p \times (n-1)$ matrix, and the intuition is that in coordinates that undergo a change in mean, the transformed series should peak in absolute value near the changepoint.  One of the main advantages of our proposal is that it avoids the need to impute missing data\footnote{In fact, our initial approach to this problem was to consider iterating between the imputation of missing entries using row means on either side of a putative changepoint, and then updating the current changepoint location estimate using the imputed data matrix.  This turned out to perform poorly, because the imputation step tended to reinforce bias in the changepoint estimate, leading to the iterations becoming stuck (potentially far from the true location) very quickly.}.  

Since the changepoint location is shared across the signal coordinates, it is natural to seek to borrow strength across the different data streams to estimate the changepoint.  To this end, our next goal is to estimate a projection direction, in order to convert the MissCUSUM transformation into a univariate CUSUM series.  Such a projection direction should ideally maximise the signal-to-noise ratio of the projected series.  When the data are fully observed, the oracle projection direction turns out to be the leading left singular vector of the (rank one) CUSUM transformation of the mean matrix.  This facilitates estimation approaches based on entrywise $\ell_1$-penalised $M$-estimation, as in the \texttt{inspect} algorithm of \citet{wang2016highdimensional}.  A crucial difference when we have to handle missing data, however, is that the MissCUSUM transformation of the mean matrix is no longer of rank one, which means that the entrywise $\ell_1$-penalty no longer adequately captures the sparsity structure of the vector of mean change.  Instead, we introduce a new optimisation problem that penalises the $\ell_1$-norm of the leading left singular vector of a rank one approximation of the MissCUSUM transformation.  This methodological proposal, which we call \texttt{MissInspect}, leads to considerably improved performance.  Implementation code for our method is available in the GitHub repository \url{https://github.com/wangtengyao/MissInspect}.

A further benefit of the \texttt{MissInspect} methodology is that it is amenable to theoretical analysis.  In particular, we study a Missing Completely At Random model with \emph{row homogeneous} missingness; in other words, the observation probability remains constant in each row, but may vary arbitrarily across rows.  In Proposition~\ref{Prop:SineAngle} in Section~\ref{Sec:Theory}, we provide a high-probability bound on the angle between the estimated and oracle projection directions.  Theorems~\ref{Thm:SlowRate} and~\ref{Thm:FastRate} then establish high-probability bounds on the accuracy of the estimated changepoint location whenever the estimated and oracle projection directions are sufficiently well aligned, for a sample splitting variant of our algorithm.  Theorem~\ref{Thm:SlowRate} provides a very general guarantee, while Theorem~\ref{Thm:FastRate} establishes a faster rate whenever the observation probability in each row satisfies a lower bound.  This faster rate comprises two terms, representing the error incurred due to noise in the observations, and the error due to missingness, respectively.  The key quantity in both of these terms turns out to be a weighted Euclidean norm of the vector of mean change, where the weights are given by the observation probabilities in each row.  This weighted average therefore captures the interaction between the signal strength and the missingness probabilities, and suggests that our analysis handles effectively the heterogeneity of the missingness across rows (a more naive analysis would see the worst-case observation probability appearing in the bounds).  This intuition is confirmed by our minimax lower bound (Theorem~\ref{Thm:LowerBound}), which indicates that the \texttt{MissInspect} algorithm attains the minimax rate of convergence in all problem parameters, up to a logarithmic factor. 

Section~\ref{Sec:Numerical} explores the empirical performance of our \texttt{MissInspect} methodology.  We study the ability of the algorithm to estimate both the oracle projection direction and the changepoint location, and compare with an alternative algorithm that imputes the missing entries using the well-known \texttt{softImpute} algorithm \citep{mazumder2010spectral}, and then runs the original \texttt{inspect} algorithm.  We find that the \texttt{MissInspect} algorithm considerably outperforms this approach, and provides further evidence of its practical utility in missing data settings.  In this section, we also present an application of the \texttt{MissInspect} methodology to detect changes in oceanographic currents from carbon isotope measurements extracted from cores drilled into the ocean floor.  Section~\ref{Sec:Extensions} discusses various methodological and theoretical extensions of our proposal to more complicated problems, such as the estimation of multiple changepoints, or more general data generating and missingness mechanisms.  Proofs or our main results are given in Section~\ref{Sec:Proofs}, with auxiliary results deferred to the Appendix.

    The study of changepoint problems dates at least back to \citet{Page1955}, and has since found applications in many different areas, including genetics \citep{olshen,Zhangetal2010}, disease outbreak watch \citep{sparks}, aerospace engineering \citep{henry} and functional magnetic resonance imaging studies \citep{aston2012}, in addition to those already mentioned.  Entry points to the literature include \citet{csorgo} and \citet{rice}.  In high-dimensional changepoint settings, where we may have a sparsity assumption on the coordinates of change, prior work includes \citet{Bai2010}, \citet{Zhangetal2010}, \citet{HorvathHuskova2012}, \citet{Cho_2014}, \citet{chan2015optimal}, \citet{Jirak2015}, \citet{Cho_2016}, \citet{soh2017high}, \citet{wang2016highdimensional}, \citet{enikeeva2013highdimensional}, \citet{padilla2019optimal} and \citet{liu2021minimax}.  The only works of which we are aware on changepoint estimation with missing data are those of \citet{Xie_2013} and \citet{londschien2021change}, both of which consider different settings to ours.  \citet{Xie_2013} study a situation where partially-observed sequential data lie close to a time-varying, low-dimensional submanifold embedded within an ambient space; on the other hand, \citet{londschien2021change} consider changepoint detection in graphical models.  Finally we mention that our focus in this work is on the \emph{offline} version of the changepoint estimation problem, where the practitioner sees the whole data set prior to determining a changepoint location.  The corresponding online version, where data are observed sequentially and the challenge is to declare a change as soon as possible after it has occurred, has also received attention in recent years; see, e.g.,  \citet{mei2010efficient}, \citet{xie2013sequential}, \citet{chan2017optimal} and \citet{chen2020highdimensional}.

We conclude this section by introducing some notation that is used throughout the paper.  Given $n \in \mathbb{N}$, we let $[n] := \{1,\ldots,n\}$.  For a vector $u=(u_1, \ldots , u_M)^\top \in \mathbb{R}^M$, a matrix $A=(A_{ij}) \in \mathbb{R}^{M\times N }$ and for $r \in [1,\infty)$, we write $\|u\|_r := \bigl(\sum_{i=1}^M |u_i|^r\bigr)^{1/r} $ and $\|A\|_r := \bigl(\sum_{i\in[M]} \sum_{j\in[N]} |A_{ij}|^r\bigr)^{1/r}$ for their entrywise $\ell_r$-norms, as well as $\|u\|_\infty := \max_{i\in[M]} |u_i|$ and $\|A\|_\infty := \max_{i\in[M], j\in[N]} |A_{ij}|$.  Given $\boldsymbol{q} = (q_1,\ldots,q_M)^\top \in [0,1]^M$, we write $\sqrt{\boldsymbol{q}} := (\sqrt{q_1},\ldots,\sqrt{q_M})^\top$ and let $\|u\|_{r,\boldsymbol{q}} := \bigl(\sum_{i=1}^M |u_i|^r q_i\bigr)^{1/r}$.  Writing $\sigma_1(A), \ldots , \sigma_s(A)$ for the non-zero singular values of $A$, where $s := \mathrm{rank}(A)$, we let $\|A\|_{\mathrm{op}} := \max_{i \in [s]} \sigma_i(A)$, $\|A\|_* := \sum_{i=1}^{s} \sigma_i(A)$ and $\|A\|_{\mathrm{F}}:= \|A\|_2=\{\sum_{i=1}^{s}\sigma_i(A)^2\}^{1/2}$  denote its operator, nuclear and Frobenius norms respectively. We also write $\|u\|_0 := \sum_{i=1}^M \mathbbm{1}_{\{u_i \neq 0\}}$.  We denote by $\diag(u)$ the $M \times M$ diagonal matrix with $u$ as its diagonal.  For $S \subseteq
[M]$ and $T \subseteq [N]$, we write $u_S := (u_i : i \in S)^\top \in \mathbb{R}^{|S|}$ and write $M_{S,T} \in \mathbb{R}^{|S| \times |T|}$ for the sub-matrix of $A$ obtained by extracting the rows and columns with indices in $S$ and $T$ respectively. For two matrices $A,B \in  \mathbb{R}^{M \times N}$, we denote their trace inner product as $\langle A,B \rangle := \mathrm{tr}(A^\top B)$. We also denote their Hadamard product as $A \circ B \in \mathbb{R}^{M \times N}$. For non-zero vectors $u, v \in \mathbb{R}^M$, we write
$$ \angle(u,v) := \cos^{-1}\left( \frac{|\langle u,v\rangle|}{\|u\|_2\|v\|_2} \right) $$ 
for the acute angle bounded between them. We let $\mathbb{B}^M := \{\mathbb{R}^M : \|x\|_2 \leq 1\}$ and $\mathbb{S}^{M-1} :=\{x \in \mathbb{R}^M : \|x\|_2 =1\}$ denote the unit Euclidean ball and sphere in $\mathbb{R}^M$ respectively, and define $\mathbb{S}^{M-1}(k) := \{x \in \mathbb{S}^{M-1} : \|x\|_0  \leq k \}$.  Given positive sequences $(a_n), (b_n)$, we write $a_n \lesssim b_n$ to mean that there exists a universal constant $C > 0$ such that $a_n \leq Cb_n$ for all $n$. 

\section{\texttt{MissInspect} methodology}
\label{Sec:Methodology}

Throughout this work, we will assume that the practitioner has access to a partially-observed $p \times n$ data matrix.  We will denote the full data matrix as $X = (X_{j,t}) \in \mathbb{R}^{p \times n}$, and let $\Omega = (\omega_{j,t}) \in \{0,1\}^{p \times n}$ denote the \emph{revelation matrix}, so that $\omega_{j,t} = 1$ if $X_{j,t}$ is observed, and is equal to zero otherwise.  Formally, then, we can regard the observed data as $(X \circ \Omega,\Omega)$; note here, that since the practitioner has access to the matrix $\Omega$, they are able to distinguish between an observed zero and a zero caused by missingness in $X \circ \Omega$.  For our theoretical analysis, the $\circ$ notation is a convenient way of avoiding the need to introduce an `\texttt{NA}' category for missing values. 

In our theory, we will regard $X$ as a realisation of a random matrix, whose mean matrix we denote by $\boldsymbol{\mu} = (\mu_1,\ldots,\mu_n) \in \mathbb{R}^{p \times n}$.  The changepoint structure of $\mu$ is encoded via the assumption that there exist $z \in [n-1]$ and $\mu^{(1)}, \mu^{(2)} \in \mathbb{R}^p$ with $\theta := \mu^{(2)} - \mu^{(1)} \neq 0$ such that
\begin{equation}
  \label{Eq:SingleChangepoint}
    \mu_1 = \cdots =\mu_z = \mu^{(1)} \ \text{and} \ \mu_{z+1} = \cdots = \mu_n = \mu^{(2)}.
\end{equation}
In Section~\ref{Sec:Theory}, we will assume that the change in mean is sparse, in the sense that $\|\theta\|_0 \leq k$ for some $k$ that is typically much smaller than $p$.  However, we remark that our methodology is adaptive to this unknown sparsity level.  

Our goal is to estimate the changepoint location $z$.  To this end, we first introduce a new version of the CUSUM transformation that is appropriate in our missing data setting.  Writing $L_{j,t} := \sum_{r=1}^t \omega_{j,r}$, $R_{j,t} := \sum_{r=n-t+1}^n \omega_{j,r}$ and $N_j:=L_{j,n} = R_{j,n}$ for $j \in [p]$ and $t \in [n]$, we define the MissCUSUM transformation $\mathcal{T}_{p,n}^{\mathrm{Miss}} : \mathbb{R}^{p \times n} \times \{0,1\}^{p \times n} \rightarrow \mathbb{R}^{p \times (n-1)}$ by
\[
    [\mathcal{T}_{p,n}^{\mathrm{Miss}}(M,\Omega)]_{j,t} := \sqrt{\frac{L_{j,t}R_{j,n-t}}{N_j}} \biggl( \frac{1}{R_{j,n-t}}\sum_{r=t+1}^n  (M \circ \Omega)_{j,r} - \frac{1}{L_{j,t}}\sum_{r=1}^t  (M \circ \Omega)_{j,r} \biggr)
\]
when $L_{j,t}>0$ and $R_{j,n-t}>0$, and define $[\mathcal{T}_{p,n}^{\mathrm{Miss}}(M,\Omega)]_{j,t} := 0$ otherwise.  Since the subscripts $p$ and $n$ of $\mathcal{T}^{\mathrm{Miss}}_{p,n}$ can be inferred from the dimensions of its arguments, we will frequently abbreviate this transformation as $\mathcal{T}^{\mathrm{Miss}}$.  We note that this transformation only depends on~$M$ through $M \circ \Omega$.  In practice, we will always apply this transformation to pairs of the form $(M \circ \Omega,\Omega)$; in other words, an entry of the first argument is zero whenever the corresponding entry of the second argument is zero.  When the data matrix is fully observed (i.e.\ $\Omega$ is an all-one matrix), the MissCUSUM transformation reduces to the standard CUSUM transformation $\mathcal{T}(M) := \mathcal{T}^{\mathrm{Miss}}(M,\Omega)$.

A key feature of the MissCUSUM transformation is that it captures the interaction between the signal strength and the number of observations in each coordinate.  To illustrate this, we focus on a single  ($j$th) coordinate, and the noiseless setting where $X = \boldsymbol{\mu}$.  In this case, the peak level of the absolute MissCUSUM transformation is $|\theta_j|\sqrt{\frac{L_{j,z}R_{j,n-z}}{N_j}}$.  Since 
\[
\sqrt{\frac{\min(L_{j,z},R_{j,n-z})}2} \leq \sqrt{\frac{L_{j,z}R_{j,n-z}}{N_j}} \leq \sqrt{\min(L_{j,z},R_{j,n-z})},
\]
we see that the peak level in the $j$th coordinate is controlled by the absolute mean change~$|\theta_j|$, together with the effective sample size $\min(L_{j,z},R_{j,n-z})$. 

Another interesting property of the MissCUSUM transformation is that the multivariate setting allows us to borrow strength across the different coordinates to compensate for some of the missingness.  To see this, note that the MissCUSUM transformation is piecewise constant in each coordinate.  In particular, even in the noiseless setting, the absolute MissCUSUM series will typically not have a unique maximiser in each coordinate, but combining the information across coordinates allows us to pin down the changepoint location to an interval of length 
\[
\min\biggl\{t> z:\sum_{j=1}^p \omega_{j,t}\neq 0\biggr\} - \max\biggl\{t \leq z:\sum_{j=1}^p \omega_{j,t}\neq 0\biggr\}.
\]
This will often be a shorter interval than one would obtain from any of the individual component series.

The next step of the \texttt{MissInspect} algorithm is to use the MissCUSUM transformation to find a good projection direction $\hat{v} \in \mathbb{S}^{p-1}$.  The idea is that even though it is not possible to project the data along $\hat{v}$, due to the missingness, we can nevertheless compute the univariate series $\bigl((\hat{v}^\top T_\Omega)_t\bigr)_{t \in [n-1]}$, where $T_\Omega := \mathcal{T}^{\mathrm{Miss}}(X \circ \Omega,\Omega)$.  Writing $A_\Omega := \mathcal{T}^{\mathrm{Miss}}(\boldsymbol{\mu}\circ\Omega, \Omega)$ and $A:=\mathcal{T}(\boldsymbol{\mu})$, if each column of $X$ has identity covariance matrix, then for a generic projection direction $v \in \mathbb{S}^{p-1}$, we find that $\mathbb{E}\bigl\{(v^\top T_\Omega)_t \bigm| \Omega\bigr\} = (v^\top A_\Omega)_t$, and $\mathrm{Var} \bigl\{(v^\top T_\Omega)_t \bigm| \Omega\bigr\} = 1$.   In Proposition~\ref{Prop:Delta} below, we will show that $A_\Omega$ can be well approximated by the rank one matrix $(\diag \sqrt{\boldsymbol{q}})A = (\theta\circ \sqrt{\boldsymbol{q}}) \gamma^\top$, where 
\begin{align*}
\gamma &:= \frac{1}{\sqrt{n}}\biggl(\sqrt\frac{1}{n-1}(n-z),  \ldots,\sqrt\frac{z-1}{n-z+1}(n-z), \\
&\hspace{6cm}\sqrt{z(n-z)}, \sqrt\frac{n-z-1}{z+1}z,\ldots,\sqrt\frac{1}{n-1}z\biggr)^\top \in \mathbb{R}^{n-1}
\end{align*}
attains its peak in absolute value at the true changepoint location $z$.  Substituting this rank one approximation into the expression for $\mathbb{E}\bigl\{(v^\top T_\Omega)_t \bigm| \Omega\bigr\} = (v^\top A_\Omega)_t$ suggests that an oracle projection direction is a unit vector in the direction of $\theta\circ \sqrt{\boldsymbol{q}}$, which is the leading left singular vector of $(\diag \sqrt{\boldsymbol{q}})A$.  

For the corresponding problem with fully observed data, \citet{wang2016highdimensional} proposed a semi-definite relaxation technique to estimating the oracle projection direction.  Unfortunately, since $A_\Omega$ is not a rank one matrix when some data are missing, this relaxation turns out to be too coarse, and a new approach is required.  Motivated by the fact that $\theta\circ \sqrt{\boldsymbol{q}}$ has the same sparsity pattern as $\theta$, and viewing $T_\Omega$ as a perturbation of $(\diag \sqrt{\boldsymbol{q}}) A$, we propose to estimate the oracle projection direction by solving the following optimisation problem:
\begin{equation}
\label{Eq:Optimisation}
    (\hat v, \hat w) \in \argmax_{(\tilde{v},\tilde w) \in
\mathbb{B}^{p}\times \mathbb{B}^{n-1}} \bigl\{ \langle T_\Omega, \tilde{v}\tilde w^\top \rangle -   
\lambda\|\tilde{v}\|_1 \bigr\},
\end{equation}
where $\lambda > 0$ is a tuning parameter to be specified later. Here, with a suitable choice of $\lambda$, the $\ell_1$ penalty on $\tilde v$ in~\eqref{Eq:Optimisation} exploits the sparsity of the oracle projection direction to allow for consistent estimation of $(\theta\circ\sqrt{\boldsymbol{q}}) /
\|\theta\circ\sqrt{\boldsymbol{q}}\|_2$, even when the dimension $p$ is large, as will be shown in Proposition~\ref{Prop:SineAngle} in Section~\ref{Sec:Theory}.  A further advantage of~\eqref{Eq:Optimisation} over the semi-definite relaxation approach is that it directly exploits the row sparsity pattern of the rank one matrix $(\theta\circ\sqrt{\boldsymbol{q}})\gamma^\top$, as opposed to just the overall entrywise sparsity of this matrix.  Using the estimated oracle projection direction $\hat v$, we can project the MissCUSUM transformation $T_\Omega$ of $(X \circ \Omega,\Omega)$, and estimate the changepoint by the location of the maximum absolute value in the univariate projected series.  Pseudocode for the \texttt{MissInspect} algorithm is given in Algorithm~\ref{Algo:MissInspect}.

\begin{algorithm}[htbp]
\SetAlgoLined
\KwIn{$X_\Omega = X\circ\Omega \in \mathbb{R}^{p\times n}$, $\Omega \in
\{0,1\}^{p\times n}$, $\lambda > 0$}
  $T_\Omega \leftarrow \mathcal{T}^{\mathrm{Miss}}(X_\Omega,\Omega)$\; \label{Algo:MissInspectStep1}
 Find $(\hat{v},\hat{w}) \in \argmax_{\tilde{v} \in \mathbb{B}^{p-1},
\tilde w \in \mathbb{B}^{n-2}} \bigl\{ \langle T_\Omega, \tilde{v}\tilde
w^\top \rangle - \lambda\|\tilde{v}\|_1 \bigr\}$\;
\label{Algo:MissInspectStep2}
    $\hat{z} \leftarrow \mathrm{median} \bigl(\argmax_{t \in [n-1]} \bigl|(\hat{v}^\top T_\Omega)_t\bigr|\bigr)$\;
\KwOut{$\hat{z}$}
 \caption{\label{Algo:MissInspect}Pseudocode of the \texttt{MissInspect}
algorithm}
\end{algorithm}

The optimisation problem in Step~\ref{Algo:MissInspectStep2} of Algorithm~\ref{Algo:MissInspect} is bi-convex in $(\tilde v,\tilde w)$; i.e., the objective is concave in $\tilde v$ for every fixed $\tilde w$ and concave in $\tilde w$ for every fixed $\tilde v$. Hence, we can alternate between optimising over $\tilde v$ and $\tilde w$ in~\eqref{Eq:Optimisation}. By inspecting the Karush--Kuhn--Tucker
conditions as in Lemma~\ref{Lem:1}, we see that when $\lambda < \|T_\Omega\|_{2 \rightarrow \infty}$, both steps of each iteration have closed form expressions, which lead us to the iterative procedure to optimise~\eqref{Eq:Optimisation} given in Algorithm~\ref{Algo:PowerThresh}.  In that algorithm, we define the soft-thresholding function $\mathrm{soft}: \mathbb{R}^{p}\times [0,\infty) \rightarrow \mathbb{R}^p$ such that for $v = (v_1,\ldots,v_p)^\top \in\mathbb{R}^p$, we have $\bigl(\mathrm{soft}(v, \lambda)\bigr)_j = \mathrm{sgn}(v_j)\max\{|v_j| - \lambda, 0\}$ for $j\in [p]$.  We remark that $T_\Omega$ is known to the practitioner, so we can always choose $\lambda < \|T_\Omega\|_{2 \rightarrow \infty}$.  As usual for such iterative algorithms for bi-convex optimisation, the objective increases at each iteration; empirically, we have not observed any convergence issues.      

\begin{algorithm}[htbp]
\SetAlgoLined
\KwIn{$T_\Omega \in \mathbb{R}^{p \times (n-1)}$, $\lambda \in \bigl(0,\|T_\Omega\|_{2 \rightarrow \infty}\bigr)$}
  $\tilde{v} \leftarrow$ leading left singular vector of $T_\Omega$\;
 \Repeat{convergence}{
   $\tilde w \leftarrow
\frac{T_\Omega^\top\tilde{v}}{\|T_\Omega^\top\tilde{v}\|_2}$\;
   $\tilde v \leftarrow \frac{\mathrm{soft}(T_\Omega\tilde w,
\lambda)}{\|\mathrm{soft}(T_\Omega\tilde w, \lambda)\|_2}$\;}
\KwOut{$(\hat{v},\hat{w}) = (\tilde{v},\tilde{w})$}
 \caption{\label{Algo:PowerThresh}Pseudocode for an iterative procedure
optimising~\eqref{Eq:Optimisation}}
\end{algorithm}  

We conclude this section by illustrating the \texttt{MissInspect} algorithm in action in Figure~\ref{Fig:Illustration}.  Here, with $n=250$ and $p=100$, we generated $n$ independent $p$-variate Gaussian observations with mean structure~\eqref{Eq:SingleChangepoint} and identity covariance matrix.  We took $z=100$, and $\theta = (\vartheta \mathbf{1}_k/k^{1/2},\mathbf{0}_{p-k})^\top$, with $k=10$ and $\vartheta = 2$.  Thus, the first 10 coordinates represent signals, while the remaining 90 are noise coordinates.  All entries of our data matrix were observed independently (and independently of the data), with probability $0.2$.  The top panels display visualisations of the data and the MissCUSUM transformation respectively.  In the bottom-left panel, the coloured lines are the first five components of the MissCUSUM transformation; we see that these traces are piecewise constant, with jumps at observed data points.  Even though each of these five is obtained from a signal coordinate, the locations of the peaks of these individual series would not yield very reliable changepoint estimates, both because the noise introduces considerable variability (e.g.~the peak of the purple series runs from time 217 to 228), and because the missingness can lead to fairly long stretches where these series are constant.  Nevertheless, once all 100 series are aggregated appropriately by our \texttt{MissInspect} algorithm, the resulting black trace does have a sharper peak close to the true changepoint.  The bottom-right plot shows two nonparametric density estimates of the estimated changepoint locations from the \texttt{MissInspect} procedure over 1000 repetitions from this data generating mechanism; the first is a histogram, which requires the choice of a binwidth, while the second is the log-concave maximum likelihood estimator \citep{dumbgen2009maximum,cule2010maximum}, which is fully automatic.  Both indicate a sharp peak for the density close to the true changepoint; in the latter case, the mode is exactly at 100.   
\begin{figure}
    \centering
    \includegraphics[width=0.9\textwidth]{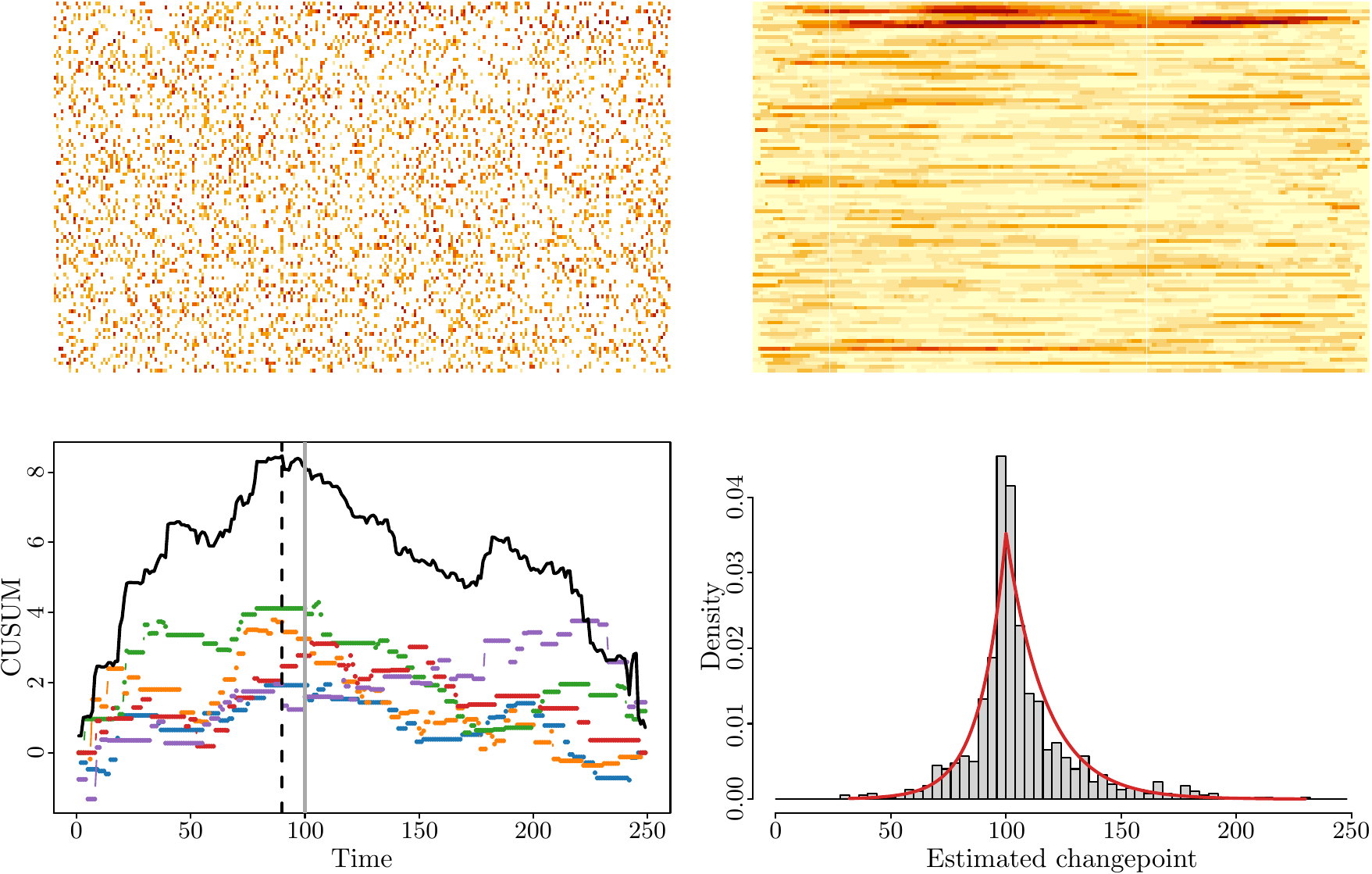}
    \caption{\small\texttt{MissInspect} algorithm in action. Top-left: visualisation of the data matrix with $p=100$ and $n=250$, where each column represents a $p$-dimensional observation and missing entries are shown in white.  Darker colours indicate larger values.  Time runs from left to right, and a change in mean occurs at time $100$ in each of the first ten rows. Top-right: visualisation of the MissCUSUM transformation of the data.  Bottom-left: the first five rows of the MissCUSUM matrix are plotted in colour, and the black curve shows the projected MissCUSUM series, which is maximised at the estimated changepoint location of $90$ (black dashed line). The true changepoint is shown as a grey solid line. Bottom-right: histogram of estimated changepoints over 1000 repetitions from the same data setting; a log-concave estimated density is shown in red.}
    \label{Fig:Illustration}
\end{figure}

\section{Theoretical guarantees}
\label{Sec:Theory}
We will focus our theoretical analysis on the single changepoint setting, in order to try to articulate more clearly the way that the coordinate-wise signal-to-noise ratio and missingness mechanism interact to determine both the performance of the \texttt{MissInspect} algorithm and the fundamental difficulty of the problem.  Moreover, we assume that the revelation matrix $\Omega = (\omega_{j,t}) \in \{0,1\}^{n \times p}$ has a \emph{row-homogeneous} distribution, in the sense that there exists a vector $\boldsymbol{q} = (q_1,\ldots,q_p)^\top \in (0,1]^p$ such that $\omega_{j,t}\sim \mathrm{Bern}(q_j)$, independently for all $j \in [p]$ and $t \in [n]$. We will refer to $\boldsymbol{q}$ as the \emph{observation rate vector}. Such a row-homogeneous assumption may be appropriate, for instance, in applications where each component series is measured by a separate device with its own observation rate.  As for the data, we will assume that the columns $(X_t)_{t \in [n]}$ of the data matrix $X = (X_{j,t})_{j \in [p], t \in [n]} \in \mathbb{R}^{n \times p}$ satisfy
\begin{equation}
\label{Eq:Xt}
X_t \sim \mathcal{N}_p(\mu_t,\sigma^2 I_p), \ \text{independently for $t \in [n]$},
\end{equation}
where $\mu_1,\ldots,\mu_n$ satisfy~\eqref{Eq:SingleChangepoint}.   

For $n \in \mathbb{N}$, $z\in[n-1]$, $\theta = (\theta_1,\ldots,\theta_p)^\top \in\mathbb{R}^p$, $\sigma>0$ and $\boldsymbol{q} = (q_1,\ldots,q_p)^\top \in(0,1]^p$ we write $P_{n,p,z,\theta,\sigma,\boldsymbol{q}}$ for the joint distribution of $(X, \Omega)$, where $X$ and $\Omega$ are independent, where $X$ satisfies~\eqref{Eq:Xt} with the vector of mean change $\theta := \mu^{(2)} - \mu^{(1)} \in \mathbb{R}^p$ satisfying $\|\theta\|_0 \leq k$, and where $\Omega$ has a row-homogeneous distribution with observation rate vector $\boldsymbol{q} = (q_1,\ldots,q_p)^\top \in (0,1]^p$.  We write $\tau := n^{-1}\min(z,n-z)$.  Recall our notation  $\|\theta\|_{2,\boldsymbol{q}}^2 := \sum_{j=1}^p \theta_j^2q_j$, a quantity that captures a key interaction between the signal strength and observation rate.   Our first result below shows that the projection direction $\hat v$ obtained from Step~\ref{Algo:MissInspectStep2} of Algorithm~\ref{Algo:MissInspect} is closely aligned with~$\theta\circ \sqrt{\boldsymbol{q}}$, which, as argued in Section~\ref{Sec:Methodology}, can be regarded as an oracle projection direction. 

\begin{proposition}
\label{Prop:SineAngle}
Let $(X,\Omega)\sim P_{n,p,z,\theta,\sigma,\boldsymbol{q}}$ and let $(\hat{v}, \hat{w})$ be obtained from Step~\ref{Algo:MissInspectStep2} in Algorithm~\ref{Algo:MissInspect}, applied with inputs $X_\Omega = X \circ \Omega$, $\Omega$ and $\lambda \geq 2 \sigma \sqrt{n\log(pn)}$.  Then 
\[
 \mathbb{P}\biggl\{ \sin \angle (\hat{v}, \theta \circ \sqrt{\boldsymbol{q}} ) \leq \frac{32\lambda \sqrt{k}}{n\tau\|\theta\|_{2,\boldsymbol{q}} } + \frac{112\|\theta\|_{2}}{\tau \|\theta\|_{2,\boldsymbol{q}}}\sqrt\frac{6\log(kn)}{n} \biggr\} \geq 1- \frac{6}{kn}.
\]
\end{proposition}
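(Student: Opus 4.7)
My approach is a perturbation analysis around the rank-one target $M^{\star} := (\theta \circ \sqrt{\boldsymbol{q}})\gamma^{\top}$, whose leading left singular vector is $v^{\star} := (\theta \circ \sqrt{\boldsymbol{q}})/\|\theta\|_{2,\boldsymbol{q}}$, whose leading right singular vector is $w^{\star} := \gamma/\|\gamma\|_{2}$, and whose leading singular value is $\sigma_{1}(M^{\star}) = \|\theta\|_{2,\boldsymbol{q}}\|\gamma\|_{2}$; a direct calculation on the entries of $\gamma$ gives $\|\gamma\|_{2} \gtrsim n\tau$, so $\sigma_{1}(M^{\star}) \gtrsim n\tau\|\theta\|_{2,\boldsymbol{q}}$, which furnishes the denominator in the stated bound. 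The plan is to decompose
\[
T_{\Omega} \;=\; M^{\star} \;+\; D \;+\; F,
\]
where $D := A_{\Omega} - M^{\star}$ is the (conditional-on-$\Omega$) rank-one approximation error and $F := T_{\Omega} - A_{\Omega}$ is the stochastic noise, and to work on a good event consisting of (i) a Gaussian tail bound on $F$ (each $F_{j,t}$ has conditional variance exactly $\sigma^{2}$, by design of the MissCUSUM normalisation), calibrated so that $\lambda \geq 2\sigma\sqrt{n\log(pn)}$ dominates $\|F\|_{2\to\infty}$ via a $\chi^{2}$ tail bound and a union bound over rows, and (ii) the bound on $D$ supplied by Proposition~\ref{Prop:Delta}, which controls $\|D^{\top}v\|_{2}$ by $\|\theta\|_{2}\sqrt{n\log(kn)}$ uniformly for $k$-sparse unit vectors $v$. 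A union bound makes the good event hold with probability at least $1 - 6/(kn)$.

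On the good event, optimality of $(\hat v, \hat w)$ and feasibility of $(v^{\star}, w^{\star})$ yield (after choosing signs so that $\langle \hat v, v^{\star}\rangle, \langle \hat w, w^{\star}\rangle \geq 0$) the basic inequality
\[
\langle M^{\star},\, v^{\star}w^{\star\top} - \hat v\hat w^{\top}\rangle \;\leq\; \langle D + F,\, \hat v\hat w^{\top} - v^{\star}w^{\star\top}\rangle + \lambda\bigl(\|v^{\star}\|_{1} - \|\hat v\|_{1}\bigr).
\]
The left-hand side equals $\sigma_{1}(M^{\star})\bigl(1 - \langle\hat v, v^{\star}\rangle\langle\hat w, w^{\star}\rangle\bigr)$, and the AM--GM-type inequality $1 - xy \geq \tfrac{1}{2}(1-x^{2}) + \tfrac{1}{2}(1-y^{2})$ for $x, y \in [0,1]$ lower-bounds this by $\tfrac{1}{2}\sigma_{1}(M^{\star})\sin^{2}\angle(\hat v, v^{\star})$, which will be the source of the quadratic curvature needed for the fast rate. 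For the $\ell_{1}$-difference, writing $S := \mathrm{supp}(\theta)$ with $|S| \leq k$, the sparsity-cone manipulation gives $\|v^{\star}\|_{1} - \|\hat v\|_{1} \leq \|\hat v_{S} - v^{\star}\|_{1} - \|\hat v_{S^{c}}\|_{1} \leq \sqrt{k}\|\hat v - v^{\star}\|_{2} \leq \sqrt{2k}\sin\angle(\hat v, v^{\star})$.

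For the perturbation term, I would expand $\hat v\hat w^{\top} - v^{\star}w^{\star\top} = (\hat v - v^{\star})\hat w^{\top} + v^{\star}(\hat w - w^{\star})^{\top}$, split $\hat v - v^{\star}$ along $S$ and $S^{c}$, and use $\lambda \geq 2\|F\|_{2\to\infty}$ to absorb the off-support contribution $\|F\hat w\|_{\infty}\|\hat v_{S^{c}}\|_{1}$ into the $\lambda\|\hat v_{S^{c}}\|_{1}$ from the penalty side, leaving a net stochastic contribution of $O(\lambda\sqrt{k})\sin\angle(\hat v, v^{\star})$; the cross term $\langle F^{\top}v^{\star}, \hat w - w^{\star}\rangle$ is bounded via $\|F^{\top}v^{\star}\|_{2} \lesssim \sigma\sqrt{n}$ and $\|\hat w - w^{\star}\|_{2} \lesssim \sin\angle(\hat w, w^{\star})$, which the $\sin^{2}\angle(\hat w, w^{\star})$ term in the curvature bound absorbs. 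Proposition~\ref{Prop:Delta} handles the $D$-contributions analogously, producing $O\bigl(\|\theta\|_{2}\sqrt{n\log(kn)}\bigr)\sin\angle(\hat v, v^{\star})$. Substituting into the basic inequality, dividing through by $\tfrac{1}{2}\sigma_{1}(M^{\star})\sin\angle(\hat v, v^{\star})$, and using $\sigma_{1}(M^{\star}) \gtrsim n\tau\|\theta\|_{2,\boldsymbol{q}}$ yields the claimed bound. The main obstacle is obtaining the quadratic $\sin^{2}$ lower bound rather than merely $1 - |\langle\hat v, v^{\star}\rangle|$: a naive estimate using $|\langle\hat w, w^{\star}\rangle| \leq 1$ would cost a factor of $\sin\angle(\hat v, v^{\star})$ and produce only a slow rate, so the symmetric AM--GM treatment of $\hat v$ and $\hat w$ above is essential. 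This works precisely because, by the KKT analysis of Lemma~\ref{Lem:1}, $\hat w$ is the normalised version of $T_{\Omega}^{\top}\hat v$, so its alignment with $w^{\star}$ can in fact be controlled by the same perturbation terms—closing the loop on the quadratic argument.
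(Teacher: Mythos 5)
Your argument is correct in outline, but it takes a genuinely different route from the paper's. The paper introduces an intermediate vector $v$, the leading left singular vector of $A_\Omega$ itself, and splits the angle bound in two: first, Lemma~\ref{Lem:2} --- a purely deterministic statement whose only stochastic input is the entrywise bound $\|T_\Omega-A_\Omega\|_\infty\le\lambda n^{-1/2}$, which is exactly what the calibration $\lambda\ge 2\sigma\sqrt{n\log(pn)}$ delivers via a union bound over the $p(n-1)$ conditionally $N(0,\sigma^2)$ entries --- gives $\sin\angle(\hat v,v)\le 4\lambda\sqrt k/(\sigma_1-\sigma_2)$, where the singular value gap of $A_\Omega$ is lower-bounded by Weyl's inequality combined with Proposition~\ref{Prop:Delta}; second, a Davis--Kahan $\sin\theta$ bound controls $\sin\angle(v,\theta\circ\sqrt{\boldsymbol{q}})$ by $\|\Delta\|_{\mathrm{op}}$ over the same gap, and the two pieces are combined by the triangle inequality for angles. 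You instead run the basic inequality directly against the exactly rank-one target $M^\star=(\theta\circ\sqrt{\boldsymbol{q}})\gamma^\top$, use its full top singular value as curvature, and fold both $D=A_\Omega-M^\star$ and the noise $F$ into the perturbation. This avoids Weyl, Davis--Kahan and the intermediate singular vector, at the price of having to control the $\hat w$-direction error inside the same inequality; your AM--GM curvature step and the observation that the cross term ($\lesssim\sigma\sqrt{n\log n}\,\sin\angle(\hat w,w^\star)$, which after dividing by the curvature is dominated by the $\lambda\sqrt k$ term) do handle this correctly. The paper's route buys modularity (the deterministic Lemma~\ref{Lem:2} is reusable and the missingness enters only through the gap and the Davis--Kahan step); yours is more direct and makes the role of the rank-one approximation transparent.

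Three details would need care in a full write-up. First, only the pair $(\hat v,\hat w)$ can be sign-flipped jointly, so you cannot assume $\langle\hat v,v^\star\rangle\ge 0$ and $\langle\hat w,w^\star\rangle\ge 0$ simultaneously; the negative-product case has to be excluded separately, e.g.\ by noting it contradicts the basic inequality in the non-trivial regime where the claimed bound is below one (the same reduction the paper makes when it assumes $\sqrt n\,\tau\|\theta\|_2\ge 112\|\theta\|_{2,\boldsymbol{q}}\sqrt{6\log(kn)}$). Second, your noise absorption is phrased through $\|F\|_{2\to\infty}$: with $\lambda=2\sigma\sqrt{n\log(pn)}$ the crude bound $\|F\|_{2\to\infty}\le\sqrt n\,\|F\|_\infty$ only gives margin $\lambda/\sqrt2$ rather than $\lambda/2$, so either the absorption constant or the row-norm argument must be adjusted; the paper sidesteps this by using $\|F\|_\infty\le\lambda n^{-1/2}$ together with $\|\hat w\|_1\le\sqrt n$ inside the entrywise H\"older step. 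Third, the explicit constants $32$ and $112$ in the statement arise from the paper's specific two-step decomposition; your route yields a bound of the same form but with different absolute constants, so proving the proposition exactly as stated would require tracking them.
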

Considering the case $\lambda = 2 \sigma \sqrt{n\log(pn)}$ for simplicity, Proposition~\ref{Prop:SineAngle} reveals that, with high probability, the sine of the acute angle between $\hat{v}$ and $\theta\circ \sqrt{\boldsymbol{q}}$ is controlled by the sum of two terms: the first of these represents the estimation error caused by the noise in the data we observe, and we see that $\|\theta\|_{2,\boldsymbol{q}}/\sigma$ can be thought of as an effective signal-to-noise ratio.  On the other hand, the second term reflects the error due to our incomplete observations (and would be present even in the noiseless case with $\sigma = 0$); here $\|\theta\|_{2,\boldsymbol{q}}^2/\|\theta\|_2^2$ may be regarded as a \emph{signal-weighted observation probability}.

From a theoretical point of view, the fact that $\hat{v}$ is estimated using the entire available data set $X_\Omega$ makes it difficult to analyse the post-projection noise structure. For this reason, in the analysis below, we work with a sample-splitting variant of Algorithm~\ref{Algo:MissInspect}, as given in Algorithm~\ref{Algo:MissInspectVariant}.  Here, the projection direction $\hat{v}$ is estimated using only the observed data at odd-numbered time points, and the MissCUSUM transformation of the observed data at even-numbered time points is then projected along $\hat{v}$ to obtain the final estimate of the changepoint location.       

\begin{algorithm}[htbp]
\SetAlgoLined
\KwIn{$X_\Omega = X\circ\Omega \in \mathbb{R}^{p \times n}$, $\Omega \in \{0,1\}^{p \times n}$, $\lambda > 0$} 
$n_1 \leftarrow \lfloor n/2\rfloor$\;
Let $\Omega^{(1)} \in \{0,1\}^{p \times n_1}$ and $\Omega^{(2)} \in \{0,1\}^{p \times n_1}$ denote the matrices formed from the first $n_1$ odd and the $n_1$ even numbered columns of $\Omega$ respectively\;
Let $X_{\Omega}^{(1)} \in \mathbb{R}^{p \times n_1}$ and $X_{\Omega}^{(2)} \in \mathbb{R}^{p \times n_1}$ denote the matrices formed from the first $n_1$ odd and the $n_1$ even numbered columns of $X_\Omega$ respectively\;
  $T_{\Omega}^{(1)} \leftarrow \mathcal{T}^{\mathrm{Miss}}(X_{\Omega}^{(1)},\Omega^{(1)}) \in \mathbb{R}^{p \times (n_1-1)}$\; \label{Step:5}
    $T_{\Omega}^{(2)} \leftarrow \mathcal{T}^{\mathrm{Miss}}(X_{\Omega}^{(2)},\Omega^{(2)}) \in \mathbb{R}^{p \times (n_1-1)}$\;
    $(\hat{v},\hat{w}) \leftarrow \argmax_{\tilde{v} \in \mathbb{B}^{p-1},\tilde{w} \in \mathbb{B}^{n-2}} \bigl\{\langle T_{\Omega}^{(1)},\tilde{v}\tilde{w}^\top \rangle - \lambda\|\tilde{v}\|_1\bigr\}$\; \label{Step:7}
    $\hat{z} \leftarrow 2\, \mathrm{median}\bigl( \argmax_{t \in [n_1-1]} \bigl|(\hat{v}^\top T_{\Omega}^{(2)})_t\bigr|\bigr)$\;
\KwOut{$\hat{z}$}
 \caption{\label{Algo:MissInspectVariant}Pseudo-code for the sample-splitting variant of the \texttt{MissInspect} algorithm.}
\end{algorithm}

Theorem~\ref{Thm:SlowRate} is our first main result on the performance of Algorithm~\ref{Algo:MissInspectVariant} in contexts where our data are generated from a single changepoint, row-homogeneous model $P_{n,p,z,\theta,\sigma,\boldsymbol{q}}$. 
\begin{theorem}
\label{Thm:SlowRate}
Suppose $(X,\Omega)\sim P_{n,p,z,\theta,\sigma,\boldsymbol{q}}$.  Assume for simplicity that $n$ and $z$ are even.  Let $\hat{z}$ be the output of Algorithm~\ref{Algo:MissInspectVariant} with inputs $X\circ \Omega, \Omega$ and $\lambda = 2\sigma\sqrt{n\log(pn)}$.  There exist universal constants $C,C' > 0$ such that whenever
  \begin{equation}
 \label{Eq:C'}
  \frac{C'}{\tau}\sqrt{\frac{\log(pn)}{n}}\biggl(\frac{\sigma \sqrt{k}}{\|\theta\|_{2,\boldsymbol{q}}} + \frac{\|\theta\|_2}{\|\theta\|_{2,\boldsymbol{q}}}\biggr) \leq \frac{1}{2},
\end{equation}
we have
\[
    \mathbb{P}\biggl\{\frac{|\hat{z}-z|}{n\tau} > C\sqrt{\frac{\log(kn)}{n\tau}}\biggl(\frac{\sigma}{\|\theta\|_{2,\boldsymbol{q}}} + \frac{\|\theta\|_2}{\|\theta\|_{2,\boldsymbol{q}}}\biggr)\biggr\} \leq \frac{22}{n}.
  \]
\end{theorem}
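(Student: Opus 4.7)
The key structural feature to exploit is the sample splitting: since $\hat v$ depends only on $(X_\Omega^{(1)}, \Omega^{(1)})$, it is independent of $(X_\Omega^{(2)}, \Omega^{(2)})$. My plan is (i) to invoke Proposition~\ref{Prop:SineAngle} on the first half to control the alignment of $\hat v$ with $\theta\circ\sqrt{\boldsymbol q}$; (ii) conditionally on $\hat v$, to decompose the projected MissCUSUM series on the second half into signal, missingness approximation error and sub-Gaussian noise; and (iii) to run the standard argmax argument for the resulting univariate CUSUM-type series.

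For step (i), I would apply Proposition~\ref{Prop:SineAngle} to the odd-indexed sub-sample (length $n_1=n/2$, changepoint at $z/2$, same observation-rate vector $\boldsymbol q$, and $\tau$ preserved up to a factor of $2$). This yields, with probability at least $1-6/(kn)$, a bound of the form $\sin\angle(\hat v,\theta\circ\sqrt{\boldsymbol q})\le\epsilon$, where $\epsilon$ is a constant multiple of the left-hand side of~\eqref{Eq:C'}. Under~\eqref{Eq:C'} this forces $\epsilon\le 1/2$, and so, writing $u:=(\theta\circ\sqrt{\boldsymbol q})/\|\theta\|_{2,\boldsymbol q}$ for the oracle unit direction, one has $|\hat v^\top u|\ge\sqrt{3}/2$.

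For steps (ii)--(iii), with $A_\Omega^{(2)}:=\mathcal T^{\mathrm{Miss}}(\boldsymbol\mu\circ\Omega^{(2)},\Omega^{(2)})$, $\Delta:=A_\Omega^{(2)}-(\theta\circ\sqrt{\boldsymbol q})\gamma^\top$, and $\gamma$ the deterministic triangular CUSUM shape centred on $z/2$ with $\gamma_{z/2}\asymp\sqrt{n\tau}$, I would decompose
\[
\hat v^\top T_\Omega^{(2)}(t) = (\hat v^\top u)\,\|\theta\|_{2,\boldsymbol q}\,\gamma_t + \hat v^\top\Delta_t + \hat v^\top\bigl(T_\Omega^{(2)}-A_\Omega^{(2)}\bigr)(t).
\]
Conditionally on $(\hat v,\Omega^{(2)})$, the noise term is $\mathcal N(0,\sigma^2)$ entrywise by the MissCUSUM normalisation, so a union bound yields $\max_t|\hat v^\top(T_\Omega^{(2)}-A_\Omega^{(2)})(t)|\lesssim\sigma\sqrt{\log n}$ with probability $1-O(1/n)$, producing the $\sigma$ contribution to the rate. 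The approximation term $\hat v^\top\Delta_t$ is controlled via (the proof of) Proposition~\ref{Prop:Delta}, and produces the $\|\theta\|_2$ contribution. Since $\hat z/2$ is (the median of) the argmax of $|\hat v^\top T_\Omega^{(2)}|$, we have $|\hat v^\top T_\Omega^{(2)}(\hat z/2)|\ge|\hat v^\top T_\Omega^{(2)}(z/2)|$; combining this with the decomposition above and the local curvature estimate $\gamma_{z/2}-\gamma_t\gtrsim|t-z/2|/\sqrt{n\tau}$ gives
\[
\|\theta\|_{2,\boldsymbol q}\,\frac{|\hat z - z|}{\sqrt{n\tau}} \;\lesssim\; \sigma\sqrt{\log n} + \|\theta\|_2\sqrt{\log n},
\]
from which the claimed rate follows after dividing through by $\|\theta\|_{2,\boldsymbol q}\sqrt{n\tau}$ and inflating the log to $\log(kn)$.

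The main technical obstacle will be to obtain a sharp uniform-in-$t$ bound on $\max_t|\hat v^\top\Delta_t|$ with the correct $\|\theta\|_2$ (rather than cruder $\sqrt k\,\|\theta\|_\infty$) scaling. This exploits the independence of $\hat v$ from $\Omega^{(2)}$ and Bernoulli concentration for each row of $\Omega^{(2)}$, aggregated across rows via a Cauchy--Schwarz-type argument; it is essentially the message of Proposition~\ref{Prop:Delta}. A secondary issue is ruling out the regime in which $\hat z/2$ is far from $z/2$, where the linear curvature estimate for $\gamma$ fails; here one replaces the linear bound by a global inverse-quadratic lower bound on $\gamma_{z/2}-\gamma_t$ and uses condition~\eqref{Eq:C'} to confine the argmax to a neighbourhood of $z/2$ on which the local analysis is valid.
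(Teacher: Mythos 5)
Your proposal follows essentially the same route as the paper's proof: Proposition~\ref{Prop:SineAngle} applied to the odd-indexed half to get $\sin\angle(\hat v,\theta\circ\sqrt{\boldsymbol q})\leq 1/2$ under~\eqref{Eq:C'}, the decomposition of $\hat v^\top T_\Omega^{(2)}$ into the rank-one signal $(\hat v^\top(\theta\circ\sqrt{\boldsymbol q}))\gamma_t$, the missingness term controlled entrywise via Proposition~\ref{Prop:Delta} with Cauchy--Schwarz to get the $\|\theta\|_2\sqrt{\log(kn)}$ scaling, and a stochastically dominated $N(0,\sigma^2)$ noise term, followed by the argmax basic inequality, the linear-plus-saturating lower bound on $\gamma_{z/2}-\gamma_t$, and condition~\eqref{Eq:C'} to rule out the far-from-$z$ regime. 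The only details to tidy up in a full write-up are the sign argument converting the absolute-value argmax inequality into the signed basic inequality (the paper shows $(\hat v^\top T_\Omega^{(2)})_{z/2}$ exceeds the maximum of the negative part), and noting that the far regime is excluded because the saturated (constant $\times\sqrt{n\tau}$, not inverse-quadratic) floor of the curvature bound contradicts~\eqref{Eq:C'}.
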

Condition~\eqref{Eq:C'} ensures that the projection direction $\hat{v}$ obtained in Step~\ref{Step:7} of Algorithm~\ref{Algo:MissInspectVariant} has non-trivial correlation with the oracle projection direction $\theta \circ \sqrt{\boldsymbol{q}}$; cf.~Proposition~\ref{Prop:SineAngle}.  An attractive feature of Theorem~\ref{Thm:SlowRate} is the way that the interaction between the signal strength and the observation rate is captured through $\|\theta\|_{2,\boldsymbol{q}}$.  As mentioned in the introduction, this weighted average provides much greater understanding of the influence of missingness on the performance of the \texttt{MissInspect} algorithm than more naive bounds that depend on the worst-case missingness probability across all rows.  For instance, we see that a high degree of missingness in noise or weak signal coordinates may not have too much of a detrimental effect on performance compared with complete observation of these coordinates.  See also Theorem~\ref{Thm:LowerBound} below for confirmation of the way in which $\|\theta\|_{2,\boldsymbol{q}}$ also controls the fundamental difficulty of the problem (not just for our procedure).  

A further attraction of Theorem~\ref{Thm:SlowRate} is the absence of any condition on the number of observations in each row.  On the other hand, it turns out that if the expected number of observations in each row is at least $k/\tau^2$ (up to logarithmic factors), then we can obtain a substantially improved bound on the rate of estimation of Algorithm~\ref{Algo:MissInspectVariant}.

\begin{theorem}
\label{Thm:FastRate}
Suppose $(X,\Omega)\sim P_{n,p,z,\theta,\sigma,\boldsymbol{q}}$.  Assume for simplicity that $n$ and $z$ are even.  Let $\hat{z}$ be the output of Algorithm~\ref{Algo:MissInspectVariant} with inputs $X\circ \Omega,\Omega$ and $\lambda = 2\sigma\sqrt{n\log(pn)}$. Define
\[
\rho:=  \frac{1}{\tau}\sqrt{\frac{\log(pn)}{n}}\biggl(\frac{\sigma \sqrt{k}}{\|\theta\|_{2,\boldsymbol{q}}} + \frac{\|\theta\|_2}{\|\theta\|_{2,\boldsymbol{q}}}\biggr).
\]
Then there exist universal constants $c, C_1,C_2 > 0$ such that if $\rho \leq c$ and $n\tau^2 \min_{j\in[p]} q_j \geq C_1k\log(pn)$, then
\[
\mathbb{P}\biggl\{\frac{|\hat z - z|}{n\tau} >  \frac{C_2\log(pn)}{n\tau}\biggl(\frac{\sigma^2}{\|\theta\|_{2,\boldsymbol{q}}^2}+ \frac{\|\theta\|_{\infty}^2}{\|\theta\|_{2,\boldsymbol{q}}^2}\biggr)\biggr\} \leq \frac{23}{n}. 
\]
\end{theorem}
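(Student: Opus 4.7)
The plan is to upgrade the slow rate in Theorem~\ref{Thm:SlowRate} to the advertised fast rate via a localisation argument. Since $\rho \leq c$ for small enough $c$ implies condition~\eqref{Eq:C'}, Theorem~\ref{Thm:SlowRate} itself provides an event $\mathcal{E}_0$ with $\mathbb{P}(\mathcal{E}_0) \geq 1 - 22/n$ on which $|\hat z - z| \leq \eta_0 n\tau$ for some small constant $\eta_0$. Moreover, the additional assumption $n\tau^2 \min_j q_j \geq C_1 k \log(pn)$ allows a Bernstein-type binomial concentration and a union bound to show that, on a further high-probability event, the normalising counts $L_{j,t}$ and $R_{j,n_1-t}$ remain within a constant factor of their means $q_j t$ and $q_j(n_1-t)$ for all $j$ in the support of $\theta$ and all $t$ in the localisation window around $z/2$.

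By the sample-splitting construction, $\hat v$ (formed from the odd-indexed columns) is independent of $T_\Omega^{(2)}$. Working conditionally on $\hat v$ and using Proposition~\ref{Prop:SineAngle} together with the rank-one approximation of Proposition~\ref{Prop:Delta}, I would decompose the univariate series $Y_t := (\hat v^\top T_\Omega^{(2)})_t$ as $Y_t = \alpha\, \gamma^{(2)}_t + \xi_t^{\mathrm{gauss}} + \xi_t^{\mathrm{miss}}$, where $\alpha := \hat v^\top(\theta\circ\sqrt{\boldsymbol{q}})$ is bounded below by a constant multiple of $\|\theta\|_{2,\boldsymbol{q}}$ on the event from Proposition~\ref{Prop:SineAngle}, and where $\gamma^{(2)}$ (the analogue of $\gamma$ for the even subsample) is piecewise linear with a unique peak at $z/2$. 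The piecewise-linear behaviour of $\gamma^{(2)}$ near its peak then yields the curvature bound
\[
\alpha\bigl(\gamma^{(2)}_{z/2} - \gamma^{(2)}_t\bigr) \;\gtrsim\; \frac{|t - z/2|}{\sqrt{n\tau}}\,\|\theta\|_{2,\boldsymbol{q}}^2
\]
for all $t$ in the localisation window.

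For the noise, $\xi_t^{\mathrm{gauss}}$ is a Gaussian process given $\Omega^{(2)}$, whose increment $\xi_t^{\mathrm{gauss}} - \xi_{z/2}^{\mathrm{gauss}}$ has conditional variance of order $\sigma^2 |t - z/2|$ on the event that the counts are well behaved, while the missingness-induced piece $\xi_t^{\mathrm{miss}}$ is a sum of centred, bounded random variables whose per-step contribution is $O(\|\theta\|_\infty)$ after projection through $\hat v$ (using an $\ell_1$-type control on $\hat v$ coming from its approximate sparsity). A Bernstein-type maximal inequality, together with a union bound over $t$ in the localisation window, then gives
\[
\max_t \frac{|\xi_t - \xi_{z/2}|}{\sqrt{|t - z/2|\log(pn)}} \;\lesssim\; \sigma + \|\theta\|_\infty
\]
with probability at least $1 - O(1/n)$. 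Since $\hat z/2$ is the location of the maximum of $|Y_t|$, one has $|Y_{\hat z/2}| \geq |Y_{z/2}|$, which after a sign analysis forces $\alpha\bigl(\gamma^{(2)}_{z/2} - \gamma^{(2)}_{\hat z/2}\bigr) \leq |\xi_{\hat z/2} - \xi_{z/2}|$. Plugging in the two bounds above and solving for $|\hat z - z|$ delivers the stated rate.

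The main obstacle I anticipate is the clean separation of the $\sigma^2$ and $\|\theta\|_\infty^2$ contributions to the noise without incurring a worst-case factor $1/\min_j q_j$. This is precisely where the assumption $n\tau^2 \min_j q_j \geq C_1 k\log(pn)$ enters: it ensures that the binomial counts concentrate sharply on all relevant coordinates simultaneously, so that the $\boldsymbol{q}$-weighted inner-product structure central to Proposition~\ref{Prop:SineAngle} propagates through to the denominator $\|\theta\|_{2,\boldsymbol{q}}^2$ in the final bound, rather than being diluted to the naive worst-case rate $1/\min_j q_j$.
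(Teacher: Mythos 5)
Your overall architecture — localise via Theorem~\ref{Thm:SlowRate}, exploit the sample splitting to work conditionally on $\hat v$, and then play the curvature of the projected signal against increments of the bias and noise terms over the localisation window — is the same as the paper's. But the quantitative skeleton you write down does not deliver the stated rate, and the discrepancy is not cosmetic. First, the curvature is linear, not quadratic, in $\|\theta\|_{2,\boldsymbol{q}}$: since $\alpha=\hat v^\top(\theta\circ\sqrt{\boldsymbol{q}})\lesssim\|\theta\|_{2,\boldsymbol{q}}$ and $\gamma^{(2)}_{z/2}-\gamma^{(2)}_t\gtrsim |t-z/2|/\sqrt{n\tau}$, the correct gap is of order $\|\theta\|_{2,\boldsymbol{q}}\,|t-z/2|/\sqrt{n\tau}$. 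Second, and more importantly, the conditional variance of the Gaussian increment $(\hat v^\top E^{(2)}_\Omega)_t-(\hat v^\top E^{(2)}_\Omega)_{z/2}$ is of order $\sigma^2|t-z/2|/(n\tau)$, not $\sigma^2|t-z/2|$: nearby MissCUSUM values are strongly correlated because of the $\sqrt{L_{j,t}R_{j,n-t}/N_j}$ normalisation, and this extra factor $1/(n\tau)$ is precisely what converts the localisation step into the fast rate (this is the content of the paper's Proposition~\ref{Prop:EOmegaDiff}). If you actually solve your basic inequality with the bounds as you state them, you get $|\hat z-z|\lesssim n\tau(\sigma+\|\theta\|_\infty)^2\log(pn)/\|\theta\|_{2,\boldsymbol{q}}^4$, which is not the claimed bound; so as written the final step "solving for $|\hat z - z|$ delivers the stated rate" fails.

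Two further points are glossed over and are where the real work lies. The missingness term is not a centred process with bounded per-step increments: the increment of $\Delta^{(2)}=A^{(2)}_\Omega-(\diag\sqrt{\boldsymbol{q}})A^{(2)}$ between $z/2$ and $t$ contains a systematic drift of order $\|\theta\|_{2,\boldsymbol{q}}|t-z/2|/\sqrt{n\tau}$ — the same order as the signal gap — coming from the nonlinear dependence of the MissCUSUM normalisation on the counts (second-order Taylor terms in Proposition~\ref{Prop:DeltaDiff}); the proof must track explicit constants so that this drift is strictly dominated by the curvature, and a generic Bernstein/maximal inequality treating the term as centred noise of size $O(\|\theta\|_\infty)$ per step cannot produce this. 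Finally, the obstacle you correctly identify — obtaining $(\sigma^2+\|\theta\|_\infty^2)/\|\theta\|_{2,\boldsymbol{q}}^2$ without a $1/\min_j q_j$ dilution — is not resolved by count concentration or by an $\ell_1$-type control on $\hat v$ (no such control is available at the needed scale). The paper resolves it by writing $\hat v=\alpha v+\beta w$ with $v\propto\theta\circ\sqrt{\boldsymbol{q}}$ and $|\beta|=\sin\angle(\hat v,v)\leq\rho$: the $v$-component yields $\sum_j v_j^2\theta_j^2\leq\|\theta\|_\infty^2$ and similar clean quantities, while the $\beta$-cross terms, which do carry $q_j^{-1/2}$ factors, are absorbed using $\rho\leq c$ together with $n\tau^2\min_j q_j\geq C_1k\log(pn)$. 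Without an argument of this type your bound would degrade to the worst-case observation probability, which is exactly what the theorem is designed to avoid.
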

The rate obtained in Theorem~\ref{Thm:FastRate} is essentially the square of that obtained in Theorem~\ref{Thm:SlowRate}.  In fact, an additional improvement is the reduction of $\|\theta\|_2$ in the second term to $\|\theta\|_\infty$.  Again, we see the decomposition of the estimation error into terms reflecting the noise in the observed data and the incompleteness of the observations respectively.

As a complement to Theorem~\ref{Thm:FastRate}, we now present a minimax lower bound, which studies the fundamental limits of the expected estimation error that are achievable by any algorithm.  We write $\tilde{\mathcal{Z}}$ for the set of estimators of $z$, i.e.~the set of Borel measurable functions $\hat{z}:\mathbb{R}^{n \times p} \times \{0,1\}^{n \times p} \rightarrow [n-1]$.

\begin{theorem}
\label{Thm:LowerBound}
Let $M \geq 1$ satisfy $\|\theta\|_\infty\leq M\min_{j\in [p]:\theta_j\neq 0}|\theta_j|$. If $\max\{\sigma^2,\|\theta\|_\infty^2/(2M^2)\}\geq \|\theta\|_{2,\boldsymbol{q}}^2$, then there exists $c>0$, depending only on $M$, such that for $n \geq 4$,
\[
\inf_{\tilde z \in \hat{\mathcal{Z}}}\max_{z \in [n-1]}\mathbb{E}_{P_{n,p,z,\theta,\sigma,\boldsymbol{q}}} \frac{|\tilde z(X\circ\Omega, \Omega) - z|}{n\tau} \geq \frac{c}{n\tau}\min\biggl\{\frac{\sigma^2}{\|\theta\|_{2,\boldsymbol{q}}^2} + \frac{\|\theta\|_\infty^2}{\|\theta\|_{2,\boldsymbol{q}}^2},n\biggr\}.
\]
\end{theorem}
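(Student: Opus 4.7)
The plan is to establish the lower bound via two separate two-point Le Cam reductions corresponding to the two terms in the bound: a noise-driven one yielding the $\sigma^2/\|\theta\|_{2,\boldsymbol{q}}^2$ term, and a missingness-driven one yielding the $\|\theta\|_\infty^2/\|\theta\|_{2,\boldsymbol{q}}^2$ term. For both reductions I would set $z_1 := \lfloor n/2\rfloor$ and $z_2 := z_1 + h$, where $h$ is a positive integer to be chosen, so that $n\tau$ is of order $n$ for both hypotheses. The standard two-point inequality
\[
\inf_{\tilde z}\max_{z \in \{z_1,z_2\}} \mathbb{E}_{P_{n,p,z,\theta,\sigma,\boldsymbol{q}}} |\tilde z - z| \geq \frac{h}{4}\bigl(1 - \mathrm{TV}(P_{z_1},P_{z_2})\bigr)
\]
then says that we should pick $h$ as large as possible subject to $\mathrm{TV}(P_{z_1},P_{z_2}) \leq 1/2$.

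For the noise term, the marginal law of $(X_t\circ\Omega_t,\Omega_t)$ is identical under $P_{z_1}$ and $P_{z_2}$ for every $t \notin \{z_1+1,\ldots,z_2\}$, while for $t$ in this window the law of $\Omega_t$ agrees and the conditional law of $X_t\circ\Omega_t$ given $\Omega_t$ is a product of Gaussians whose means differ by $\theta_j$ on the observed coordinates. The chain rule for KL and the Gaussian KL formula give
\[
\mathrm{KL}(P_{z_1}\,\|\,P_{z_2}) = \sum_{t=z_1+1}^{z_2} \mathbb{E}_\Omega\biggl[\sum_{j:\omega_{j,t}=1}\frac{\theta_j^2}{2\sigma^2}\biggr] = \frac{h\,\|\theta\|_{2,\boldsymbol{q}}^2}{2\sigma^2},
\]
and Pinsker's inequality then allows the choice $h = \lfloor c\sigma^2/\|\theta\|_{2,\boldsymbol{q}}^2\rfloor$ for a small enough universal constant $c$. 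This yields the required lower bound on $\mathbb{E}|\tilde z - z|/(n\tau)$, non-trivial precisely when $\sigma^2 \geq \|\theta\|_{2,\boldsymbol{q}}^2$ (the first branch of the hypothesis on $M$).

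For the missingness term a KL argument is not useful, because the KL scales like $1/\sigma^2$ and is therefore vacuous when the noise is small. Instead I would employ a direct coupling: let $E$ be the event that $\omega_{j,t} = 0$ for every $t \in \{z_1+1,\ldots,z_2\}$ and every $j \in \mathrm{supp}(\theta)$. On $E$ the observed pair $(X\circ\Omega,\Omega)$ has the same distribution under $P_{z_1}$ and $P_{z_2}$, because all mean differences on the affected window are multiplied by zero. Coupling so that the two laws agree on $E$ and then union-bounding gives
\[
\mathrm{TV}(P_{z_1},P_{z_2}) \leq \mathbb{P}(E^c) \leq h\,Q,\qquad Q := \sum_{j\in\mathrm{supp}(\theta)} q_j.
\]
Choosing $h = \lfloor 1/(2Q)\rfloor$ makes $\mathrm{TV}\leq 1/2$. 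The flatness condition $\|\theta\|_\infty \leq M\min_{j\in\mathrm{supp}(\theta)}|\theta_j|$ then implies $\|\theta\|_{2,\boldsymbol{q}}^2 \geq Q\|\theta\|_\infty^2/M^2$, so $1/Q \geq \|\theta\|_\infty^2/(M^2\|\theta\|_{2,\boldsymbol{q}}^2)$, recovering the second term of the bound (up to a factor depending only on $M$). Under the second branch of the hypothesis, $\|\theta\|_\infty^2/(2M^2) \geq \|\theta\|_{2,\boldsymbol{q}}^2$, this forces $Q \leq 1/2$, so the reduction is non-vacuous.

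Combining the two reductions gives a lower bound of order $\max\{\sigma^2,\|\theta\|_\infty^2\}/(n\tau\|\theta\|_{2,\boldsymbol{q}}^2)$, which is within a factor of two of the sum in the statement. The truncation by $n$ inside the $\min$ handles the regime where the chosen $h$ would exceed $n/4$: one caps $h$ at $\lfloor n/4\rfloor$, and the trivial constraint $|\tilde z - z|\leq n-1$ absorbs the rest. I expect the principal obstacle to be the missingness term, since a naive KL/Pinsker argument is blind to the phenomenon that a whole window of signal coordinates can be simultaneously unobserved with non-negligible probability; replacing KL by the coupling above is what produces a lower bound that is independent of $\sigma$ and matches the $\|\theta\|_\infty^2/\|\theta\|_{2,\boldsymbol{q}}^2$ rate.
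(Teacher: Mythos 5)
Your proposal is correct and follows essentially the same route as the paper's proof: a two-point Le Cam reduction with a KL/Pinsker computation giving $\mathrm{KL}=h\|\theta\|_{2,\boldsymbol{q}}^2/(2\sigma^2)$ for the noise term, and, for the missingness term, the event that every signal coordinate is unobserved throughout the window $(z_1,z_2]$, on which the two laws coincide. The only cosmetic differences are that you bound the probability of this event's complement by a union bound ($hQ$) where the paper computes the product $\prod_j(1-q_j)^{h}$ and uses $1-x\geq e^{-2x\log 2}$, and that the final assembly of the two branches into the stated sum requires the same $M$-dependent case analysis the paper performs (your ``factor of two'' is really a factor depending on $M$ when only one branch of the hypothesis holds), which is permitted since $c$ may depend on $M$.
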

Theorem~\ref{Thm:LowerBound} reveals that the \texttt{MissInspect} algorithm as given in Algorithm~\ref{Algo:MissInspectVariant} attains the minimax optimal estimation error rate up to logarithmic factors in all of the parameters of the problem, at least in settings where the signals are of comparable magnitude.  Note that the \texttt{MissInspect} algorithm also matches (deterministically) the second term in the minimum in Theorem~\ref{Thm:LowerBound}, because it trivially satisfies $|\hat{z} - z| \leq n-2$.  The form of the lower bound in Theorem~\ref{Thm:LowerBound} confirms that $\|\theta\|_{2,\boldsymbol{q}}$ is the correct functional of the mean change vector~$\theta$ and observation $\boldsymbol{q}$ for capturing the difficulty of the changepoint estimation problem in our missing data setting.

\section{Numerical studies}
\label{Sec:Numerical}
\subsection{Choice of tuning parameter}
The tuning parameter choice of $\lambda = 2\sigma\sqrt{n\log(pn)}$ is convenient in our theoretical analysis. However, this choice often turns out to be slightly too conservative in practice, so to explore this, we considered the output of Algorithm~\ref{Algo:PowerThresh} for a range of $\lambda$ values, under several different settings of $n$, $p$, $k$, $\theta$ and $\boldsymbol{q}$.    Figure~\ref{Fig:Lambda} displays the mean angle between the estimated projection direction $\hat v$ from~\eqref{Eq:Optimisation} and the oracle projection direction $\theta \circ \sqrt{\boldsymbol{q}} / \|\theta \circ \sqrt{\boldsymbol{q}}\|_2$ as a function of $\lambda$ in two such sets of simulations. In both panels, we set $n=1000$, $p=500$, $z=400$ and took $\lambda = a\sigma\sqrt{n\log(pn)}$ for $a\in [0,2]$.  The vector of mean change is $\theta = \vartheta k^{-1/2}(\mathbf{1}_k^\top, \mathbf{0}_{p-k}^\top)^\top$. Data were observed according to the row-homogeneous missingness model with $q_j = q^{\mathrm{s}}$ if $\theta_j\neq 0$ and $q_j = q^{\mathrm{n}}$ otherwise. In the left panel of the figure, we set $q^{\mathrm{s}} = q^{\mathrm{n}} = 0.2$ and vary $k\in\{3,10,50\}$ and $\vartheta\in\{1,1.5,2,2.5,3\}$, whereas in the right panel, we took $k=3$, $\vartheta=2$, $\sigma=1$ and vary $q^{\mathrm{s}}, q^{\mathrm{n}} \in \{0.1,0.2,0.3,0.4,0.5\}$.  We note that the choice $\lambda = 2^{-1}\sigma\sqrt{n\log(pn)}$ performs well in all settings, especially when the signal is relatively sparse.  We therefore settled on  this choice of $\lambda$ throughout our numerical studies.  It is reassuring to see from the right panel of Figure~\ref{Fig:Lambda} that the performance of the projection direction estimator has almost no dependence on $q^{\mathrm{n}}$, as predicted by our Proposition~\ref{Prop:SineAngle} (since $\|\theta\|_{2,\boldsymbol{q}}$ does not depend on~$q^{\mathrm{n}}$).

\begin{figure}[htbp]
\centering
\includegraphics[width=0.9\textwidth]{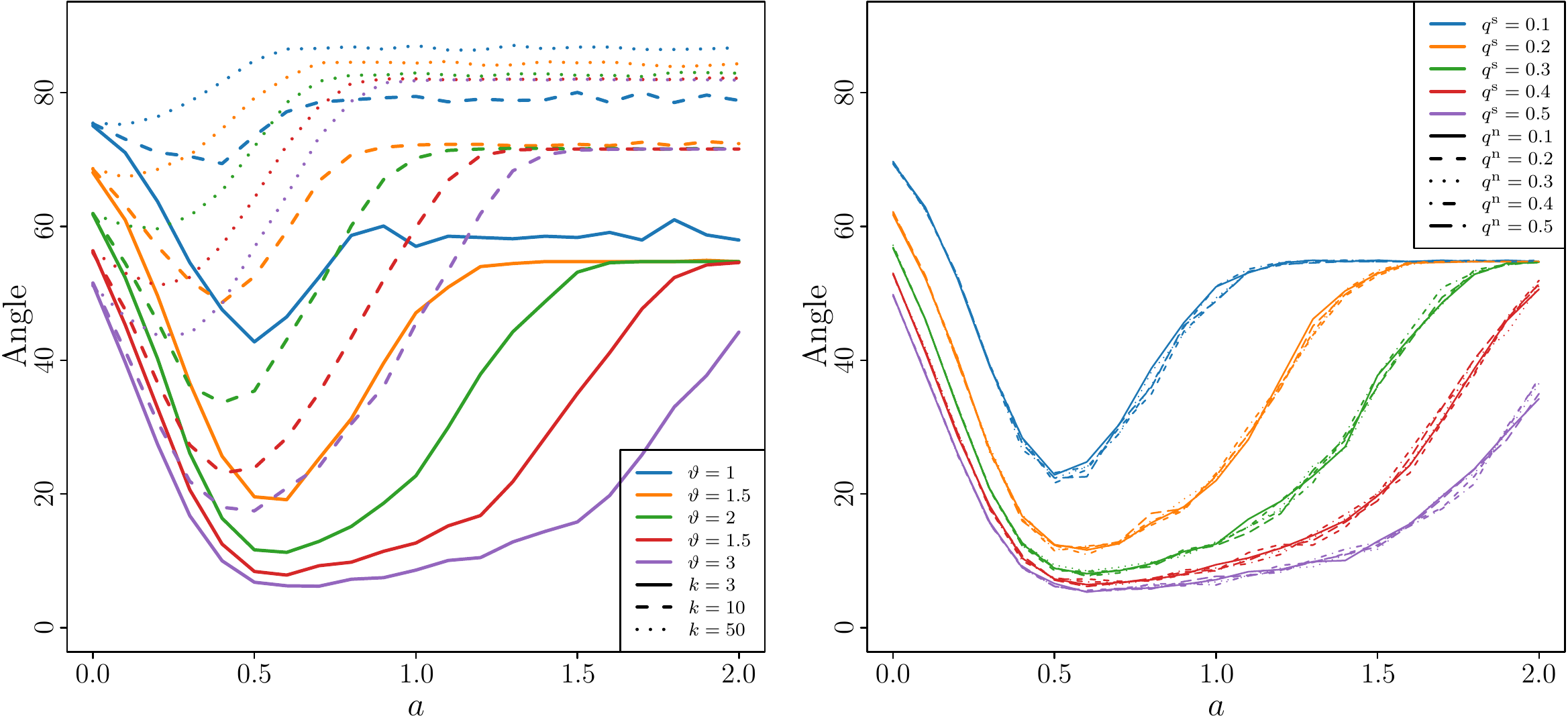}
\caption{\small Mean angle in degrees (averaged over 200 repetitions) between the oracle projection direction and the estimated projection direction from~Algorithm~\ref{Algo:PowerThresh} with $\lambda=a\sigma\sqrt{n\log(pn)}$ for $a\in [0, 2]$. Data are generated under~\eqref{Eq:Xt} with row-homogeneous missingness, independent of the data. Parameters: $n=1000$, $p=500$, $z=400$, $\sigma=1$. Left panel: all entries observed independently with probability $0.2$, $k\in\{3,10,50\}$ and $\vartheta\in\{1,1.5,2,2.5,3\}$. Right panel: $k=3$, $\vartheta=2$, signal coordinates are observed with probability $q^{\mathrm{s}} \in\{0.1,0.2,0.3,0.4,0.5\}$ and noise coordinates are observed with probability $q^{\mathrm{n}} \in\{0.1,0.2,0.3,0.4,0.5\}$.}
\label{Fig:Lambda}
\end{figure}

\subsection{Validation of theoretical results}

The aim of this subsection is to provide empirical confirmation of the forms of the bounds obtained in Proposition~\ref{Prop:SineAngle} and Theorem~\ref{Thm:FastRate}.  In particular, we would like to verify that the crucial quantity $\|\theta\|_{2,\boldsymbol{q}}$ does indeed capture the appropriate interaction between signal and missingness that determines the performance of the \texttt{MissInspect} algorithm.  The two panels of Figure~\ref{Fig:Validation} study the angle between the estimated and oracle projection directions, and the estimated changepoint location error respectively.  To obtain this figure, we set $n=1200$, $p=1000$ and generated data vectors under~\eqref{Eq:Xt} with every entry observed independently with probability $q \in \{0.1,0.2,0.4,0.8\}$, independent of the data.  A single change occurred at $z=400$ with vector of mean change $\theta = \vartheta k^{-1/2}(\mathbf{1}_k^\top, \mathbf{0}_{p-k}^\top)^\top$ and $k=3$. We investigated the performance of \texttt{MissInspect} over 200 Monte Carlo repetitions for each of $\vartheta \in \{0.5,1,1.5,2\}$ and $\sigma \in \{0.2,0.4,0.8,1.6\}$. The left panel of Figure~\ref{Fig:Validation} shows that the logarithm of the mean sine angle loss decreases approximately linearly with $\log \|\theta\|_{2,\boldsymbol{q}}$, with gradient approximately $-1$.
This is consistent with the conclusion of Proposition~\ref{Prop:SineAngle}, which shows that the sine angle loss is controlled with high probability by an upper bound that is inversely proportional to $\|\theta\|_{2,\boldsymbol{q}} = \vartheta q^{1/2}$.  Moreover, curves corresponding to $\sigma=0.2, 0.4, 0.8, 1.6$ are roughly equally spaced on the logarithmic scale, which corresponds to the linear dependence on $\lambda = 2^{-1}\sigma\sqrt{n\log(pn)}$ of the first term in the high-probability bound in Proposition~\ref{Prop:SineAngle}.  For fixed $\sigma$, the blue, orange and green curves are approximately overlapping, especially when the sine angle loss is large.  In particular, doubling $\vartheta$ and reducing $q$ by a factor of four leaves the sine angle loss virtually unchanged in these settings, which is consistent with the first term in the high probability upper bound in Proposition~\ref{Prop:SineAngle} being the dominant one, with its reciprocal dependence on $\|\theta\|_{2,\boldsymbol{q}} = \vartheta q^{1/2}$.  The contribution from the second term in Proposition~\ref{Prop:SineAngle} is still visible in the high signal-to-noise ratio settings, where, for instance when $\sigma = 0.2$, the $\vartheta=2$ curve (green) lies above the $\vartheta=1$ curve (orange).  This is again consistent with the form of the second term in the bound in Proposition~\ref{Prop:SineAngle}, which, in our setting, does not depend on $\vartheta$, but is inversely proportional to $q^{1/2}$.

\begin{figure}[htbp]
\centering
\includegraphics[width=0.9\textwidth]{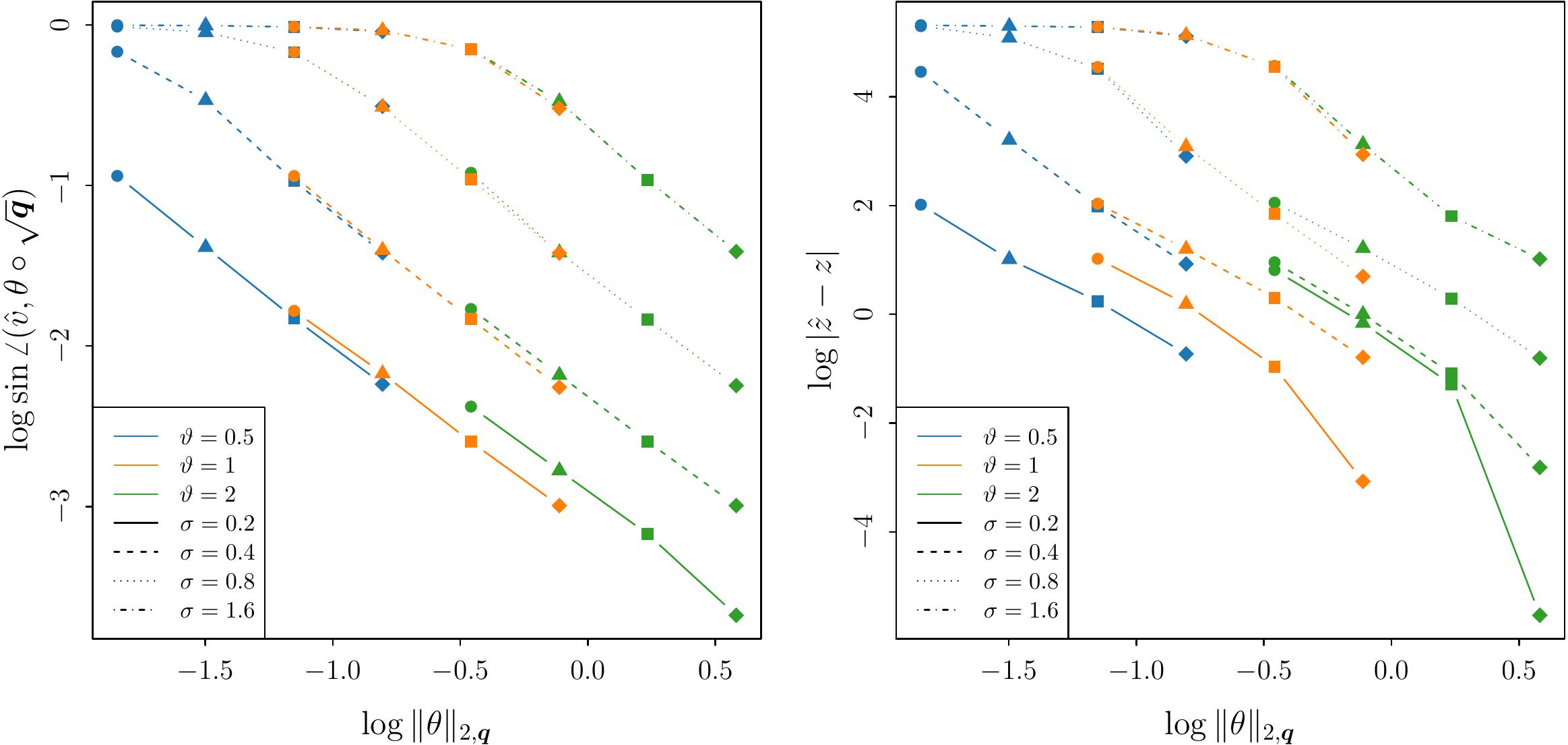}
\caption{\small Estimation accuracy of \texttt{MissInspect} as a function of  $\|\theta\|_{2,\boldsymbol{q}}$. Data are generated under~\eqref{Eq:Xt} with row-homogeneous missingness, independent of the data. Parameters: $n=1200$, $p=1000$, $z=400$, $k=3$, $\vartheta\in\{0.5,1,2\}$, $\sigma\in\{0.2,0.4,0.8,1.6\}$ and $\boldsymbol{q} = q\mathbf{1}_p$ with $q\in\{0.1,0.2,0.4,0.8\}$.  Colours indicate $\vartheta$, line type indicates $\sigma$, and the circle, triangle, square and diamond plotting characters correspond to $q = 0.1, 0.2, 0.4, 0.8$ respectively.  Left panel: logarithm of the mean sine angle loss (averaged over 200 repetitions) between estimated and oracle projection directions. Right panel: logarithm of mean changepoint location loss (averaged over 200 repetitions).}
\label{Fig:Validation}
\end{figure}

A similar story emerges in the right panel of Figure~\ref{Fig:Validation} for the changepoint location estimator accuracy. Here, for fixed $\vartheta$ and $\sigma$, most points lie on approximate straight lines with slope $-2$, which is in agreement with the $\|\theta\|_{2,\boldsymbol{q}}^{-2}$ dependence in the high probability bound of $|\hat z - z|$ in Theorem~\ref{Thm:FastRate}. The $\sigma^2$ dependence in the first term of the bound in Theorem~\ref{Thm:FastRate} is represented by the mostly equi-spaced curves for the four different equi-spaced $\sigma$ values on the logarithmic scale. The contribution of the second term in the bound in Theorem~\ref{Thm:FastRate} can be seen from the three curves corresponding to the smallest noise scale $\sigma = 0.2$. Here, the estimation error only improves slightly as $\vartheta$ increases, which is in agreement with our theoretical prediction, since, in the setting of this simulation, the second term in Theorem~\ref{Thm:FastRate} is proportional to $q^{-1/2}$ and does not depend on $\vartheta$.

\begin{table}[htbp]
    \centering
\begin{tabular}{ccccccc}
\hline\hline
$\nu$ & $k$ & $\vartheta$ & $\angle(\hat v^{\mathrm{MI}}, \theta \circ \sqrt{\boldsymbol{q}})$ & $\angle(\hat v^{\mathrm{II}}, \theta )$ & $|\hat z^{\mathrm{MI}} - z|$& $|\hat z^{\mathrm{II}} - z|$\\
\hline
$0.1$ & $3$ & $1$ & $70.0$ & $88.0$ & $143.9$ & $451.1$\\
$0.1$ & $3$ & $2$ & $41.9$ & $59.6$ & $44.8$ & $295.2$\\
$0.1$ & $3$ & $3$ & $26.2$ & $41.2$ & $13.0$ & $247.1$\\
$0.1$ & $44$ & $1$ & $83.4$ & $88.5$ & $183.9$ & $448.1$\\
$0.1$ & $44$ & $2$ & $64.4$ & $85.0$ & $77.5$ & $410.9$\\
$0.1$ & $44$ & $3$ & $48.7$ & $73.2$ & $15.1$ & $298.4$\\
$0.1$ & $2000$ & $1$ & $86.6$ & $88.2$ & $196.9$ & $445.7$\\
$0.1$ & $2000$ & $2$ & $77.3$ & $87.8$ & $124.8$ & $449$\\
$0.1$ & $2000$ & $3$ & $68.6$ & $82.9$ & $58.5$ & $384.6$\\
$0.5$ & $3$ & $1$ & $33.1$ & $81.9$ & $11.9$ & $362.2$\\
$0.5$ & $3$ & $2$ & $13.6$ & $40.8$ & $1.6$ & $7.4$\\
$0.5$ & $3$ & $3$ & $9.3$ & $22.8$ & $0.7$ & $4.0$\\
$0.5$ & $44$ & $1$ & $63.7$ & $88.3$ & $58.9$ & $440.7$\\
$0.5$ & $44$ & $2$ & $37.3$ & $72.0$ & $2.3$ & $129.1$\\
$0.5$ & $44$ & $3$ & $27.1$ & $58.0$ & $0.7$ & $1.5$\\
$0.5$ & $2000$ & $1$ & $77.4$ & $88.7$ & $112.6$ & $446.1$\\
$0.5$ & $2000$ & $2$ & $59.3$ & $85.8$ & $9.4$ & $357.4$\\
$0.5$ & $2000$ & $3$ & $52.1$ & $72.6$ & $1.6$ & $46.6$\\
\hline\hline
\end{tabular}
    \caption{\small Location and projection direction estimation errors (averaged over 200 Monte Carlo repetitions) for \texttt{MissInspect}  (denoted by superscript MI) and \texttt{ImputeInspect} (denoted by superscript II). Other parameters: $n=1200$, $p=2000$, $z=400$, $q_1,\ldots,q_p\stackrel{\mathrm{iid}}\sim \mathrm{Beta}\bigl(10\nu , 10(1-\nu)\bigr)$.}
    \label{Tab:Comparison}
\end{table}

\subsection{Comparison with a competitor}

Since we are not aware of other methods that have been proposed for the problem studied in this paper, in this subsection, we compare the performance of \texttt{MissInspect} with a natural alternative that combines the idea of handling missing data via imputation and the original \texttt{inspect} procedure.   Specifically, given a data matrix with incomplete observations, this comparator first applies the \texttt{softImpute} procedure of \citet{mazumder2010spectral}, with the maximum matrix rank parameter set to $2$ (since $X\circ \Omega$ can be viewed as a perturbation of its mean $(\diag \sqrt{\boldsymbol{q}}) \boldsymbol{\mu}$, which has rank 2). It then performs changepoint estimation on the imputed data matrix using the \texttt{inspect} procedure of \citet{wang2016highdimensional}, with the suggested regularisation parameter choice therein. We refer to this alternative approach as \texttt{ImputeInspect}.  Table~\ref{Tab:Comparison} compares the performance of \texttt{MissInspect} and \texttt{ImputeInspect} under various settings. Here, we choose $n=1200$, $p=2000$, $k\in\{3,\lfloor \sqrt p\rfloor, p\}$, $\vartheta\in\{1,2,3\}$. Data are generated according to~\eqref{Eq:Xt} with row-homogeneous missingness, independent of the data.  The changepoint occurs at $z=400$, with vector of mean change having $\ell_2$ norm~$\vartheta$ and proportional to $(1, 2^{-1/2}, \ldots, k^{-1/2}, 0,\ldots,0)^\top$. The observation rate vector $\boldsymbol{q} = (q_1,\ldots,q_p)^\top$ is randomly generated, independent of all other sources of randomness, such that $q_j \stackrel{\mathrm{iid}}{\sim} \mathrm{Beta}\bigl(10\nu, 10(1-\nu)\bigr)$ for $j\in[p]$, where $\nu \in\{0.1,0.5\}$. Since both the \texttt{ImputeInspect} and \texttt{MissInspect} procedures are projection based, it is natural to compare their performance in both the projection direction estimation and the final changepoint estimation errors. Note that the oracle projection direction for \texttt{ImputeInspect} is parallel to~$\theta$ (since the imputed matrix has no missing entries), whereas the oracle projection direction of \texttt{MissInspect} is parallel to $\theta\circ \sqrt{\boldsymbol{q}}$. We see in Table~\ref{Tab:Comparison} that \texttt{MissInspect} consistently outperforms \texttt{ImputeInspect}, often dramatically, for all observation fractions, sparsity levels and signal strengths considered.

\subsection{Real data analysis}
\label{Sec:RealData}

In this subsection, we illustrate the applicability of the \texttt{MissInspect} algorithm on an oceanographic data set covering the Neogene geological period.  Oceanographers study historic changes in the global ocean circulation system by examining microfossils that record the isotopic composition of water at the time at which they lived \citep{wright1996control}.  In particular, large cores are extracted from the ocean floor and a species of microfossils called foraminifera are taken from small slices of sediment at different depths within the core.  The ratio of the abundances of \textsuperscript{13}C to \textsuperscript{12}C isotopes in their calcium carbonate shells is compared against a standard, to understand the carbon composition within the oceans during their lifetime, and hence to determine the direction in which ocean currents flowed.  The depth of the foraminifera within the core is used as a proxy for the geological age of the fossil, measured in millions of years (Ma).

Our data, which are available in a GitHub repository\footnote{\url{https://github.com/wangtengyao/MissInspect/real_data}}, and were previously analysed by \citet{samworth2005understanding} and \citet{poore2006neogene}, consist of measurements from 16 cores extracted from the North Atlantic, Pacific and Southern Oceans and are displayed in Figure~\ref{Fig:Oceans}.  In total, there are 7369 observations at 6295 distinct time points, but Figure~\ref{Fig:Oceans} makes clear that the heterogeneous nature of the data collection process means that it is appropriate to think of the data as containing missingness.  The figure also indicates the 10 most prominent changepoints identified by applying the \texttt{MissInspect} algorithm in combination with binary segmentation, as discussed in Section~\ref{Sec:Extensions} below.  It is notable that the first changepoint, found by applying the algorithm to the full data set, occurs at 6.13Ma, a time that has previously been identified as a time of rapid change in oceanographic current flow \citep[][p.~13]{poore2006neogene}.
\begin{figure}
    \centering
    \includegraphics[width=0.9\textwidth]{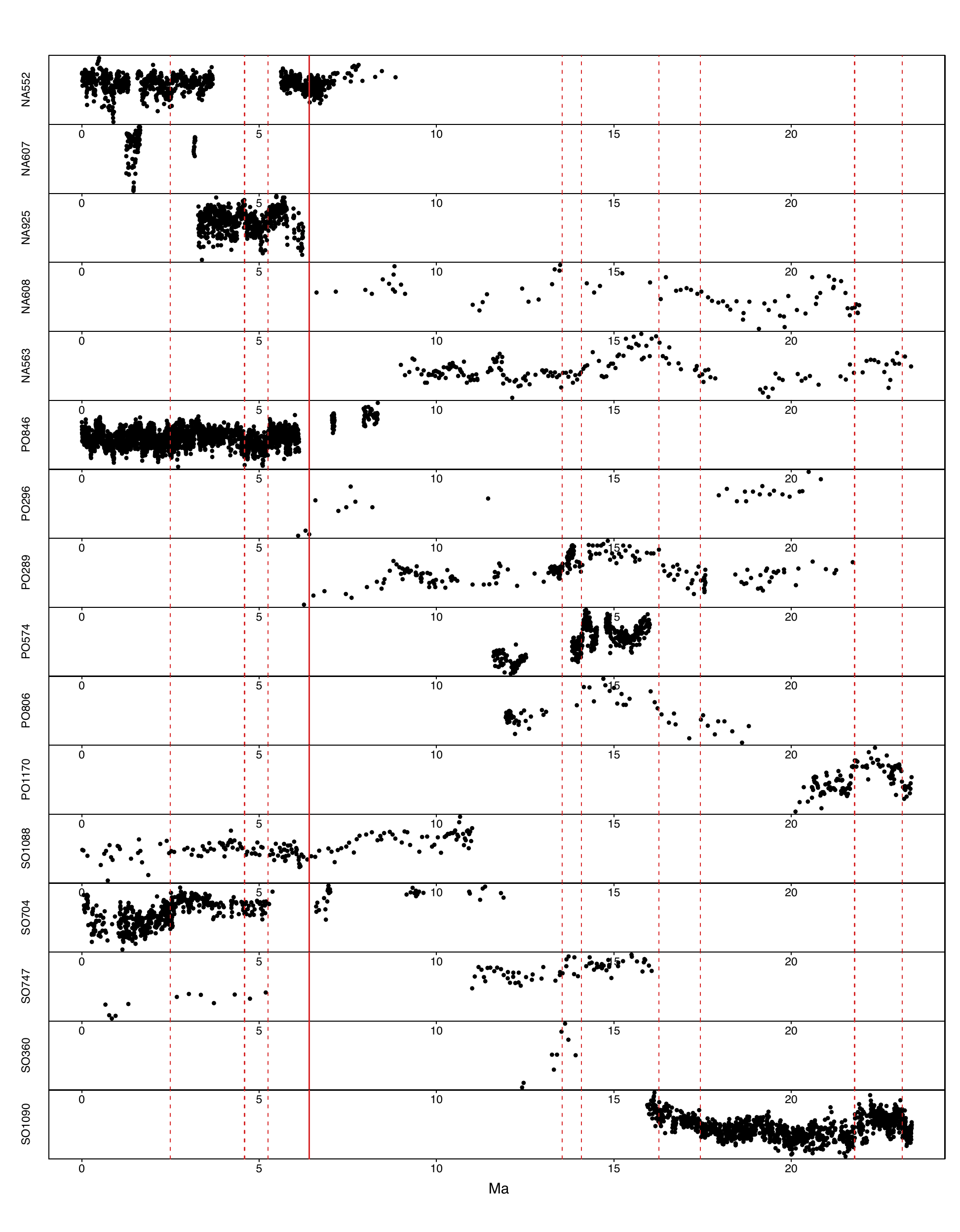}
    \caption{\small Ratios of carbon isotope measurements taken from foraminifera in 16 different cores from the North Atlantic, Pacific and Southern Oceans.  The label of each panel indicates both the ocean and the number of the core, while the horizontal axis measures geological time (0--23 Ma).  The red dashed lines indicate the 10 most prominent changepoints identified by applying the \texttt{MissInspect} algorithm in combination with binary segmentation, with the most significant change plotted with a solid line.}
    \label{Fig:Oceans}
\end{figure}
 
\section{Extensions}
\label{Sec:Extensions}

As mentioned in the introduction, one of our main theoretical goals in this work is to understand the way in which the missingness and the signal interact in changepoint problems to determine the difficulty of the problem.  This is particularly challenging when we seek to handle both high dimensionality and different levels of missingess in different coordinates.  For the purposes of our theoretical analysis, then, it is natural to impose stronger assumptions elsewhere, so as to best expose the interesting phenomena at play.  Nevertheless, it remains of interest to consider the extent to which the methodology could be generalised, and the assumptions could be relaxed, to cover a wider range of scenarios and problems one might see in practice.

As we saw in analysing the oceanography data in Section~\ref{Sec:RealData}, it may be that we wish to identify multiple changepoints.  There are several standard techniques for extending single changepoint procedures to such settings, including binary segmentation and different versions of wild binary segmentation \citep{fryzlewicz2014wild,kovacs2020optimistic}.  Any of these approaches can be used in conjunction with the \texttt{MissInspect} algorithm to identify multiple changepoints in high-dimensional data streams in the presence of missingness.  The theoretical analysis of such a procedure would be technically involved, but would proceed along similar lines to that of \citet{wang2016highdimensional} for the case of fully-observed data. 

As always when handling missing data, the situation becomes much more complicated when the missingness and the data are not independent, i.e.~the Missing Completely At Random (MCAR) assumption does not hold.  In the worst case, the missingness may render the changepoint estimation problem impossible, for instance if no signal coordinate has observed data on both sides of the changepoint.  A less adversarial setting would be one in which all observations exceeding 1 are censored.  Thus, if the vector of mean change had positive entries in signal coordinates, we would expect to see fewer observations in these coordinates after the change.  The censoring would lead to (different) truncated Gaussian distributions before and after the change, but a difference in mean of these distributions would persist, so changepoint estimation may still be possible.  In general, careful and problem-specfic modelling of the dependence of the data and the missingness mechanism is recommended. 

Finally, we discuss settings of temporal and spatial dependence in the data.  In the former case, a natural model is to replace~\eqref{Eq:Xt} with 
\[
X_t = \mu_t + W_t, \quad \text{for } t=1,\ldots,n,
\]
where $\mu_1,\ldots,\mu_n$ satisfy~\eqref{Eq:SingleChangepoint} and where the noise vectors $(W_1,\ldots,W_n)$ form a mean-zero, stationary Gaussian process.  In this case, with row-homogeneous missingness independent of the data, the oracle projection direction remains $(\theta \circ \sqrt{\boldsymbol{q}})/\|\theta \circ \sqrt{\boldsymbol{q}}\|_2$, and the \texttt{MissInspect} methodology does not need to be altered.  On the other hand, if spatial dependence is introduced into~\eqref{Eq:SingleChangepoint} by replacing the identity covariance matrix there with a general covariance matrix $\Sigma$, then the oracle projection direction becomes proportional to $\Sigma^{-1}(\theta \circ \sqrt{\boldsymbol{q}})$.  If $\Sigma$ is unknown, then estimating $\Sigma^{-1}$ may represent a significant challenge, but it may be considerably simplified if our data stream satisfies additional structural assumptions.  For instance, if $\Sigma = \mathrm{diag}(\sigma_1^2,\ldots,\sigma_p^2)$, where $\sigma_1,\ldots,\sigma_p$ are unknown, then we can estimate these quantities robustly using, for example, the median absolute deviation of the marginal one-dimensional series \citep{hampel}.  As another example, if $\Sigma$ is Toeplitz with $\Sigma = (\rho^{|j-k|})_{j,k\in [p]}$ for some $\rho \in (-1,1)$, then $\Sigma^{-1}$ is tridiagonal, and its form can again be used to estimate $\rho$ \citep[][Lemma~12]{wang2016highdimensional}.

\section{Proof of main results}
\label{Sec:Proofs}

\subsection{Proof of Proposition~\ref{Prop:SineAngle}}
The proof of Proposition~\ref{Prop:SineAngle} requires the following result, which provides an entrywise control of the difference between the mean of the MissCUSUM transformation of $(X\circ\Omega,\Omega)$ conditional on $\Omega$ and its unconditional mean. 
\begin{proposition}
\label{Prop:Delta}
For $n\geq 2$, suppose $(X,\Omega) \sim P_{n,p,z,\theta,\sigma,\boldsymbol{q}}$ and let $A = \mathcal{T}\bigl(\mathbb{E}(X)\bigr) \in \mathbb{R}^{p \times (n-1)}$ and $A_\Omega = \mathcal{T}^{\mathrm{Miss}}\bigl(\mathbb{E}(X)\circ\Omega,\Omega\bigr) \in \mathbb{R}^{p \times (n-1)}$.  We have that 
\[
\mathbb{P} \bigl(  |(A_\Omega)_{j,t} - \sqrt{q_j}A_{j,t}| > 7\sqrt{6}|\theta_j| \sqrt{\log (kn)}\bigr) \leq \frac{4}{k^2n^2}
\]
for all $j \in [p]$ and $t \in [n-1]$.  Consequently, 
\[
    \mathbb{P} \bigl(  \|A_\Omega - (\diag \sqrt{\boldsymbol{q}})A\|_{\mathrm{F}} > 7\sqrt{6}\|\theta\|_2 \sqrt{n\log (kn)}\bigr) \leq \frac{4}{kn}.
\]
\end{proposition}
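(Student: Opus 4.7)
The plan is to establish the pointwise bound first and then deduce the Frobenius bound by observing that when $\theta_j = 0$, the $j$th rows of both $A_\Omega$ and $(\diag\sqrt{\boldsymbol{q}})A$ vanish (since the MissCUSUM of a constant vector is zero), so at most $k$ of the $p$ rows of the difference are nonzero. Fix $j$ with $\theta_j \neq 0$ and $t \in [n-1]$; by time-reversal symmetry assume $t \leq z$. A direct telescoping calculation exploiting $\mathbb{E}(X)_{j,r} = \mu_j^{(1)}\mathbbm{1}_{\{r\leq z\}} + \mu_j^{(2)}\mathbbm{1}_{\{r>z\}}$ gives the closed form $(A_\Omega)_{j,t} = \theta_j b\sqrt{L/(NR)}$ (when $L, R > 0$), where $L := L_{j,t}$, $R := R_{j,n-t}$, $N := L+R$ and $b := \sum_{r=z+1}^n \omega_{j,r}$; the deterministic target is $\sqrt{q_j}A_{j,t} = \theta_j(n-z)\sqrt{tq_j/(n(n-t))}$.

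The key step is the decomposition $(A_\Omega)_{j,t} - \sqrt{q_j}A_{j,t} = T_1 + T_2$, where
\[
T_1 := \theta_j\sqrt{L/(NR)}\bigl(b - \tfrac{R(n-z)}{n-t}\bigr), \quad T_2 := \theta_j\cdot\tfrac{n-z}{n-t}\bigl[\sqrt{LR/N} - \sqrt{t(n-t)q_j/n}\bigr],
\]
obtained by adding and subtracting $\theta_j\sqrt{L/(NR)}R(n-z)/(n-t) = \theta_j(n-z)\sqrt{LR/N}/(n-t)$. The first term captures the conditional fluctuation of $b$ given $(L, R)$: since $b \mid R$ is hypergeometric with mean $R(n-z)/(n-t)$, Hoeffding's inequality for sampling without replacement gives $|b - R(n-z)/(n-t)| \lesssim \sqrt{R\log(kn)}$ with probability at least $1 - 2/(kn)^2$, and hence $|T_1| \leq |\theta_j|\sqrt{L\log(kn)/N} \leq |\theta_j|\sqrt{\log(kn)}$, using $L \leq N$.

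For $T_2$, I would split into two regimes according to whether the expected observation counts are large. In the \emph{small} regime, where one of $tq_j$, $(n-t)q_j$ or $(n-z)q_j$ is of order $\log(kn)$ or smaller, multiplicative Chernoff ensures that the corresponding count $L$, $R$ or $b$ is itself $O(\log(kn))$ with the required probability, and the deterministic bounds $|(A_\Omega)_{j,t}| \leq |\theta_j|\sqrt{\min(L,R)}$ (from $b^2L/(NR) \leq LR/N \leq \min(L,R)$) together with $\sqrt{q_j}|A_{j,t}| \leq |\theta_j|\min\{\sqrt{tq_j},\sqrt{(n-t)q_j},\sqrt{(n-z)q_j}\}$ yield the full bound $|T_1 + T_2| \lesssim |\theta_j|\sqrt{\log(kn)}$ directly. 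In the \emph{balanced} regime where all three quantities exceed a constant multiple of $\log(kn)$, multiplicative Chernoff gives $|L-tq_j| \lesssim \sqrt{tq_j\log(kn)}$ and analogously for $R$, placing any intermediate point on the segment from $(L,R)$ to $(tq_j,(n-t)q_j)$ within a constant multiplicative factor of the mean. A mean-value expansion of $h(L,R) := LR/(L+R)$ with gradient $(R^2/N^2, L^2/N^2)$ then gives $|LR/N - t(n-t)q_j/n| \lesssim ((n-t)^2/n^2)|\tilde L| + (t^2/n^2)|\tilde R|$; combined with $|\sqrt a - \sqrt b| \leq |a-b|/\sqrt{\max(a,b)}$ and the factor $(n-z)/(n-t) \leq 1$, the dimensionful terms cancel into $((n-t)/n)^{3/2} + (t/n)^{3/2} \leq 1$, giving $|T_2| \lesssim |\theta_j|\sqrt{\log(kn)}$.

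The main obstacle is the $T_2$ analysis: the gradient of $h$ is only well-controlled \emph{relative} to $\sqrt{t(n-t)q_j/n}$ when all of $tq_j, (n-t)q_j, (n-z)q_j$ are large, so the small regime must be handled separately via crude deterministic bounds on the MissCUSUM. Tracking the constants through the case split so as to land at the claimed $7\sqrt{6}$ prefactor is book-keeping intensive but requires no new ideas beyond the Hoeffding/Chernoff inequalities for Bernoulli sums. Once the pointwise bound is in hand with failure probability $\leq 4/(k^2n^2)$, the Frobenius statement follows by union bound: on the intersection of good events for the $(j,t)$ with $\theta_j \neq 0$ (a set of size at most $k(n-1)$, giving total failure probability at most $4/(kn)$), one has $\|A_\Omega - (\diag\sqrt{\boldsymbol{q}})A\|_{\mathrm{F}}^2 \leq 294\log(kn)\sum_{j:\theta_j\neq 0}(n-1)\theta_j^2 \leq 294 n\log(kn)\|\theta\|_2^2$, and $\sqrt{294} = 7\sqrt{6}$.
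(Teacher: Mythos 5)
Your argument is correct in substance and reaches the proposition by a route that overlaps heavily with the paper's while differing in two technical ingredients. Both proofs start from the same closed form for $(A_\Omega)_{j,t}$ in terms of the three counts $L_{j,t}$, $R_{j,n-t}$ and $b:=R_{j,n-z}$, both split into a well-observed regime and a poorly-observed regime in which crude bounds on $(A_\Omega)_{j,t}$ and on $\sqrt{q_j}A_{j,t}$ are used (the paper's split is on $n\tau q_j$ versus $24\log(kn)$, with a further sub-case for small $t$; yours is on whether $tq_j$, $(n-t)q_j$ and $(n-z)q_j$ all exceed a constant multiple of $\log(kn)$), and both finish the Frobenius claim with the same union bound over at most $k(n-1)$ entries. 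Where you genuinely differ: the paper controls all three counts multiplicatively via Bernstein and converts the relative errors into the additive bound using $A_{j,t}\leq \theta_j\min(\sqrt{t},\sqrt{n-z})$, whereas you isolate the fluctuation of $b$ conditionally on $(L_{j,t},R_{j,n-t})$ through the hypergeometric representation and Hoeffding's inequality for sampling without replacement (an ingredient the paper does not use), and you compare $\sqrt{LR/N_j}$ with its deterministic counterpart by a mean-value expansion of $(L,R)\mapsto LR/(L+R)$; this makes the changepoint location enter only through the hypergeometric step, at the price of needing the segment in the mean-value argument to stay within a constant factor of the means, which is exactly what your balanced regime guarantees. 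One small repair is needed in the small regime: the deterministic bound you quote, $|(A_\Omega)_{j,t}|\leq |\theta_j|\sqrt{\min(L,R)}$, does not by itself cover the sub-case where only $(n-z)q_j\lesssim \log(kn)$ (so $b$ is small while $L$ and $R$ are large); there you should use $b^2L/(N_jR)\leq b$ (valid since $b\leq R$ and $L\leq N_j$), giving $|(A_\Omega)_{j,t}|\leq |\theta_j|\sqrt{b}\lesssim |\theta_j|\sqrt{\log(kn)}$ — an immediate consequence of the same closed form, and clearly what you intend when you say $b=O(\log(kn))$ there. With that fix and the constant bookkeeping you defer (the target constant $7\sqrt{6}$ is generous enough for your route), the pointwise and Frobenius bounds follow with the claimed failure probabilities.
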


\begin{proof}
Let $\Delta := A_\Omega - \bigl(\diag\sqrt{\boldsymbol{q}}\bigr)A \in \mathbb{R}^{p \times (n-1)}$ have $(j,t)$th entry $\Delta_{j,t}$ and let $S := \{j \in [p]:\theta_j \neq 0\}$. Since $\Delta_{j,t}=0$ for $j\notin S$, it suffices to bound $|\Delta_{j,t}|$ for each $j\in S$ and $t\in [n-1]$.  Without loss of generality, we may assume that $\theta_j > 0$. 

First assume that $j$ satisfies $n\tau q_j \geq 24\log (kn)$. Let $ a_j := \bigl\lceil \frac{8\log(kn)}{3q_j}\bigr\rceil$. It follows that $a_j \leq n\tau/6$.   Define 
\[
\delta_{j,t} := \begin{cases} 
\frac{4\log(kn)}{tq_j} & \text{if $1\leq t\leq a_j-1$}\\
\sqrt\frac{6\log(kn)}{tq_j} & \text{if $a_j\leq t\leq n$.}
\end{cases}
\]
Recall that $L_{j,t} = \sum_{r=1}^t \omega_{j,r}$ and $R_{j,t} = \sum_{r=n-t+1}^n \omega_{j,r}$ for $j \in [p]$ and $t \in [n]$. We consider the event
\[
  \mathcal{A}_{j,t} := \biggl\{\max\biggl(\biggl|\frac{L_{j,t}}{tq_j} - 1\biggr|,\biggl|\frac{R_{j,t}}{tq_j} - 1\biggr|\biggr) \leq \delta_{j,t} \biggr\}.
\]
By Bernstein's inequality (Lemma~\ref{Lem:Bernstein}), we obtain that
\begin{align*}
\mathbb{P}\bigl(\mathcal{A}_{j,t}^{\mathrm{c}} \bigr)
&\leq 4 \exp\biggl\{-\frac{\delta_{j,t}^2tq_j}{2(1+\delta_{j,t}/3)}\biggr\} \\
&= 4 \exp\biggl\{-\frac{8\log^2(kn)}{tq_j + 4\log (kn)/3}\biggr\}\mathbbm{1}_{\{t < a_j\}} + 4 \exp\biggl\{-\frac{3\log(kn)}{1+\sqrt{2 \log(kn)/(3tq_j)}}\biggr\}\mathbbm{1}_{\{t\geq a_j\}} \\
&\leq 4 e^{-2\log(kn)} \leq \frac{4}{k^2n^2},
\end{align*}
where we have used the facts that $tq_j \leq 8\log (kn)/3$ for $t\leq a_j-1$ and $\sqrt{2\log(kn)} \leq \sqrt{3tq_j}/2$ for $t\geq a_j$ in the penultimate inequality. 

We will now bound $|\Delta_{j,t}|$ for different values of $t$ on $\mathcal{A}_{j,t}$. First, consider $a_j\leq t\leq z$.  Since $n-z \geq n\tau > a_j$, we have $\delta_{j, n-z}\leq \delta_{j, n\tau} \leq 1/2$.  We deduce from the definition of $A_{\Omega}$ that 
\begin{align}
(A_\Omega)_{j,t} &= \sqrt\frac{L_{j,t}}{R_{j,n-t}(L_{j,t}+R_{j,n-t})} R_{j,n-z}\theta_j \leq \sqrt\frac{(1+\delta_{j,t})tq_j}{(1-\delta_{j,n-t})^2(n-t)nq_j^2} (1+\delta_{j,n-z})(n-z)q_j\theta_j\nonumber\\
&\leq \sqrt{q_j}A_{j,t}\frac{\sqrt{1+\delta_{j,t}}(1+\delta_{j,n-z})}{1-\delta_{j,n-z}}.\label{Eq:tmp1}
\end{align}
It follows that for $j \in S$ and $a_j\leq t\leq z$, 
\begin{align*}
\Delta_{j,t} & \leq \sqrt{q_j}A_{j,t}\biggl\{\frac{\sqrt{1+\delta_{j,t}}(1+\delta_{j,n-z})}{1-\delta_{j,n-z}}-1\biggr\} \leq \sqrt{q_j} A_{j,t}\bigl\{(1+\delta_{j,t})(1+4\delta_{j,n-z})-1\bigr\} \\
&\leq \sqrt{q_j} A_{j,t}(3\delta_{j,t}+4\delta_{j,n-z}).
\end{align*}
By a similar calculation for deviations in the opposite direction, we have 
\[
\Delta_{j,t}\geq -\sqrt{q_j}A_{j,t}(\delta_{j,t}+4\delta_{j,n-z}/3).
\]
Thus, using the fact that $A_{j,t} \leq \theta_j\min\bigl(\sqrt{t},\sqrt{n-z}\bigr)$, we deduce that
\[
|\Delta_{j,t}|\leq \sqrt{q_j} A_{j,t}(3\delta_{j,t}+4\delta_{j,n-z}) \leq 7 \sqrt{q_j}\theta_j\min\bigl(\sqrt{t},\sqrt{n-z}\bigr) \max(\delta_{j,t}, \delta_{j,n-z}) \leq 7\theta_j\sqrt{6\log (kn)}.
\]
By symmetry, if $z < t \leq n-a_j$, we also have $|\Delta_{j,t}| \leq 7\theta_j\sqrt{6\log (kn)}$.

Next, if $t\leq a_j-1$, then we necessarily have $t\leq n\tau$. The calculation in~\eqref{Eq:tmp1} still applies, and we have
\[
(A_\Omega)_{j,t} \leq \sqrt{q_j}A_{j,t}\frac{\sqrt{1+\delta_{j,t}}(1+\delta_{j,n-z})}{1-\delta_{j,n-z}} \leq 3\sqrt{q_j}A_{j,t}\sqrt{1+\delta_{j,t}}.
\]
Hence, since $\mathrm{sgn}\bigl((A_\Omega)_{j,t}\bigr) = \mathrm{sgn}(A_{j,t})$, we have
\begin{align*}
|\Delta_{j,t}|\leq \max\bigl((A_\Omega)_{j,t}, \sqrt{q_j}A_{j,t}\bigr) &\leq A_{j,t}\sqrt{q_j} \max\bigl(3\sqrt{1+\delta_{j,t}}, 1\bigr)\\
&\leq \theta_j\sqrt{tq_j} \max\biggl\{3\sqrt\frac{tq_j+4\log(kn)}{tq_j}, 1\biggr\} \leq 8\theta_j\sqrt{\log (kn)}.
\end{align*}
A symmetric argument shows that $|\Delta_{j,t}|\leq 8\theta_j\sqrt{\log (kn)}$ for $n-a_j\leq t\leq n-1$.  Combining the above bounds on $|\Delta_{j,t}|$, we see that for $j$ satisfying $n\tau q_j \geq 24\log (kn)$ and all $t\in[n-1]$, we have that
\begin{equation}
\label{Eq:LargeProb}
\mathbb{P}\bigl(|\Delta_{j,t}| > 7\sqrt{6}\theta_j\sqrt{\log (kn)}\bigr) \leq \mathbb{P}(\mathcal{A}_{j,t}^{\mathrm{c}}) \leq \frac{4}{k^2n^2}.
\end{equation}

We now turn our attention to $j$ satisfying $n\tau q_j < 24\log (kn)$. If $q_j = 0$, then $\Delta_{j,t} = 0$. So we may assume $q_j>0$. Define
\[
\epsilon_j := \frac{24\log (kn)}{n\tau q_j},
\]
so that $\epsilon_j >1$.  For $j \in S$, consider the event
\[
  \mathcal{B}_{j} := \biggl\{ \max\biggl(\frac{L_{j,z}}{zq_j}, \frac{R_{j,n-z}}{(n-z)q_j}\biggr) \leq 1 + \epsilon_j \biggr\}.
\]
By Lemma~\ref{Lem:Bernstein} again, we have
\[
\mathbb{P}(\mathcal{B}_{j}^{\mathrm{c}}) \leq 2 \exp\biggl\{-\frac{\epsilon_j^2n\tau q_j}{2(1+\epsilon_j/3)}\biggr\} \leq 2e^{-9\log(kn)} = \frac{2}{(kn)^9}.
\]
On $\mathcal{B}_{j}$, we have
\[
(A_\Omega)_{j,t} \leq (A_\Omega)_{j,z} \leq \theta_j\sqrt{\min\{L_{j,z}, R_{j,n-z}\}}  \leq \theta_j \sqrt{(1+\epsilon_j)n\tau q_j} \leq \theta_j\sqrt{48\log (kn)}.
\]
On the other hand,
\[
\sqrt{q_j} A_{j,t} \leq \sqrt{q_j} A_{j,z} \leq \sqrt{q_j} \min\{\sqrt{z}, \sqrt{n-z}\}\theta_j = \theta_j \sqrt{n\tau q_j} \leq \theta_j \sqrt{24\log (kn)}.
\]
Consequently, when $n\tau q_j < 24\log (kn)$, we have 
\begin{equation}
\label{Eq:SmallProb}
\mathbb{P}\bigl(|\Delta_{j,t}| > \theta_j\sqrt{48\log(kn)}\bigr) \leq \mathbb{P}\bigl((A_\Omega)_{j,t}  > \theta_j\sqrt{48\log(kn)}\bigr) \leq \mathbb{P}(\mathcal{B}_j^{\mathrm{c}}) \leq \frac{2}{(kn)^9}.
\end{equation}
The first claim follows from~\eqref{Eq:LargeProb} and~\eqref{Eq:SmallProb}.  It now follows that
\[
\mathbb{P}\bigl(\|\Delta\|_{\mathrm{F}} > 7\sqrt{6}\|\theta\|_2\sqrt{n\log(kn)}\bigr) \leq \sum_{j\in S}\sum_{t=1}^{n-1} \mathbb{P}\bigl(|\Delta_{j,t}| > 7\sqrt{6}\theta_j\sqrt{\log(kn)}\bigr) \leq \frac{4}{kn},
\]
as desired.
\end{proof}

\begin{proof}[Proof of Proposition~\ref{Prop:SineAngle}]
Let $v \in \mathbb{S}^{p-1}$ denote the leading left singular vector of $A_\Omega$ and let $\sigma_1 \geq \sigma_2 \geq 0$ denote the two largest singular values of $A_\Omega$.  We start by controlling the angle between $\hat v$ and $v$. Write $\Delta:=A_{\Omega} - (\mathrm{diag}\sqrt{\boldsymbol{q}}) A \in \mathbb{R}^{p \times (n-1)}$ as in the proof of Proposition~\ref{Prop:Delta}.  Since $A = \theta\gamma^\top$, we have $(\mathrm{diag}\sqrt{\boldsymbol{q}}) A = (\theta\circ \sqrt{\boldsymbol{q}})\gamma^\top$. Hence, by Weyl's inequality \citep[e.g.][Corollary~IV.4.9]{StewartSun1990}, we obtain
\[
\sigma_1-\sigma_2 \geq \|\theta\|_{2,\boldsymbol{q}}\|\gamma\|_2 - 2\|\Delta\|_{\mathrm{op}} \geq \frac{n\tau \|\theta\|_{2,\boldsymbol{q}}}{4} - 2\|\Delta\|_{\mathrm{F}},
\]
where the final bound uses \citet[Lemma~3]{wang2016highdimensional}.  By Proposition~\ref{Prop:Delta}, there is an event~$\mathcal{A}$ with probability at least $1-4/(kn)$ such that  $\|\Delta\|_{\mathrm{F}} \leq 7\sqrt{6}\|\theta\|_2\sqrt{n\log(kn)}$. We may assume that $\sqrt{n}\tau \|\theta\|_2\geq 112\|\theta\|_{2,\boldsymbol{q}}\sqrt{6\log(kn)}$, since otherwise, the proposition is trivially true. With this assumption, we have on $\mathcal{A}$ that $\sigma_1-\sigma_2\geq n\tau\|\theta\|_{2,\boldsymbol{q}}/8$. Thus, by Lemma~\ref{Lem:2}, on the event $\mathcal{A}\cap \{\|T_\Omega - A_\Omega\|_\infty \leq \lambda n^{-1/2}\}$, we have that
\begin{equation}
\label{Eq:Angle1}
\sin \angle(\hat{v}, v) \leq \frac{4\lambda \sqrt{k}}{\sigma_1-\sigma_2} \leq \frac{32\lambda \sqrt{k}}{n\tau \|\theta\|_{2,\boldsymbol{q}}}.
\end{equation}
On the other hand, by \citet[Theorem~1.4]{Wang2016} (an extension of \citet[][Corollary~1]{YuWangSamworth2015}), on~$\mathcal{A}$, we also have that
\begin{equation}
\label{Eq:Angle2}
\sin\angle(v, \theta\circ \sqrt{\boldsymbol{q}}) \leq \frac{4\|\Delta\|_{\mathrm{op}}}{n\tau\|\theta\|_{2,\boldsymbol{q}}/4} \leq \frac{112\|\theta\|_{2}}{\tau\|\theta\|_{2,\boldsymbol{q}}}\sqrt\frac{6\log(kn)}{n}.
\end{equation}
By the triangle inequality, we deduce from~\eqref{Eq:Angle1} and~\eqref{Eq:Angle2} that on $\mathcal{A} \cap \{\|T_\Omega- A_\Omega\|_\infty \leq \lambda n^{-1/2}\}$,
\[
\sin \angle(\hat{v}, \theta\circ \sqrt{\boldsymbol{q}}) \leq \frac{32\lambda \sqrt{k}}{n\tau \|\theta\|_{2,\boldsymbol{q}}} + \frac{112\|\theta\|_{2}}{\tau\|\theta\|_{2,\boldsymbol{q}}}\sqrt\frac{6\log(kn)}{n}.
\]
The proposition follows on observing that
\begin{align*}
  \mathbb{P}(\mathcal{A}^{\mathrm{c}} \cup \{\|T_\Omega - A_\Omega\|_\infty > \lambda n^{-1/2}\}) &\leq \frac{4}{kn} + \sum_{j=1}^p \sum_{t=1}^{n-1} \mathbb{P}(|(T_{\Omega})_{j,t} - (A_\Omega)_{j,t}| > \lambda n^{-1/2})\\
  & \leq \frac{4}{kn} + pne^{-\lambda^2/(2n\sigma^2)} \leq \frac{6}{kn},
\end{align*}
where the penultimate inequality uses the fact that $(A_\Omega)_{j,t} - (T_{\Omega})_{j,t} \mid \Omega \sim N(0,\sigma^2)$ for all $t\in[n-1]$ and $j\in[p]$ such that $L_{j,t}R_{j,n-t}\neq 0$, and is equal to 0 when $L_{j,t}R_{j,n-t} = 0$.
\end{proof}

\subsection{Proof of Theorem~\ref{Thm:SlowRate}}
\begin{proof}[Proof of Theorem~\ref{Thm:SlowRate}]
Recall from Algorithm~\ref{Algo:MissInspectVariant} that $n_1 = n/2$, and for $\ell \in \{1,2\}$, let $\Omega^{(\ell)} \in \{0,1\}^{p \times n_1}$, $X^{(\ell)} \in \mathbb{R}^{p \times n_1}$ and $X_{\Omega}^{(\ell)} \in \mathbb{R}^{p \times n_1}$ denote the matrices formed from the $n_1$ odd columns (when $\ell=1$) and the $n_1$ even numbered columns (when $\ell=2$) of $\Omega$, $X$ and $X_\Omega = X\circ \Omega$ respectively. 
  For $\ell \in \{1,2\}$, let $T_{\Omega}^{(\ell)} := \mathcal{T}^{\mathrm{Miss}}(X_{\Omega}^{(\ell)},\Omega^{(\ell)}) \in \mathbb{R}^{p \times (n_1-1)}$.  By Proposition~\ref{Prop:SineAngle}, the output $\hat{v}$ of Algorithm~\ref{Algo:MissInspect} with inputs $T_{\Omega}^{(1)}$ and $\lambda$ satisfies
  \begin{equation}
    \label{Eq:SinAngle}
    \mathbb{P}\biggl\{ \sin \angle (\hat{v}, \theta \circ \sqrt{\boldsymbol{q}} ) > \frac{32\lambda \sqrt{k}}{n_1\tau\|\theta\|_{2,\boldsymbol{q}}} + \frac{112\|\theta\|_2}{\tau\|\theta\|_{2,\boldsymbol{q}}}\sqrt\frac{6\log(kn_1)}{n_1} \biggr\} \leq \frac{6}{kn_1} = \frac{12}{kn}.
  \end{equation}
  We can therefore find a universal constant $C' > 0$ such that whenever~\eqref{Eq:C'} holds, we have that the event $\mathcal{A} := \bigl\{\sin \angle (\hat{v}, \theta \circ \sqrt{\boldsymbol{q}} ) \leq 1/2\bigr\}$ has probability at least $1-12/(kn)$. 

Writing $\mu^{(2)} := \mathbb{E}(X^{(2)}) \in \mathbb{R}^{p \times n}$, let $A^{(2)} = \mathcal{T}(\mu^{(2)})$ and $A_\Omega^{(2)} = \mathcal{T}^{\mathrm{Miss}}(\mu^{(2)}\circ\Omega^{(2)},\Omega^{(2)})$.  Our main decomposition of interest here is
  \[
T_\Omega^{(2)} = (\mathrm{diag}\sqrt{\boldsymbol{q}})A^{(2)} + \Delta^{(2)} + E_\Omega^{(2)},
\]
where $\Delta^{(2)} := A_\Omega^{(2)} - (\mathrm{diag}\sqrt{\boldsymbol{q}})A^{(2)}$ and $E_\Omega^{(2)} := T_\Omega^{(2)} - A_\Omega^{(2)}$.  Since Algorithm~\ref{Algo:MissInspectVariant} remains the same if we replace $\hat{v}$ in Step~\ref{Step:7} with $-\hat{v}$, we may assume without loss of generality that $\hat{v}^\top(\theta\circ\sqrt{\boldsymbol{q}}) \geq 0$. Since $(\mathrm{diag}\sqrt{\boldsymbol{q}})A^{(2)} = (\theta \circ \sqrt{\boldsymbol{q}})\gamma^{(2)\top}$, where $\gamma^{(2)} = (\gamma_1^{(2)},\ldots,\gamma_{n_1-1}^{(2)})^\top \in \mathbb{R}^{n_1-1}$
\[
  \gamma_t^{(2)} := \left\{ \begin{array}{ll} \sqrt{\frac{t}{(n_1-t)n_1}}(n_1-z/2) & \mbox{if $t \leq z/2$,} \\
                        \sqrt{\frac{(n_1-t)}{tn_1}} (z/2) & \mbox{if $t > z/2$,} \end{array} \right.
\]
we have $\bigl(\hat{v}^\top(\diag\sqrt{\boldsymbol{q}})A^{(2)}\bigr)_{t} \geq 0$ for all $t\in[n_1-1]$. On the event $\mathcal{A}$, we have 
\begin{equation}
\label{Eq:Thm1tmp1}
\bigl(\hat{v}^\top(\diag\sqrt{\boldsymbol{q}})A^{(2)}\bigr)_{z/2} \geq \frac{\sqrt{3}}{2}\|\theta\|_{2,\boldsymbol{q}}\gamma^{(2)}_{z/2} \geq \frac{\sqrt{3}}{4}\|\theta\|_{2,\boldsymbol{q}}\sqrt{n\tau}.
\end{equation}
Observe that for every $t \in [n_1-1]$, we have
\[
  (\hat{v}^\top E_{\Omega}^{(2)})_t \mid \Omega^{(2)} \sim N\bigl(0,\sigma^2\|\hat{v}_{J_t}\|_2^2\bigr),
\]
where $J_t := \bigl\{j \in [p]: \min\bigl(\sum_{r=1}^t (\Omega^{(2)})_{j,r},\sum_{r=t+1}^{n_1} (\Omega^{(2)})_{j,r}\bigr) > 0\bigr\}$. Since $\|\hat{v}_{J_t}\|_2 \leq 1$, we deduce that $(\hat{v}^\top E_\Omega^{(2)})_{t}$ is stochastically dominated by $N(0,\sigma^2)$. Hence, together with the first conclusion of Proposition~\ref{Prop:Delta} and a union bound, there exists an event $\mathcal{B}$ with probability at least $1-4/(kn_1) - 1/n_1$ such that on $\mathcal{B}$ we have
\begin{equation}
\label{Eq:Thm1tmp2}
\max_{t\in [n_1-1]} |(\hat v^\top \Delta^{(2)})_t| \leq 7\sqrt{6}\|\theta\|_2\sqrt{\log(kn)} \quad \text{and} \quad \max_{t\in[n_1-1]} |(\hat v^\top E_\Omega^{(2)})_t| \leq 2\sigma \sqrt{\log n}.
\end{equation}
Combining~\eqref{Eq:Thm1tmp1} and~\eqref{Eq:Thm1tmp2}, and by increasing the universal constant $C' > 0$ if necessary, we have by~\eqref{Eq:C'} that on $\mathcal{A}\cap\mathcal{B}$, 
\begin{align*}
\bigl(\hat{v}^\top T_\Omega^{(2)}\bigr)_{z/2} &= \bigl(\hat{v}^\top(\diag\sqrt{\boldsymbol{q}})A^{(2)}\bigr)_{z/2} + (\hat{v}^\top E_{\Omega}^{(2)})_{z/2} + (\hat v^\top \Delta^{(2)})_{z/2}\\
&\geq  \max\Bigl\{0, \, \max_{t\in[n_1-1]} \bigl\{ -(\hat{v}^\top E_{\Omega}^{(2)})_{t} - (\hat v^\top \Delta^{(2)})_{t}\bigr\} \Bigr\} > \max_{t\in[n_1-1]} \bigl(-\hat{v}^\top T_\Omega^{(2)}\bigr)_t.
\end{align*}
In particular, on $\mathcal{A}\cap\mathcal{B}$, we have from the definition of $\hat z$ that $(\hat{v}^\top T_{\Omega}^{(2)})_{\hat{z}/2} \geq (\hat{v}^\top T_{\Omega}^{(2)})_{z/2} \geq 0$, so on this event we have the basic inequality
\begin{align}
  \label{Eq:ProjectedDecomp}
  \bigl(\hat{v}^\top (\mathrm{diag}\sqrt{\boldsymbol{q}})A^{(2)}\bigr)_{z/2} &- \bigl(\hat{v}^\top (\mathrm{diag}\sqrt{\boldsymbol{q}})A^{(2)}\bigr)_{\hat{z}/2}\nonumber\\
  &\qquad  \leq \bigl|(\hat{v}^\top E_\Omega^{(2)})_{z/2} - (\hat{v}^\top E_\Omega^{(2)})_{\hat{z}/2}\bigr| + \bigl|(\hat{v}^\top \Delta^{(2)})_{z/2} - (\hat{v}^\top \Delta^{(2)})_{\hat{z}/2}\bigr|.
\end{align}
By \citet[][Lemma~7]{wang2016highdimensional} on the event $\mathcal{A}\cap\mathcal{B}$, for every $t \in [n_1-1]$, we have
\begin{align}
\label{Eq:Ingredient0}
  \bigl(\hat{v}^\top (\mathrm{diag}\sqrt{\boldsymbol{q}})A^{(2)}\bigr)_{z/2} - \bigl(\hat{v}^\top (\mathrm{diag}\sqrt{\boldsymbol{q}})A^{(2)}\bigr)_{t} &= \bigl|\hat{v}^\top(\sqrt{\boldsymbol{q}} \circ \theta)\bigr|(\gamma_{z/2} - \gamma_t) \nonumber \\
  &\geq \frac{\sqrt{3}}{2}\|\theta\|_{2,\boldsymbol{q}} \cdot \frac{2}{3\sqrt{6}}\min\biggl(\frac{|z/2-t|}{\sqrt{n_1\tau}},\frac{\sqrt{n_1\tau}}{2}\biggr) \nonumber \\
  &= \frac{1}{3\sqrt{2}} \|\theta\|_{2,\boldsymbol{q}}\min\biggl(\frac{|z/2-t|}{\sqrt{n_1\tau}},\frac{\sqrt{n_1\tau}}{2}\biggr).
\end{align}
Combining~\eqref{Eq:ProjectedDecomp},~\eqref{Eq:Ingredient0} and \eqref{Eq:Thm1tmp2}, we then have on $\mathcal{A}\cap\mathcal{B}$ that, 
\begin{equation}
\label{Eq:Combined}
\frac{1}{3\sqrt{2}} \|\theta\|_{2,\boldsymbol{q}}\min\biggl(\frac{|\hat{z}-z|}{2\sqrt{n_1\tau}},\frac{\sqrt{n_1\tau}}{2}\biggr) \leq 4\sigma\sqrt{\log n} + 14\sqrt{6}\|\theta\|_2 \sqrt{\log(kn)}.
\end{equation}
For $C' \geq 84\sqrt{3}$, we have by~\eqref{Eq:C'} that
\[
\frac{24\sqrt{2}\sigma}{\|\theta\|_{2,\boldsymbol{q}}}\sqrt\frac{\log n}{n_1\tau} + \frac{168\sqrt{3}\|\theta\|_2}{\|\theta\|_{2,\boldsymbol{q}}}\sqrt\frac{\log(kn)}{n_1\tau} < \frac{2C'}{\tau}\sqrt\frac{\log(pn)}{n}\biggl(\frac{\sigma\sqrt{k}}{\|\theta\|_{2,\boldsymbol{q}}}+\frac{\|\theta\|_2}{\|\theta\|_{2,\boldsymbol{q}}}\biggr) \leq 1,
\]
which means that the minimum on the left-hand side of~\eqref{Eq:Combined} must be achieved by the first term. We therefore deduce with probability at least $\mathbb{P}(\mathcal{A}\cap\mathcal{B})\geq 1 -22/n$ that,
\[
\frac{|\hat{z}-z|}{n\tau} \leq \frac{24\sigma\sqrt{\log n}+84\sqrt{6}\|\theta\|_2 \sqrt{\log(kn)}}{\|\theta\|_{2,\boldsymbol{q}}\sqrt{n\tau}} \leq 84\sqrt{6}\biggl(\frac{\sigma}{\|\theta\|_{2,\boldsymbol{q}}} + \frac{\|\theta\|_2}{\|\theta\|_{2,\boldsymbol{q}}}\biggr)\sqrt\frac{\log(kn)}{n\tau},
\]
as desired.
\end{proof}

\subsection{Proof of Theorem~\ref{Thm:FastRate}}
The proof of Theorem~\ref{Thm:FastRate} will make use of the following propositions.

\begin{proposition}
\label{Prop:DeltaDiff}
Let $(X,\Omega) \sim P_{n,p,z,\theta,\sigma,\boldsymbol{q}}$, let $A = (A_{j,t}) = \mathcal{T}\bigl(\mathbb{E}(X)\bigr) \in \mathbb{R}^{p \times (n-1)}$ and let $A_\Omega = (A_\Omega)_{j,t} = \mathcal{T}^{\mathrm{Miss}}\bigl(\mathbb{E}(X)\circ\Omega,\Omega\bigr) \in \mathbb{R}^{p \times (n-1)}$.  Write $\Delta = (\Delta_{j,t}) = A_{\Omega} - (\mathrm{diag}\sqrt{\boldsymbol{q}}) A \in \mathbb{R}^{p \times (n-1)}$, fix $v = (v_1,\ldots,v_p)^\top \in\mathbb{S}^{p-1}$ and let $\tau:=n^{-1}\min\{z,n-z\}$.  For any given $\delta \in (0,1]$, if $n\tau \min_{j\in[p]}q_j\geq 60k\log(12p/\delta)$, then for $t$ satisfying $|z-t| \leq n\tau/50$, we have with probability at least $1-\delta$ that
  \[
    \bigl|(v^\top \Delta)_z-(v^\top \Delta)_t\bigr| \leq  \frac{2\|\theta\|_{2,\boldsymbol{q}}|z\!-\!t|}{9\sqrt{n\tau}}+\sqrt\frac{2|z\!-\!t|\sum_{j\in[p]} v_j^2\theta_j^2  \log(12/\delta)}{n\tau}+\frac{4\log(12p/\delta)}{3\sqrt{n\tau}}\max_{j\in[p]} \frac{|v_j\theta_j|}{q_j^{1/2}}.
  \]
\end{proposition}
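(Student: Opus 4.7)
The plan is to exploit the fact that $\Delta_{j,z}-\Delta_{j,t}$ depends on $\Omega$ only through three \emph{independent} groups of Bernoulli variables, and to Taylor-expand the MissCUSUM functional around the point where these sums equal their means.  Without loss of generality, assume $t<z$ (the case $t>z$ is symmetric by time-reversal).  For each $j\in[p]$, introduce
\[
a_j:=L_{j,t},\qquad b_j:=L_{j,z}-L_{j,t}=\sum_{r=t+1}^{z}\omega_{j,r},\qquad c_j:=R_{j,n-z},
\]
with means $\bar a_j:=tq_j$, $\bar b_j:=(z-t)q_j$ and $\bar c_j:=(n-z)q_j$.  Using the identity $(a+b)(b+c)-ac=b(a+b+c)$, a direct calculation (first for $(A_\Omega)_{j,z}-(A_\Omega)_{j,t}$, then subtracting $\sqrt{q_j}(A_{j,z}-A_{j,t})$) yields
\[
\Delta_{j,z}-\Delta_{j,t}\;=\;\theta_j\bigl[G(a_j,b_j,c_j)-G(\bar a_j,\bar b_j,\bar c_j)\bigr],\quad G(a,b,c):=\frac{b\sqrt{c(a+b+c)}}{\sqrt{b+c}\,\bigl[\sqrt{(a+b)(b+c)}+\sqrt{ac}\,\bigr]},
\]
with the algebraic identity $\theta_j G(\bar a_j,\bar b_j,\bar c_j)=\sqrt{q_j}(A_{j,z}-A_{j,t})$ being precisely what makes $\sqrt{q_j}A$ the correct centring.

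Next, I would introduce the good event $\mathcal{G}:=\bigcap_{j\in[p]}\bigl\{|a_j/\bar a_j-1|\vee|c_j/\bar c_j-1|\leq 1/10\bigr\}$.  Since $\bar a_j\wedge\bar c_j\geq (n\tau/2)q_j\geq 30k\log(12p/\delta)$ under the hypothesis of the proposition, Bernstein's inequality (Lemma~\ref{Lem:Bernstein}) and a union bound over $j\in[p]$ give $\mathbb{P}(\mathcal{G}^{\mathrm{c}})\leq \delta/4$.  Crucially, no multiplicative concentration is imposed on $b_j$, whose mean $\bar b_j$ may be tiny when $|z-t|$ is small.  Because $G$ is almost affine in $b$ at the scale $\bar b+\bar c$, I would Taylor-expand only in the $a$- and $c$-directions around the point $(\bar a_j,b_j,\bar c_j)$, obtaining on $\mathcal{G}$ a decomposition
\[
\Delta_{j,z}-\Delta_{j,t}\;=\;\alpha_j(b_j-\bar b_j)+R_j,
\]
where $\alpha_j=\theta_j\,\partial_b G(\bar a_j,\bar b_j,\bar c_j)$ is the leading coefficient, for which a direct differentiation yields $|\alpha_j|\lesssim|\theta_j|/\sqrt{q_j n\tau}$, and $R_j$ collects the contributions from $(a_j-\bar a_j)$, $(c_j-\bar c_j)$ and higher-order Taylor remainders.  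A careful deterministic bookkeeping on $\mathcal{G}$, combined with the hypothesis $|z-t|\leq n\tau/50$, shows that $\bigl|\sum_{j}v_j R_j\bigr|\leq \tfrac{2\|\theta\|_{2,\boldsymbol{q}}|z-t|}{9\sqrt{n\tau}}$, producing the first term of the claimed bound.

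Finally, I would apply Bernstein's inequality to
\[
\sum_{j\in[p]} v_j\alpha_j(b_j-\bar b_j)\;=\;\sum_{j\in[p]}\sum_{r=t+1}^{z} v_j\alpha_j(\omega_{j,r}-q_j),
\]
a centered sum of $p|z-t|$ independent random variables with uniform bound $M:=\max_j|v_j\alpha_j|\leq \max_j\tfrac{|v_j\theta_j|}{q_j^{1/2}\sqrt{n\tau}}$ and total variance $V:=\sum_j v_j^2\alpha_j^2\,\bar b_j(1-q_j)\leq \tfrac{|z-t|}{n\tau}\sum_j v_j^2\theta_j^2$, with absolute constants absorbed so as to match the constants in the statement.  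Bernstein's inequality with failure probability $\delta/4$ then produces the second and third terms of the bound, and combining via the triangle inequality with a union bound gives the total failure probability $\leq \delta$.  The main technical obstacle is carrying out the Taylor expansion in just the $a$- and $c$-directions with the correct constants: $a_j$ and $c_j$ admit relative (multiplicative) control on $\mathcal{G}$, but $b_j$ must be kept in absolute form because $\bar b_j$ may be arbitrarily small.  One therefore expands at $(\bar a_j,b_j,\bar c_j)$ rather than at the full mean point $(\bar a_j,\bar b_j,\bar c_j)$, and verifies that the induced remainder $R_j$ contributes only an $O(|z-t|/\sqrt{n\tau})$ bias matching the first term of the claim.
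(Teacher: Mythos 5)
Your decomposition is in essence the same as the paper's: your leading term $\alpha_j(b_j-\bar b_j)$ is the paper's $D_{1,j}$ (handled there, as you propose, by Bernstein's inequality with exactly the variance and sup-norm proxies you compute), and your $R_j$ collects what the paper calls $D_{2,j}$, $D_{3,j}$ and $D_{4,j}$. The gap is in your treatment of $R_j$. You claim that on the event $\mathcal{G}$, which only gives multiplicative control of $a_j=L_{j,t}$ and $c_j=R_{j,n-z}$, a purely \emph{deterministic} bookkeeping yields $\bigl|\sum_j v_jR_j\bigr|\leq \tfrac{2\|\theta\|_{2,\boldsymbol{q}}|z-t|}{9\sqrt{n\tau}}$. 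But $R_j$ still contains the random variable $b_j=\sum_{r=t+1}^z\omega_{j,r}$, both linearly (the cross term $\theta_j b_j\bigl[\tfrac12\sqrt{a_j^{-1}+c_j^{-1}}-\tfrac12 q_j^{-1/2}\sqrt{t^{-1}+(n-z)^{-1}}\bigr]$, up to lower-order corrections) and quadratically (the Taylor remainder in the $b$-direction, which you implicitly create when you linearise $G$ around $\bar b_j$), and $\mathcal{G}$ says nothing about $b_j$. The only deterministic bound available is $b_j\leq z-t$, and with the cross-term coefficient of size $\asymp 0.1\,|v_j\theta_j|/\sqrt{n\tau q_j}$ this gives
\[
\Bigl|\sum_{j\in[p]} v_jR_j\Bigr|\;\lesssim\;\frac{|z-t|}{\sqrt{n\tau}}\sum_{j\in[p]}\frac{|v_j\theta_j|}{\sqrt{q_j}}\;\leq\;\frac{|z-t|}{\sqrt{n\tau}}\biggl(\sum_{j\in[p]}\frac{\theta_j^2}{q_j}\biggr)^{1/2},
\]
which involves the $1/q_j$-weighted norm rather than $\|\theta\|_{2,\boldsymbol{q}}=\bigl(\sum_j\theta_j^2q_j\bigr)^{1/2}$ and can exceed your claimed bound by a factor as large as $1/\min_j q_j$. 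The whole point of the proposition is that only the $q_j$-weighted norm appears, so this step cannot be waved through.

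To repair it you need a second probabilistic layer, which is exactly what the paper supplies. For the cross term, condition on the $\sigma$-algebra generated by $(\omega_{j,r}:r\notin(t,z])$: the coefficients are then measurable and bounded by $\approx 0.084\,|v_j\theta_j|/\sqrt{n\tau q_j}$ on your $\mathcal{G}$, while the $b_j$'s are independent of it, so a conditional Bernstein bound replaces $b_j$ by $\bar b_j=(z-t)q_j$ up to fluctuations; the factor $q_j$ gained this way is what converts $\sum_j|v_j\theta_j|q_j^{-1/2}\cdot q_j$ into $\leq\|\theta\|_{2,\boldsymbol{q}}$ by Cauchy--Schwarz. For the quadratic-in-$b_j$ remainder one needs a concentration bound on $b_j/\bar b_j$ itself (the paper's $H_{j,(t,z)}$ event), and the resulting $\log^2(12p/\delta)/(n\tau q_j)^{3/2}$ terms are absorbed using the hypothesis $n\tau\min_j q_j\geq 60k\log(12p/\delta)$ together with the $k$-sparsity of $\theta$ — note that your argument never uses $k$, which is a telltale sign that this part of the analysis is missing. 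With these two additions (and the corresponding union bound entering the $\delta$ budget), your outline becomes the paper's proof; without them, the bound on $\sum_j v_jR_j$ does not follow.
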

\begin{proof}
Without loss of generality, we may assume that $t < z$.  For each $j\in[p]$, by two Taylor expansions, there exist $\xi_j, \tilde \xi_j \in [t,z]$ such that 
\begin{align*}
&A_{j,z} - A_{j,t} = \frac{(n-z)\theta_j}{\sqrt{n}}\biggl(\sqrt{\frac{z}{n-z}} - \sqrt{\frac{t}{n-t}}\biggr) 
 \\
 &=\theta_j(z-t)\frac{\sqrt{z^{-1}+(n-z)^{-1}}}{2} + \theta_j(z-t)^2\frac{n^{1/2}(n-z)(n - 4\xi_j)}{8\xi_j^{3/2}(n-\xi_j)^{5/2}}\\
 &=\theta_j(z-t)\frac{\sqrt{t^{-1}+(n-z)^{-1}}}{2}+\theta_j(z-t)^2\biggl\{\frac{n^{1/2}(n-z)(n - 4\xi_j)}{8\xi_j^{3/2}(n-\xi_j)^{5/2}} - \frac{1}{4\tilde\xi_j^2\sqrt{\tilde\xi_j^{-1}+(n-z)^{-1}}}\biggr\}.
\end{align*}
Similarly, by another two Taylor expansions, there exist random variables $\Xi_j, \tilde\Xi_j \in [L_{j,t},L_{j,z}]$ such that
\begin{align*}
(A_\Omega)_{j,z} - (A_\Omega)_{j,t} &= \theta_j(L_{j,z}-L_{j,t})\frac{\sqrt{L_{j,t}^{-1}+R_{j,n-z}^{-1}}}{2} \\
&\qquad + \theta_j(L_{j,z}-L_{j,t})^2\biggl\{\frac{N_j^{1/2}R_{j,n-z}(N_j-4\Xi_j)}{8\Xi_j^{3/2}(N_j-\Xi_j)^{5/2}} - \frac{1}{4\tilde\Xi_j^2\sqrt{\tilde\Xi_j^{-1}+R_{j,n-z}^{-1}}}\biggr\}.
\end{align*}
We write
\begin{align*}
D_{1,j}&:=\frac{\theta_j}{2}\sqrt\frac{t^{-1}+(n-z)^{-1}}{q_j} \bigl\{(L_{j,z}-L_{j,t})-q_j(z-t)\bigr\} \\
D_{2,j}&:=\frac{\theta_j(L_{j,z}-L_{j,t})}{2}\biggl\{\sqrt{L_{j,t}^{-1}+R_{j,n-z}^{-1}}-\sqrt\frac{t^{-1}+(n-z)^{-1}}{q_j}\biggr\} \\
D_{3,j}&:=|\theta_j|q_j^{1/2}(z-t)^2\biggl\{\frac{1}{2}\biggl(\frac{n}{\xi_j(n-\xi_j)}\biggr)^{3/2} +  \frac{1}{4\tilde\xi_j^{3/2}}\biggr\} \\
D_{4,j}&:= |\theta_j|(L_{j,z}-L_{j,t})^2\biggl\{\frac{1}{2}\biggl(\frac{N_j}{\Xi_j(N_j-\Xi_j)}\biggr)^{3/2}+\frac{1}{4\tilde\Xi_j^{3/2}}\biggr\}.
\end{align*}
We then have the bound
\begin{equation}
\bigl|(v^\top \Delta)_z-(v^\top \Delta)_t\bigr| \leq \biggl|\sum_{j=1}^p v_j D_{1,j}\biggr| + \biggl|\sum_{j=1}^p v_j D_{2,j}\biggr| +\sum_{j=1}^p |v_j|D_{3,j}+\sum_{j=1}^p |v_j|D_{4,j}.
\label{Eq:lem4tmp0}
\end{equation}
We control the four terms on the right-hand side  of~\eqref{Eq:lem4tmp0} separately. For the first term, setting $y:=\sqrt{2(z-t)\sum_{j\in[p]}v_j^2\theta_j^2\log(12/\delta)} + (1/3)\max_{j\in[p]}|v_j\theta_j|q_j^{-1/2}\log(12/\delta)$, we consider the event
\[
\mathcal{B}_t := \biggl\{\biggl|\sum_{j=1}^p\frac{v_j\theta_j}{q_j^{1/2}}\sum_{r=t+1}^z(\omega_{j,r}-q_j)\biggr| \leq y\biggr\}.
\]
Since $(\omega_{j,r})_{j \in [p],r\in(t,z]}$ are independent $\mathrm{Bern}(q_j)$ random variables, we have by Lemma~\ref{Lem:Bernstein} that  $\mathbb{P}(\mathcal{B}_t^{\mathrm{c}})\leq \delta/6$. On $\mathcal{B}_t$, we have that
\begin{equation}
\label{Eq:lem4tmp1}
\biggl|\sum_{j=1}^p v_j D_{1,j}\biggr| = \frac{1}{2}\sqrt{t^{-1}+(n-z)^{-1}}\biggl|\sum_{j=1}^p\frac{v_j\theta_j}{q_j^{1/2}}\sum_{r=t+1}^z(\omega_{j,r}-q_j)\biggr| \leq \frac{0.8y}{\sqrt{n\tau}}.
\end{equation}
For the second term on the right-hand side of~\eqref{Eq:lem4tmp0}, let $\mathcal{F}$ denote the $\sigma$-algebra generated by $(\omega_{j,r}: j\in[p], r\in[n], r\notin [t+1,z])$, then $L_{j,z}-L_{j,t}$ is independent of $\mathcal{F}$, whereas
\[
G_j:=\frac{v_j\theta_j}{2}\Bigl\{\sqrt{L_{j,t}^{-1}+R_{j,n-z}^{-1}}-q_j^{-1/2}\sqrt{t^{-1}+(n-z)^{-1}}\Bigr\}
\]
is measurable with respect to $\mathcal{F}$. We can therefore apply Lemma~\ref{Lem:Bernstein} conditional on $\mathcal{F}$ to obtain that there is a event $\mathcal{C}_t$ with $\mathbb{P}(\mathcal{C}_t^{\mathrm{c}}\mid \mathcal{F}) \leq \delta/6$ on which
\begin{align}
\label{Eq:D2jcond}
\biggl|\sum_{j=1}^p v_jD_{2,j}\biggr| &= \biggl|\sum_{j=1}^p G_j\sum_{r=t+1}^z \omega_{j,r} \biggr| \nonumber\\
&\leq (z-t) \biggl|\sum_{j=1}^p G_j q_j\biggr| + \sqrt{2(z-t)\log(12/\delta)\sum_{j=1}^p G_j^2 q_j} + \frac{1}{3}\max_{j\in[p]} |G_j| \log(12/\delta).
\end{align}
For $0\leq a < b \leq n$, define
\begin{equation}
\label{Eq:Hjab}
H_{j,(a,b)}:=\biggl|\frac{L_{j,b}-L_{j,a}}{(b-a)q_j}-1\biggr|.
\end{equation}
We consider the event
\[
\mathcal{A}_{j,t} := \biggl\{\max\bigl(H_{j,(0,t)}, H_{j,(0,z)}, H_{j,(z,n)}\bigr)\leq \frac{1}{5}\biggr\} \cap \biggl\{ H_{j,(t,z)} \leq \sqrt\frac{2\log(12p/\delta)}{(z-t)q_j}+ \frac{\log(12p/\delta)}{3(z-t)q_j}\biggr\}.
\]
By Lemma~\ref{Lem:Bernstein} again, we obtain that
\[
\mathbb{P}(\mathcal{A}_{j,t}^{\mathrm{c}}) \leq 6\exp\biggl\{-\frac{(1/5)^2(1-1/50)n\tau q_j}{2(1+1/15)}\biggr\} +\frac{\delta}{6p}\leq 6e^{-n\tau q_j/60} + \frac{\delta}{6p} \leq \frac{2\delta}{3p},
\]
where we used the assumption $n\tau q_j \geq 60\log(12p/\delta)$ in the final inequality. On $\mathcal{A}_{j,t}$, we have
\[
|G_j| \leq \frac{|v_j\theta_j|}{2q_j^{1/2}}(\sqrt{5/4}-1)\sqrt{t^{-1}+(n-z)^{-1}} \leq \frac{0.084|v_j\theta_j|}{(n\tau q_j)^{1/2}},
\]
where we have used the fact that $t\geq (49/50)n\tau$.
Combining the above inequality with~\eqref{Eq:D2jcond}, on $\cap_{j\in[p]}\mathcal{A}_{j,t} \cap \mathcal{C}_t$, we have by the Cauchy--Schwarz inequality that
\begin{equation}
\label{Eq:lem4tmp2}
\biggl|\sum_{j=1}^p v_j D_{2,j}\biggr| \leq 0.084\frac{\|\theta\|_{2,\boldsymbol{q}}(z-t) + y}{(n\tau)^{1/2}}.
\end{equation}
For the third and fourth terms on the right-hand side of~\eqref{Eq:lem4tmp0}, 
since $|z-t|\leq n\tau/50$, we have 
\begin{align*}
  D_{3,j} &\leq |\theta_j|q_j^{1/2}(z-t)^2 \biggl\{\frac{1}{2}\biggl(\frac{2}{\min(t,n-z)}\biggr)^{3/2} + \frac{1}{4t^{3/2}}\biggr\}
  \leq \frac{1.8 |\theta_j|q_j^{1/2}(z-t)^2}{(n\tau)^{3/2}}.
\end{align*}
Moreover, on $\mathcal{A}_{j,t}$,
\begin{align*}
D_{4,j}&\leq |\theta_j|  q_j^2(z-t)^2(1+H_{j,(t,z)})^2 \biggl\{\frac{1}{2}\biggl(\frac{2}{\min(L_{j,t}, R_{j,n-z})}\biggr)^{3/2} + \frac{1}{4L_{j,t}^{3/2}}\biggr\} \\
&\leq 
 \frac{2.4|\theta_j|q_j^{1/2}(z-t)^2}{(n\tau)^{3/2}}(1+H_{j,(t,z)})^2\\
  &\leq \frac{4.8|\theta_j|q_j^{1/2}(z-t)^2}{(n\tau)^{3/2}}+\frac{60|\theta_j|\log^2(12p/\delta)}{(n\tau q_j)^{3/2}}, 
\end{align*}
where the final step uses the fact that $(1+\sqrt{2a} + a/3)^2\leq 2+25a^2$ for any $a>0$. Therefore, on $\cap_{j\in[p]}\mathcal{A}_{j,t}$, since $|z-t|\leq n\tau/50$ and $n\tau \min_{j\in[p]}q_j \geq 60k\log(12p/\delta)$, we have by the Cauchy--Schwarz inequality again that
\begin{align}
  \label{Eq:lem4tmp3}
  \sum_{j=1}^p |v_j| (D_{3,j}+D_{4,j}) &\leq \frac{6.6\|\theta\|_{2,\boldsymbol{q}}(z-t)^2}{(n\tau)^{3/2}} + \frac{60\log^2(12p/\delta)}{(n\tau)^{3/2}}\sum_{j:\theta_j \neq 0}\frac{|v_j\theta_j|}{q_j^{3/2}}\nonumber\\
  &\leq \frac{0.132\|\theta\|_{2,\boldsymbol{q}}(z-t)}{(n\tau)^{1/2}} + \frac{\log(12p/\delta)}{(n\tau)^{1/2}}\max_{j:\theta_j \neq 0} \frac{|v_j\theta_j|}{q_j^{1/2}}.
\end{align}
Therefore, combining~\eqref{Eq:lem4tmp0}, \eqref{Eq:lem4tmp1}, \eqref{Eq:lem4tmp2} and \eqref{Eq:lem4tmp3}, we have on $\cap_{j\in[p]}\mathcal{A}_{j,t}\cap \mathcal{B}_{t} \cap \mathcal{C}_t$ that 
\begin{align*}
\bigl|(v^\top \Delta)_z-(v^\top \Delta)_t\bigr| &\leq \frac{y}{\sqrt{n\tau}} + \frac{2\|\theta\|_{2,\boldsymbol{q}}(z-t)}{9\sqrt{n\tau}}+ \frac{\log(12p/\delta)}{(n\tau)^{1/2}}\max_{j\in[p]} \frac{|v_j\theta_j|}{q_j^{1/2}}\\
&\hspace{-1.5cm}\leq \frac{2\|\theta\|_{2,\boldsymbol{q}}(z-t)}{9\sqrt{n\tau}}+\sqrt\frac{2(z-t)\sum_{j\in[p]} v_j^2\theta_j^2  \log(12/\delta)}{n\tau}+\frac{4\log(12p/\delta)}{3(n\tau)^{1/2}}\max_{j\in[p]} \frac{|v_j\theta_j|}{q_j^{1/2}}.
\end{align*}
Since $\sum_{j=1}^p \mathbb{P}(\mathcal{A}_{j,t}^{\mathrm{c}}) + \mathbb{P}(\mathcal{B}_t^{\mathrm{c}}) + \mathbb{P}(\mathcal{C}_t^{\mathrm{c}}) \leq \delta$, the proof is complete.
\end{proof}

\begin{proposition}
\label{Prop:EOmegaDiff}
Suppose that $\Omega = (\omega_{j,t})_{j \in [p],t \in [n]}$ and $W = (W_{j,t})_{j \in [p],t \in [n]}$ are independent, with $\omega_{j,t}\sim \mathrm{Bern}(q_j)$ independently and $q_j\in (0,1]$, and with $W_{j,t} \stackrel{\mathrm{iid}}{\sim} N(0,\sigma^2)$. Let $E_\Omega := \mathcal{T}^{\mathrm{Miss}}(W\circ\Omega,\Omega)$, let $z\in[n-1]$ and let $\tau:=n^{-1}\min\{z,n-z\}$. Suppose that $t \in [n-1]$ satisfies  $|z-t| \leq n\tau/2$.  For a fixed $v\in\mathbb{S}^{p-1}$, if $n\tau \min_{j\in[p]}q_j\geq 20\log(11p/\delta)$, then we have for any $\delta \in (0,1]$ that
\[
\mathbb{P}\Biggl\{\bigl|(v^\top E_{\Omega})_z - (v^\top E_{\Omega})_t\bigr| > 70\sigma \sqrt\frac{|z-t|\log(11/\delta) + \log^2(11/\delta)\max_{j\in[p]}v_j^2/q_j}{n\tau} \Biggr\} \leq \delta.
\]
\end{proposition}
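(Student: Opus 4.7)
My core idea is to condition on $\Omega$: conditionally, $(v^\top E_\Omega)_z - (v^\top E_\Omega)_t$ is a linear combination of independent $N(0,\sigma^2)$ variables $W_{j,r}$, and is therefore itself centred Gaussian with a conditional variance $V(\Omega)$ that I can compute explicitly. I would then split the argument into two stages: first, derive a high-probability upper bound on $V(\Omega)$ over the random $\Omega$; second, apply the standard Gaussian tail inequality conditionally on this favourable event. Assume throughout, without loss of generality, that $t<z$.

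To express $V(\Omega)$, I write $(E_\Omega)_{j,s} = \sum_{r=1}^n \omega_{j,r}W_{j,r}\,\phi_s^{(j)}(r)$, where $\phi_s^{(j)}(r) = (L_{j,s}R_{j,n-s}/N_j)^{1/2}/R_{j,n-s}$ for $r>s$ and $-(L_{j,s}R_{j,n-s}/N_j)^{1/2}/L_{j,s}$ for $r\leq s$ (and zero outside the event $L_{j,s}R_{j,n-s}>0$). A direct computation of $\sum_r\omega_{j,r}\phi_z^{(j)}(r)\phi_t^{(j)}(r)$, splitting the sum over $r\leq t$, $t<r\leq z$ and $r>z$, gives
\[
\mathrm{Cov}\bigl((E_\Omega)_{j,z},(E_\Omega)_{j,t}\bigm|\Omega\bigr) = \sigma^2\sqrt{\frac{L_{j,t}R_{j,n-z}}{L_{j,z}R_{j,n-t}}},
\]
so, using the elementary inequalities $\sqrt{xy}\geq xy$ and $1-xy\leq(1-x)+(1-y)$ for $x,y\in[0,1]$ applied with $x = L_{j,t}/L_{j,z}$ and $y = R_{j,n-z}/R_{j,n-t}$, together with the independence of $(E_\Omega)_{j,\cdot}$ across $j$ given $\Omega$,
\[
V(\Omega) = \sigma^2\sum_{j=1}^p v_j^2\biggl\{2-2\sqrt{\frac{L_{j,t}R_{j,n-z}}{L_{j,z}R_{j,n-t}}}\biggr\} \leq 2\sigma^2\sum_{j=1}^p v_j^2\biggl(\frac{M_j}{L_{j,z}}+\frac{M_j}{R_{j,n-t}}\biggr),
\]
where $M_j := L_{j,z}-L_{j,t} = \sum_{r=t+1}^z\omega_{j,r}$.

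Next I would control the right-hand side via two applications of Bernstein's inequality (Lemma~\ref{Lem:Bernstein}). Since $z\geq n\tau$ and $n-t\geq n\tau/2$ (the latter using $|z-t|\leq n\tau/2$), the assumption $n\tau\min_j q_j\geq 20\log(11p/\delta)$ yields, after a union bound over $j\in[p]$, that on an event of probability at least $1-\delta/3$ we have $\min(L_{j,z},R_{j,n-t})\geq c_1 q_j n\tau$ for every $j$. On this event, $V(\Omega)\leq C_2\sigma^2(n\tau)^{-1}\sum_j(v_j^2/q_j)M_j$. The sum $\sum_j(v_j^2/q_j)M_j = \sum_j(v_j^2/q_j)\sum_{r=t+1}^z\omega_{j,r}$ has mean $|z-t|\sum_j v_j^2 = |z-t|$, variance at most $|z-t|\sum_j v_j^4/q_j\leq |z-t|\max_j(v_j^2/q_j)$, and summands bounded by $\max_j v_j^2/q_j$, so a second Bernstein step, combined with the AM--GM inequality $\sqrt{2ab}\leq a+b$ to absorb the cross term, shows that, off a further $\delta/3$-event,
\[
V(\Omega) \leq \frac{C_3\sigma^2}{n\tau}\biggl(|z-t| + \log(1/\delta)\max_{j\in[p]}\frac{v_j^2}{q_j}\biggr).
\]

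Finally, I combine this with the conditional Gaussian tail bound $\mathbb{P}\bigl(|(v^\top E_\Omega)_z-(v^\top E_\Omega)_t|>\sqrt{2V(\Omega)\log(6/\delta)}\bigm|\Omega\bigr)\leq \delta/3$ and take a union bound over the three events to arrive at the claimed inequality. The main technical obstacle will be tracking the multiplicative constants sharply enough to reach the advertised $70$ and $11$: this demands a careful choice of the deviation tolerance in the Bernstein estimates for $L_{j,z}$ and $R_{j,n-t}$, a careful AM--GM treatment of the Bernstein cross term $\sqrt{|z-t|\max_j(v_j^2/q_j)\log(1/\delta)}$, and an economical allocation of the probability budget $\delta$ across the three events; the structural part of the argument is otherwise routine.
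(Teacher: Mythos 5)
Your proposal is correct, and its overall skeleton matches the paper's proof: condition on $\Omega$, bound the conditional variance of $(v^\top E_\Omega)_z-(v^\top E_\Omega)_t$, control the relevant observation counts by Bernstein's inequality, apply Bernstein again to the weighted count $\sum_j (v_j^2/q_j)\sum_{r=t+1}^z\omega_{j,r}$, and finish with a conditional Gaussian tail bound plus a union bound. Where you genuinely differ is in the key variance step. The paper never computes the conditional covariance; instead it decomposes $(E_\Omega)_{j,z}-(E_\Omega)_{j,t}$ algebraically, applies the mean value theorem to the difference of the scaling factors $\sqrt{N_j/(L_{j,\cdot}R_{j,\cdot})}$, invokes positivity of certain correlations, and controls several ratios $H_{j,(a,b)}$ (for the intervals $(0,t)$, $(0,z)$, $(z,n)$, $(t,n)$ and $(t,z)$) on its good event, ending up with a variance bound carrying the constant $1168$. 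You instead observe that the conditional covariance has the closed form $\sigma^2\sqrt{L_{j,t}R_{j,n-z}/(L_{j,z}R_{j,n-t})}$ (which I checked: the cross terms telescope via $L_{j,t}+R_{j,n-t}=N_j$, and it reduces correctly to the Brownian-bridge-type covariance in the fully observed case), so the per-row variance is exactly $2\sigma^2\{1-\sqrt{xy}\}$ with $x=L_{j,t}/L_{j,z}$, $y=R_{j,n-z}/R_{j,n-t}$, and your elementary bound $1-\sqrt{xy}\le(1-x)+(1-y)$ gives $2\sigma^2 M_j(1/L_{j,z}+1/R_{j,n-t})$. This is cleaner, needs only lower bounds on $L_{j,z}$ and $R_{j,n-t}$ (two Bernstein events per row rather than five ratio controls), and in fact yields much smaller numerical constants than the paper's: a quick accounting with deviation $1/2$ in the count events, the AM--GM absorption you describe, and $u=\sqrt{2\log(11/\delta)}$ gives a threshold constant well below $70$ and total failure probability below $\delta$, so the constant-tracking you flag as the main obstacle is not actually a difficulty here. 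Two cosmetic points: your displayed formula for $V(\Omega)$ should be an inequality (or restricted to the non-degenerate event), since when $L_{j,t}=0$ or $R_{j,n-z}=0$ the corresponding entry of $E_\Omega$ is defined to be zero and the per-row variance is then at most, not equal to, $2\sigma^2\{1-\sqrt{xy}\}$; and no independence between your three events is needed (nor available), the union bound you invoke suffices, exactly as in the paper.
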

\begin{proof} 
By symmetry, we may assume without loss of generality that $t < z$.  We note that $(E_{\Omega})_{j,z} - (E_{\Omega})_{j,t}$ is a centred normal random variable conditional on $\Omega$, so we start by looking at its conditional variance. By definition of $\mathcal{T}^{\mathrm{Miss}}$, we have
\begin{align}
  (E_{\Omega})_{j,z} - (E_{\Omega})_{j,t} &= \sqrt\frac{N_j}{L_{j,z}R_{j,n-z}}\biggl(\frac{L_{j,z}}{N_j}\sum_{r=1}^n W_{j,r}\omega_{j,r} - \sum_{r=1}^z  W_{j,r}\omega_{j,r}\biggr) \nonumber\\ 
  &\quad\qquad - \sqrt\frac{N_j}{L_{j,t}R_{j,n-t}} \biggl(\frac{L_{j,t}}{N_j}\sum_{r=1}^n W_{j,r}\omega_{j,r} - \sum_{r=1}^t  W_{j,r}\omega_{j,r}\biggr) \nonumber\\
  &= \sqrt\frac{N_j}{L_{j,z}R_{j,n-z}}\biggl(\frac{L_{j,z}-L_{j,t}}{N_j}\sum_{r=1}^n W_{j,r}\omega_{j,r} - \sum_{r=t+1}^z W_{j,r}\omega_{j,r}\biggr)\nonumber\\
  &\quad\qquad + \biggl(\sqrt\frac{N_j}{L_{j,z}R_{j,n-z}} - \sqrt\frac{N_j}{L_{j,t}R_{j,n-t}}\biggr)\biggl(\frac{L_{j,t}}{N_j}\sum_{r=1}^n W_{j,r}\omega_{j,r} - \sum_{r=1}^t W_{j,r}\omega_{j,r}\biggr).\label{Eq:VarEDiff}
\end{align}
Now, by the mean value theorem, there exists a random variable $\Xi_j \in [L_{j,t}, L_{j,z}]$ such that
\begin{align}
\label{Eq:MVT}
\biggl|\sqrt\frac{N_j}{L_{j,z}R_{j,n-z}} - \sqrt\frac{N_j}{L_{j,t}R_{j,n-t}}\biggr| &\leq (L_{j,z}-L_{j,t})\biggl|\frac{\Xi_j}{N_{j}}-\frac{1}{2}\biggr|\biggl(\frac{N_{j}}{\Xi_j(N_j-\Xi_j)}\biggr)^{3/2} \nonumber\\
&\leq \frac{\sqrt{2}(L_{j,z}-L_{j,t})}{\min(\Xi_j,N_j-\Xi_j)^{3/2}}.
\end{align}
Also, observe that
\begin{equation}
\label{Eq:EqualDiff}
\frac{L_{j,t}}{N_j}\sum_{r=1}^n W_{j,r}\omega_{j,r} - \sum_{r=1}^t W_{j,r}\omega_{j,r} = \sum_{r=t+1}^n W_{j,r}\omega_{j,r} - \frac{R_{j,n-t}}{N_j}\sum_{r=1}^n W_{j,r}\omega_{j,r}.
\end{equation}
Substituting~\eqref{Eq:MVT} and~\eqref{Eq:EqualDiff} into~\eqref{Eq:VarEDiff}, and observing that $\sum_{r=1}^nW_{j,r}\omega_{j,r}$ is positively correlated with each of $\sum_{r=t+1}^z W_{j,r}\omega_{j,r}$, $\sum_{r=1}^t W_{j,r}\omega_{j,r}$ and $\sum_{r=t+1}^n W_{j,r}\omega_{j,r}$, we have that
\begin{align}
\mathrm{Var}\bigl((E_{\Omega})_{j,z} - (E_{\Omega})_{j,t} \bigm| \Omega\bigr) &\leq \frac{2\sigma^2N_j}{L_{j,z}R_{j,n-z}}\biggl(\frac{(L_{j,z}-L_{j,t})^2}{N_j} + L_{j,z}-L_{j,t}\biggr) \nonumber\\
&\qquad +\frac{4\sigma^2(L_{j,z}-L_{j,t})^2}{\min(\Xi_j,N_j-\Xi_j)^3}\min\biggl(\frac{L_{j,t}^2}{N_j}+ L_{j,t}, R_{j,n-t}+\frac{R_{j,n-t}^2}{N_j}\biggr)\nonumber\\
& \leq 4\sigma^2(L_{j,z}-L_{j,t})\biggl(\frac{1}{L_{j,z}}+\frac{1}{R_{j,n-z}}\biggr) \nonumber\\
&\qquad + \frac{8\sigma^2(L_{j,z}-L_{j,t})^2}{\min\{L_{j,t}, R_{j,n-z}\}^2} \max\biggl(\frac{L_{j,z}}{L_{j,t}}, \frac{R_{j,n-t}}{R_{j,n-z}}\biggr).
\label{Eq:VarianceTBC}
\end{align}
Recalling the definition of $H_{j,(a,b)}$ from~\eqref{Eq:Hjab} in the proof of Proposition~\ref{Prop:DeltaDiff}, we consider the event
\[
\mathcal{A}_{j,t}:=\biggl\{\max\{H_{j,(0,z)}, H_{j,(0,t)}, H_{j,(z,n)}, H_{j,(t, n)}\} \leq \frac{1}{2}\biggr\}\\cap\biggl\{ H_{j,(t,z)} \leq \frac{n\tau}{ z-t} \biggr\}.
\]
By Bernstein's inequality (Lemma~\ref{Lem:Bernstein}), we obtain that
\[
\mathbb{P}(\mathcal{A}_{j,t}^{\mathrm{c}}) \leq 8\exp\biggl\{-\frac{(1/2)^2(n\tau/2)q_j}{2(1+1/6)}\biggr\} + \exp\biggl\{-\frac{(n\tau/(z-t))^2(z-t)q_j}{2\bigl(1+n\tau/\{3(z-t)\}\bigr)}\biggr\}\leq 9e^{-n\tau q_j/20},
\]
where we used the fact that $z-t\leq n\tau/2$ in the final inequality.  It therefore follows from~\eqref{Eq:VarianceTBC} that on the event $\mathcal{A}_{j,t}$, 
\begin{align*}
\mathrm{Var}\bigl((E_{\Omega})_{j,z} - (E_{\Omega})_{j,t} \bigm| \Omega\bigr) &\leq \frac{16\sigma^2}{n\tau q_j}(L_{j,z}-L_{j,t}) + \frac{768\sigma^2}{(n\tau q_j)^2}(L_{j,z}-L_{j,t})^2\leq \frac{1168\sigma^2}{n\tau q_j}(L_{j,z}-L_{j,t}).
\end{align*}
Hence, on $\cap_{j\in[p]}\mathcal{A}_{j,t}$, we have 
\[
\mathrm{Var}\bigl\{(v^\top E_{\Omega})_z - (v^\top E_{\Omega})_t \bigm|\Omega\bigr\}\leq \frac{1168\sigma^2}{n\tau} \sum_{j=1}^p \frac{v_j^2}{q_j}(L_{j,z}-L_{j,t})= \frac{1168\sigma^2}{n\tau} \sum_{j=1}^p \frac{v_j^2}{q_j}\sum_{r=t+1}^z \omega_{j,r}.
\]
Now, setting $y:=\sqrt{2(z-t)\sum_{j\in[p]}v_j^4q_j^{-1}\log(11/\delta)} +  (1/3)\max_{j\in[p]}v_j^2q_j^{-1}\log(11/\delta)$, consider the event
\[
\mathcal{B}_t := \biggl\{\sum_{j=1}^p \frac{v_j^2}{q_j}\sum_{r=t+1}^z \omega_{j,r} \leq z-t+y\biggr\}.
\]
We have by Bernstein's inequality (Lemma~\ref{Lem:Bernstein}) that $\mathbb{P}(\mathcal{B}_t^{\mathrm{c}})\leq \delta/11$. Noting that $\sum_{j\in[p]}v_j^4q_j^{-1}\leq \max_{j\in[p]}v_j^2q_j^{-1}$, and using the fact that $a+\sqrt{2ab}+b/3\leq 2(a+b)$ for any $a,b>0$, we have from the Gaussian tail bound that for every $u > 0$,
\begin{align*}
\mathbb{P}\Biggl\{\bigl|(v^\top E_{\Omega})_z &- (v^\top E_{\Omega})_t)\bigr| > 49u\sigma \sqrt\frac{z-t + \log(11/\delta)\max_{j\in[p]}v_j^2/q_j}{n\tau} \Biggr\} \\
&\leq e^{-u^2/2} + \sum_{j=1}^p\mathbb{P}(\mathcal{A}_{j,t}^{\mathrm{c}})+\mathbb{P}(\mathcal{B}_t^{\mathrm{c}})\leq e^{-u^2/2}+ 9\sum_{j=1}^p e^{-n\tau q_j/20} + \frac{\delta}{11}.
\end{align*}
The result follows by taking $u:=\sqrt{2\log(11/\delta)}$ and using the fact that $n\tau \min_{j\in[p]} q_j \geq 20\log(11p/\delta)$. 
\end{proof}

\begin{proof}[Proof of Theorem~\ref{Thm:FastRate}]
We write $z_1:=z/2$ and $n_1:=n/2$. Taking $C,C' > 0$ from Theorem~\ref{Thm:SlowRate}, we may assume that $c \in (0,1/50]$ is small enough that the hypothesis~\eqref{Eq:C'} of Theorem~\ref{Thm:SlowRate} is satisfied when $\rho\leq c$. Hence, by Theorem~\ref{Thm:SlowRate}, there is an event $\mathcal{E}$ with probability at least $1-22/n$ such that
\[
\frac{|\hat z - z|}{n\tau} \leq C\sqrt\frac{\log(kn)}{n\tau}\biggl(\frac{\sigma}{\|\theta\|_{2,\boldsymbol{q}}} + \frac{\|\theta\|_2}{\|\theta\|_{2,\boldsymbol{q}}}\biggr) \leq C\rho.
\]
By further reducing $c > 0$ if necessary, we may assume that on $\mathcal{E}$, and when $\rho\leq c$, we have $|\hat z- z|\leq n\tau/50$.

Let $A^{(2)}, \Delta^{(2)}$ and $E_{\Omega}^{(2)}$ be defined as in the proof of Theorem~\ref{Thm:SlowRate}. With $\hat{v} = (\hat{v}_1,\ldots,\hat{v}_p)^\top \in \mathbb{S}^{p-1}$ as defined in Algorithm~\ref{Algo:MissInspectVariant}, an inspection of the proof of Theorem~\ref{Thm:SlowRate} reveals that on $\mathcal{E}$, we also have for all $t\in[z_1-n_1\tau\rho, z_1+n_1\tau\rho]$ and $\rho \leq c$ that
\begin{equation}
\label{Eq:thm2tmp0}
\bigl(\hat{v}^\top (\mathrm{diag}\sqrt{\boldsymbol{q}})A^{(2)}\bigr)_{z_1} - \bigl(\hat{v}^\top (\mathrm{diag}\sqrt{\boldsymbol{q}})A^{(2)}\bigr)_{t}\geq \frac{|z_1-t| \|\theta\|_{2,{\boldsymbol{q}}}}{3\sqrt{2n_1\tau}}.
\end{equation}
Recall that $\hat{v}$ is measurable with respect to the $\sigma$-algebra generated by the odd-numbered time points, and that $\Delta^{(2)}$ and $E_{\Omega}^{(2)}$ are measurable with respect to the $\sigma$-algebra generated by the even-numbered time points.  By taking the universal constant $C_1 > 0$ in the statement of the theorem to be sufficiently large, we can ensure that the lower bounds on $n\tau\min_{j \in [p]} q_j$ in Propositions~\ref{Prop:DeltaDiff} and~\ref{Prop:EOmegaDiff} are satisfied.  It follows by these propositions that when $\rho \leq c$, for each $t\in[z_1-n_1\tau\rho, z_1+n_1\tau\rho]$, there is an event $\mathcal{A}_t$ of probability at least $1-n^{-2}$ on which both
\begin{align}
\bigl|(\hat v^\top \Delta^{(2)})_{z_1}-(\hat v^\top \Delta^{(2)})_t\bigr| &- \frac{2\|\theta\|_{2,\boldsymbol{q}}|z_1-t|}{9\sqrt{n_1\tau}} \nonumber\\
&\lesssim \sqrt\frac{|z_1-t|\sum_{j\in[p]} \hat v_j^2\theta_j^2  \log n}{n\tau}+
\frac{\log(pn)}{\sqrt{n\tau}}\max_{j\in[p]} \frac{|\hat v_j\theta_j|}{q_j^{1/2}},\label{Eq:thm2tmp1}\\
\bigl|(\hat v^\top E_\Omega^{(2)})_{z_1} -(\hat v^\top E_\Omega^{(2)})_{t}\bigr| &\lesssim \sqrt\frac{|z_1-t|\sigma^2\log n}{n\tau} + \frac{\sigma \log n}{\sqrt{n\tau}} \max_{j\in[p]} \frac{|\hat v_j|}{q_j^{1/2}}.\label{Eq:thm2tmp2}
\end{align}
Combining~\eqref{Eq:thm2tmp0}, \eqref{Eq:thm2tmp1}, \eqref{Eq:thm2tmp2} and the basic inequality as in~\eqref{Eq:ProjectedDecomp}, we have on the event $\mathcal{E}\cap \bigcap_{t\in[z/2-n_1\tau\rho,z/2+n_1\tau\rho]}\mathcal{A}_t$ and with $\rho \leq c$ that 
\begin{equation}
\frac{|\hat z-z| \|\theta\|_{2,{\boldsymbol{q}}}}{\sqrt{n\tau}} \lesssim
\sqrt\frac{|\hat z-z|(\sigma^2+\sum_{j\in[p]} \hat v_j^2\theta_j^2) \log n}{n\tau} 
+ \frac{\log(pn)}{\sqrt{n\tau}}\max_{j\in[p]} \frac{|\hat v_j\theta_j|}{q_j^{1/2}}
+ \frac{\sigma \log n}{\sqrt{n\tau}} \max_{j\in[p]} \frac{|\hat v_j|}{q_j^{1/2}}.\label{Eq:thm2tmp3}
\end{equation}
Define $v=(v_j)_{j\in[p]} \in \mathbb{R}^p$ such that $v_j := \theta_j q_j^{1/2}/\|\theta\|_{2,\boldsymbol{q}}$.  Then we can write 
\[
\hat v = \alpha v + \beta w,
\]
for some unit-length (random) vector $w=(w_j)_{j\in[p]}$ that is orthogonal to $v$ and some $\alpha, \beta \in \mathbb{R}$ such that $\alpha^2+\beta^2=1$.  Moreover, by inspecting the proof of Theorem~\ref{Thm:SlowRate}, we see that on $\mathcal{E}$, we have $|\beta| = \sin \angle (\hat{v},v) \leq \rho$.  Then from~\eqref{Eq:thm2tmp3}, we have on $\mathcal{E}\cap \bigcap_{t\in[z/2-n_1\tau\rho,z/2+n_1\tau\rho]}\mathcal{A}_t$ that
\begin{align*}
|\hat z- z| &\lesssim \frac{\sigma^2+\alpha^2\sum_{j\in[p]}v_j^2\theta_j^2+\beta^2\sum_{j\in[p]}w_j^2\theta_j^2}{\|\theta\|_{2,\boldsymbol{q}}^2}\log n \\
&\qquad + \frac{|\alpha|\max_{j\in[p]}|v_j\theta_j|q_j^{-1/2}+|\beta|\max_{j\in[p]}|w_j\theta_j|q_j^{-1/2}}{\|\theta\|_{2,\boldsymbol{q}}}\log(pn)\\
&\qquad + \frac{\sigma(|\alpha|\max_{j\in[p]}|v_j|q_j^{-1/2} + |\beta|\max_{j\in[p]}|w_j|q_j^{-1/2})}{\|\theta\|_{2,\boldsymbol{q}}}\log n\\
&\lesssim \frac{\sigma^2\log n}{\|\theta\|_{2,\boldsymbol{q}}^2} 
+ \frac{\|\theta\|_{4,\boldsymbol{q}}^4\log n}{\|\theta\|_{2,\boldsymbol{q}}^4} 
+ \frac{\rho^2\|\theta\|_\infty^2\log n}{\|\theta\|_{2,\boldsymbol{q}}^2}\\
&\qquad + \frac{\|\theta\|_\infty^2\log(pn)}{\|\theta\|_{2,\boldsymbol{q}}^2} 
+ \frac{\rho \max_{j\in[p]}|\theta_j|q_j^{-1/2}\log(pn)}{\|\theta\|_{2,\boldsymbol{q}}}\\
&\qquad + \frac{\|\theta\|_\infty\sigma\log n}{\|\theta\|_{2,\boldsymbol{q}}^2}+\frac{\rho\sigma\log n}{\|\theta\|_{2,\boldsymbol{q}}\min_{j\in[p]}q_j^{1/2}}\\
&\lesssim \frac{\bigl(\sigma^2+\|\theta\|_\infty^2\bigr)\log (pn)}{\|\theta\|_{2,\boldsymbol{q}}^2} + \frac{\rho(\sigma+\|\theta\|_\infty)\log(pn)}{\|\theta\|_{2,\boldsymbol{q}}\min_{j\in[p]}q_j^{1/2}} \lesssim \frac{\bigl(\sigma^2+\|\theta\|_2^2\bigr)\log (pn)}{\|\theta\|_{2,\boldsymbol{q}}^2},
\end{align*}
where the final bound uses the definition of $\rho$ and the fact that $n\tau^2 \min_{j\in[p]}q_j \geq C_1 k\log(pn)$. 
The desired result follows since $\mathbb{P}\bigl(\mathcal{E}\cap \bigcap_{t\in[z/2-n_1\tau\rho,z/2+n_1\tau\rho]}\mathcal{A}_t\bigr) \geq 1 - 22/n - (2n_1\tau\rho+1)/n^2 \geq 1-23/n$.
\end{proof}

\subsection{Proof of Theorem~\ref{Thm:LowerBound}}
\begin{proof}[Proof of Theorem~\ref{Thm:LowerBound}]
For notational simplicity, we abbreviate $P_{n,p,z,\theta,\sigma,\boldsymbol{q}}$ as $P_z$ in this proof, with corresponding expectation operator $E_z$. For any $1\leq z_1 < z_2 \leq n-1$, by Le Cam's two point testing lemma \citep[e.g.][Lemma~1]{Yu1997}, we have that
\begin{equation}
\label{Eq:LeCam}
\inf_{\tilde z \in \tilde{\mathcal{Z}}}\max_{z \in [n-1]} E_z |\tilde z - z| \geq \frac{1}{2}|z_1-z_2|\bigl\{1 - d_{\mathrm{TV}}(P_{z_1}, P_{z_2})\bigr\}.
\end{equation}
By Pinsker's inequality \citep[e.g.][Lemma~15.2]{Wainwright2019}, we have
\begin{align*}
2d_{\mathrm{TV}}^2(P_{z_1},P_{z_2}) \leq \mathrm{KL}(P_{z_1}\,||\,P_{z_2}) &= \mathbb{E}_{P_{z_1}}\biggl[\mathbb{E}_{P_{z_1}}\biggl\{\log\biggl(\frac{dP_{z_1}}{dP_{z_2}}(X,\Omega)\biggr)\biggm|\Omega\biggr\}\biggr]\\
&=\sum_{j=1}^p\sum_{t=z_1+1}^{z_2} \mathbb{E}_{P_{z_1}}\frac{\theta_j^2\omega_{j,t}}{2\sigma^2} = \frac{(z_2-z_1)\|\theta\|_{2,\boldsymbol{q}}^2}{2\sigma^2}.
\end{align*}
Choosing $z_2-z_1=\min\bigl\{\lfloor \sigma^2/\|\theta\|_{2,\boldsymbol{q}}^2\rfloor, n-2\bigr\}$, we have $d_{\mathrm{TV}}(P_{z_1},P_{z_2})\leq 1/2$ and consequently if $\sigma^2\geq \|\theta\|_{2,\boldsymbol{q}}^2$, then by~\eqref{Eq:LeCam},
\begin{equation}
\label{Eq:LowerBound1}
\inf_{\tilde z \in \tilde{\mathcal{Z}}}\max_{z \in [n-1]} E_z |\tilde z - z| \geq \frac{1}{4}\min\biggl\{\biggl\lfloor\frac{\sigma^2}{\|\theta\|_{2,\boldsymbol{q}}^2}\biggr\rfloor, n-2\biggr\} \geq \frac{1}{8}\min\biggl\{\frac{\sigma^2}{\|\theta\|_{2,\boldsymbol{q}}^2}, n\biggr\}.
\end{equation}
On the other hand, if $\|\theta\|_\infty^2 \geq 2M^2 \|\theta\|_{2,\boldsymbol{q}}^2$, then 
\[
\sum_{j:\theta_j\neq 0} q_j \leq \frac{\|\theta\|_{2,\boldsymbol{q}}^2}{\min_{j:\theta_j \neq 0} \theta_j^2} \leq \frac{M^2\|\theta\|_{2,\boldsymbol{q}}^2}{\|\theta\|_\infty^2} \leq 1/2.
\]
Define $\mathcal{S}:= \{j \in [p]:\theta_j \neq 0\}$ and
\[
\mathcal{A} := \bigl\{\bigl((x_{j,t})_{j \in [p],t \in [n]},(\omega_{j,t})_{j\in [p],t \in [n]}\bigr): \omega_{j,t} = 0 \text{ whenever } j \in \mathcal{S} \text{ and } z_1+1 \leq t \leq z_2\bigr\}.
\]
Then the distributions of $P_{z_1}$ given $\mathcal{A}$ and $P_{z_2}$ given $\mathcal{A}$ are identical.  Moreover, $P_{z_1}(\mathcal{A}) = P_{z_2}(\mathcal{A})$.  Thus, for any Borel measurable subset $\mathcal{B}$ of $\mathbb{R}^{n \times p} \times \{0,1\}^{n \times p}$, we have
\[
|P_{z_1}(\mathcal{B}) - P_{z_2}(\mathcal{B})| = \bigl|P_{z_1}(\mathcal{B}\mid \mathcal{A}^{\mathrm{c}}) - P_{z_2}(\mathcal{B}\mid \mathcal{A}^{\mathrm{c}})\bigr| P_{z_1}(\mathcal{A}^{\mathrm{c}}) \leq P_{z_1}(\mathcal{A}^{\mathrm{c}}).
\]
Hence, using the fact that $1-x\geq e^{-2x\log 2}$ for $x \in [0,1/2]$, we have
\begin{align*}
1-d_{\mathrm{TV}}(P_{z_1},P_{z_2}) \geq P_{z_1}(\mathcal{A}) = \prod_{j:\theta_j\neq 0}(1-q_j)^{z_2-z_1}\geq \exp\biggl\{-2(\log 2)(z_2-z_1)\sum_{j:\theta_j\neq 0} q_j\biggr\}.
\end{align*}
Choosing $z_2-z_1=\min\bigl\{\lceil (2\sum_{j:\theta_j\neq 0}q_j)^{-1}\rceil, n-2\bigr\}$, we have $1-d_{\mathrm{TV}}(P_{z_1},P_{z_2})\geq 1/4$, and consequently, on combining with~\eqref{Eq:LeCam} we obtain that
\begin{equation}
\label{Eq:LowerBound2}
\inf_{\tilde z \in \tilde{\mathcal{Z}}}\max_{z\in [n-1]} E_z|\tilde z - z| \geq \frac{1}{8}\min\biggl\{\frac{1}{2\sum_{j:\theta_j\neq 0} q_j}, n-2\biggr\} \geq \frac{1}{16}\min\biggl\{\frac{\|\theta\|_\infty^2}{M^2 \|\theta\|_{2,\boldsymbol{q}}^2}, n\biggr\}.
\end{equation}
By combining~\eqref{Eq:LowerBound1} and~\eqref{Eq:LowerBound2}, and considering the three possible cases of (i) $\sigma^2 \geq \|\theta\|_{2,\boldsymbol{q}}^2 > \|\theta\|_\infty^2/(2M^2)$, (ii) $\|\theta\|_\infty^2/(2M^2)\geq \|\theta\|_{2,\boldsymbol{q}}^2 >\sigma^2$ and (iii) $\min\{\sigma^2, \|\theta\|_\infty^2/(2M^2)\}\geq \|\theta\|_{2,\boldsymbol{q}}^2$, we have
\begin{align*}
\inf_{\tilde z \in \tilde{\mathcal{Z}}}\max_{z \in [n-1]} E_z |\tilde z - z| &\geq \frac{1}{8}\max \biggl\{\frac{\sigma^2}{\|\theta\|_{2,\boldsymbol{q}}^2} \wedge \frac{n}{2},\frac{\|\theta\|_\infty^2}{2M^2 \|\theta\|_{2,\boldsymbol{q}}^2} \wedge \frac{n}{2}\biggr\} \\
&\geq \frac{1}{8}\min\biggl(\max\biggl\{\frac{\sigma^2}{\|\theta\|_{2,\boldsymbol{q}}^2},\frac{\|\theta\|_\infty^2}{2M^2 \|\theta\|_{2,\boldsymbol{q}}^2}\biggr\},\frac{n}{2}\biggr) \\
&\geq \frac{1}{16}\min\biggl(\frac{\sigma^2}{\|\theta\|_{2,\boldsymbol{q}}^2} + \frac{\|\theta\|_\infty^2}{2M^2 \|\theta\|_{2,\boldsymbol{q}}^2},n\biggr),
\end{align*}
as required.
\end{proof}

\appendix

\section{Auxiliary lemmas and proofs}
\label{App:1}
Define the soft-thresholding function $\mathrm{soft}: \mathbb{R}^{p}\times [0,\infty) \rightarrow \mathbb{R}^p$ such that for $v = (v_1,\ldots,v_p)^\top \in\mathbb{R}^p$, we have $\bigl(\mathrm{soft}(v, \lambda)\bigr)_j = \mathrm{sgn}(v_j)\max\{|v_j| - \lambda, 0\}$ for $j\in [p]$. 
\begin{lemma}
\label{Lem:1}
Let $M\in\mathbb{R}^{p\times n}$ and let
\[
(v_*,w_*) \in \argmax_{(v,w)\in\mathbb{B}^p \times \mathbb{B}^n} \bigl\{\langle M, vw^\top\rangle - \lambda\|v\|_1\bigr\}.
\]
If $\|M\|_{2 \rightarrow \infty} < \lambda$, then $v_* = 0$; if $\|M\|_{2 \rightarrow \infty} > \lambda$, then
\begin{equation}
\label{Eq:vstarwstar}
v_* = \frac{\mathrm{soft}(Mw_*, \lambda)}{\|\mathrm{soft}(Mw_*, \lambda)\|_2} \quad \mathrm{and} \quad w_* = \frac{M^\top v_*}{\|M^\top v_*\|_2}.
\end{equation}
Finally, if $\|M\|_{2 \rightarrow \infty} = \lambda$, then either $v_* = 0$ or both $w_* \in \{w \in \mathbb{R}^n:\|w\|_2= 1,\|Mw\|_\infty = \lambda\}$ and
\[
  \mathrm{sgn}\bigl(v_{*,j}\bigr) = \mathrm{sgn}\bigl((Mw_*)_j\bigr)\mathbbm{1}_{\{|(Mw_*)_j| = \lambda\}}
\]
for every $j \in [n]$, where $v_* = (v_{*,1},\ldots,v_{*,p})$.  
\end{lemma}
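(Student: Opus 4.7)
My plan is to exploit the fact that the objective $L(v,w) := v^\top Mw - \lambda\|v\|_1$ is concave in each argument separately, so any joint maximizer on $\mathbb{B}^p \times \mathbb{B}^n$ must be individually optimal in each argument given the other. The two coordinate subproblems solve in closed form. For fixed $v$ with $M^\top v \neq 0$, the linear functional $w \mapsto v^\top Mw$ is maximized over $\mathbb{B}^n$ at $w = M^\top v/\|M^\top v\|_2$. For fixed $w$, I would attach a Lagrange multiplier $\mu \geq 0$ to the constraint $\|v\|_2^2 \leq 1$ and use a subgradient $s \in \partial\|v\|_1$ to write the KKT condition $Mw - \lambda s = \mu v$; a short case split (on the sign of $\mu$, and coordinate-wise on $|(Mw)_j|$ versus $\lambda$) then yields $v_* = 0$ when $\|Mw\|_\infty \leq \lambda$ and $v_* = \mathrm{soft}(Mw, \lambda)/\|\mathrm{soft}(Mw, \lambda)\|_2$ when $\|Mw\|_\infty > \lambda$.

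The key tool for assembling the subproblems into the three cases is the inequality $v^\top Mw \leq \|v\|_1\|Mw\|_\infty \leq \|v\|_1\|M\|_{2\rightarrow\infty}\|w\|_2$. When $\|M\|_{2\rightarrow\infty} < \lambda$, this gives $L(v,w) < 0 = L(0,w)$ for every $v \neq 0$, forcing $v_* = 0$. When $\|M\|_{2\rightarrow\infty} > \lambda$, a test point $(v_0, w_0)$ with $\|Mw_0\|_\infty > \lambda$ and $v_0$ a signed standard basis vector at an $\ell_\infty$-peak coordinate of $Mw_0$ achieves $L(v_0, w_0) > 0$; hence at the optimum $v_* \neq 0$, and (since $M^\top v_* = 0$ would imply $L(v_*,\cdot) \equiv -\lambda\|v_*\|_1 < 0$) also $M^\top v_* \neq 0$. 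The $w$-subproblem then pins down $w_* = M^\top v_*/\|M^\top v_*\|_2$, and $v_*^\top Mw_* > \lambda\|v_*\|_1$ combined with H\"older forces $\|Mw_*\|_\infty > \lambda$, so the $v$-subproblem returns the soft-thresholding identity.

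The boundary case $\|M\|_{2\rightarrow\infty} = \lambda$ is the most delicate. Here the optimum is at least $0$ (via $v=0$) but cannot be strictly positive, since positivity would propagate through the same H\"older step to give $\|Mw_*\|_\infty > \lambda$, contradicting the definition of $\|M\|_{2\rightarrow\infty}$. Thus the optimal value equals $0$; assuming $v_* \neq 0$, the $w$-subproblem again yields $w_* = M^\top v_*/\|M^\top v_*\|_2 \in \mathbb{S}^{n-1}$, and the chain $\lambda\|v_*\|_1 = v_*^\top Mw_* \leq \|v_*\|_1\|Mw_*\|_\infty \leq \lambda\|v_*\|_1$ collapses to equalities, so in particular $\|Mw_*\|_\infty = \lambda$; the coordinate-wise H\"older equality conditions $v_{*,j}(Mw_*)_j = \lambda|v_{*,j}|$ on the nonzero entries of $v_*$ then read off the claimed sign/support relation between $v_*$ and $Mw_*$. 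The main obstacle I anticipate is making the KKT analysis for the nonsmooth $v$-subproblem fully rigorous, in particular tracking the subgradient coordinate-wise and handling the $\mu = 0$ boundary where the norm constraint is inactive; this is a standard but slightly fiddly soft-thresholding derivation.
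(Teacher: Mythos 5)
Your proposal is correct and follows essentially the same route as the paper's proof: the H\"older-type bound $v^\top Mw \le \|v\|_1\|Mw\|_\infty$ settles the cases $\|M\|_{2\rightarrow\infty} < \lambda$ and $\|M\|_{2\rightarrow\infty} = \lambda$ (the latter via the collapse of the chain of inequalities, exactly as in the paper), a test point at a peak coordinate shows the optimum is strictly positive when $\|M\|_{2\rightarrow\infty} > \lambda$ so that $v_*\neq 0$ and $M^\top v_*\neq 0$, and the KKT/soft-thresholding identities then yield~\eqref{Eq:vstarwstar}. The only cosmetic difference is that you obtain the stationarity conditions by block-coordinate optimality of the joint maximiser (solving the $w$- and $v$-subproblems in closed form), whereas the paper writes the joint Lagrangian with both norm constraints simultaneously; the resulting conditions are identical, and your explicit verification that $\|Mw_*\|_\infty>\lambda$ at the optimum is a welcome touch of extra care.
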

\begin{proof}
  For $(v,w)\in\mathbb{R}^p \times \mathbb{R}^n$ with $\|v\|_2 \leq 1, \|w\|_2 \leq 1$, we write
  \[
    f(v,w) := \langle M, vw^\top\rangle - \lambda\|v\|_1
  \]
  for our objective function.  We first note that maximisers exist since $f$ is concave and the constraint set is convex and compact.  Moreover, for $(v,w)\in \mathbb{B}^p \times \mathbb{B}^n$, we have 
  \[
    f(v,w) = v^\top Mw - \lambda\|v\|_1 \leq (\|Mw\|_\infty - \lambda)\|v\|_1 =: g(v,w).
\]
If $\|M\|_{2 \rightarrow \infty} < \lambda$, then $f(v,w) \leq g(v,w) \leq 0$, with both equalities holding if and only if $v = 0$, and we deduce that $v_* = 0$.  If $\|M\|_{2 \rightarrow \infty} = \lambda$, then again $f(v,w) \leq g(v,w) \leq 0$ with both equalities holding if and only if either $v=0$, or $v^\top Mw = \|Mw\|_\infty = \lambda$; the latter case yields the constraints on $w_*$ and $v_*$ given in the statement.  Finally, we consider the case where $\|M\|_{2 \rightarrow \infty} > \lambda$.  We can find $(v_0,w_0)\in\mathbb{S}^{p-1}\times\mathbb{S}^{n-1}$ such that $\|Mw_0\|_\infty = \|M\|_{2\to\infty}$ and $v_0^\top Mw_0 = \|v_0\|_1\|Mw_0\|_\infty$, and consequently $f(v_*,w_*) \geq f(v_0,w_0)=g(v_0,w_0)>0$. In particular, we may assume that $M^\top v_*\neq 0$ in the remainder of the proof.
Define the Lagrangian $\mathcal{L}: \mathbb{R}^{p}\times \mathbb{R}^{n}\times [0,\infty) \times [0,\infty) \to \mathbb{R}$ by
\[
    \mathcal{L}(v,w,\alpha,\beta) := \langle M, vw^\top \rangle  - \lambda \|v\|_1 - \alpha\bigl(\|v\|^2_2-1\bigr) - \beta\bigl(\|w\|^2_2-1\bigr).
\]
By the Karush--Kuhn--Tucker conditions, we have 
\begin{align*}
M^\top v_* - 2\beta w_* &= 0\\
Mw_* - \lambda\eta - 2\alpha v_* &= 0,
\end{align*}
where $\eta = (\eta_1,\ldots,\eta_p)\in[-1,1]^p$ satisfies $\eta_j = \mathrm{sgn}\bigl((v_*)_j\bigr)$ if $(v_*)_j \neq 0$. Therefore, we have $w_* \propto M^\top v_*$ and $v_* \propto \mathrm{soft}(Mw_*, \lambda)$, as desired, since $M^\top v_* \neq 0$.
\end{proof}

\begin{lemma}
\label{Lem:2} 
Suppose that $A, T \in \mathbb{R}^{p\times n}$ satisfy $\|T-A\|_\infty \leq \lambda n^{-1/2}$ for some $\lambda \geq 0$.  Suppose further that $v \in \mathbb{S}^{p-1}(k)$ and $w \in \mathbb{S}^{n-1}$ are respectively the leading left and right singular vectors of $A$, and that
\[
(\hat{v}, \hat{w}) \in \argmax_{(\tilde v,\tilde w)\in\mathbb{S}^{p-1}\times \mathbb{S}^{n-1}} \bigl\{\langle T, \tilde v\tilde w^\top \rangle - \lambda\|\tilde v\|_1\bigr\}.
\]
Let $\delta > 0$ denote the difference between the first and second singular values of $A$. Then
\[
\sin \angle\bigl(\hat{v}, v\bigr) \leq \frac{4\lambda \sqrt{k}}{\delta}.
\]
\end{lemma}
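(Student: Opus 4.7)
The plan is to begin from the optimality of $(\hat v, \hat w)$: comparing the objective at $(\hat v, \hat w)$ with its value at the (feasible) point $(v, w)$ and writing $T = A + (T-A)$ gives
\[
\langle A, vw^\top - \hat v \hat w^\top\rangle \;\leq\; \lambda\bigl(\|v\|_1 - \|\hat v\|_1\bigr) + \langle T - A, \hat v \hat w^\top - vw^\top\rangle.
\]
The two sides of this inequality play distinct roles: the left-hand side encodes the spectrum of $A$ and should be lower bounded by a quantity comparable to $\delta \sin^2\angle(\hat v, v)$, while the right-hand side collects the penalty and the perturbation and should be upper bounded by a quantity comparable to $\lambda \sqrt k \sin\angle(\hat v, v)$.

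For the lower bound, write $A = s_1 vw^\top + E$, where $s_1, s_2$ are the top two singular values of $A$ and the residual $E = A - s_1 vw^\top$ satisfies $Ew = 0$, $v^\top E = 0$ and $\|E\|_{\mathrm{op}} = s_2$. Decomposing $\hat v = (v^\top \hat v) v + \hat v_\perp$ and $\hat w = (w^\top \hat w) w + \hat w_\perp$ and expanding $\langle A, \hat v \hat w^\top\rangle$, all cross terms involving $E$ vanish except $\hat v_\perp^\top E \hat w_\perp$, which Cauchy--Schwarz bounds in absolute value by $s_2 \sin\angle(\hat v, v) \sin\angle(\hat w, w)$. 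After replacing $(\hat v, \hat w)$ by $(-\hat v, -\hat w)$ if necessary so that $a := v^\top \hat v \geq 0$ (and, at the optimum, $b := w^\top \hat w \geq 0$), the elementary inequality $1 - ab \geq \tfrac{1}{2}\bigl((1-a^2) + (1-b^2)\bigr)$ yields
\[
\langle A, vw^\top - \hat v\hat w^\top\rangle \;\geq\; \frac{\delta}{2}\bigl(\sin^2\angle(\hat v, v) + \sin^2\angle(\hat w, w)\bigr).
\]

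For the upper bound, split the perturbation matrix as $\hat v \hat w^\top - vw^\top = (\hat v - v)\hat w^\top + v(\hat w - w)^\top$. The hypothesis $\|T - A\|_\infty \leq \lambda / \sqrt n$ combined with Cauchy--Schwarz applied row-by-row gives $\|(T-A)\hat w\|_\infty \leq \lambda$, while the sparsity $v \in \mathbb{S}^{p-1}(k)$ gives $\|v^\top(T-A)\|_2 \leq \lambda\sqrt k$. Letting $S$ denote the support of $v$ and decomposing $\hat v - v = (\hat v_S - v) + \hat v_{S^c}$, the triangle inequality produces $\lambda(\|v\|_1 - \|\hat v\|_1) \leq \lambda\|v - \hat v_S\|_1 - \lambda\|\hat v_{S^c}\|_1$, and the crucial cancellation occurs when this is added to the noise bound $|(\hat v - v)^\top (T-A) \hat w| \leq \lambda(\|v - \hat v_S\|_1 + \|\hat v_{S^c}\|_1)$: the $\|\hat v_{S^c}\|_1$ contributions exactly cancel, leaving only on-support residuals. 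These are controlled via $\|v - \hat v_S\|_1 \leq \sqrt k \|v - \hat v\|_2 \leq \sqrt{2k}\sin\angle(\hat v, v)$, and analogously $|v^\top(T-A)(\hat w - w)| \leq \lambda\sqrt k \|\hat w - w\|_2 \leq \sqrt{2}\lambda\sqrt k \sin\angle(\hat w, w)$ for the second piece.

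Combining the two sides yields an inequality of the form $\tfrac{\delta}{2}\bigl(\sin^2\angle(\hat v, v) + \sin^2\angle(\hat w, w)\bigr) \lesssim \lambda\sqrt k \bigl(\sin\angle(\hat v, v) + \sin\angle(\hat w, w)\bigr)$, which by Cauchy--Schwarz reduces to a quadratic in $\sqrt{\sin^2\angle(\hat v, v) + \sin^2\angle(\hat w, w)}$; solving it produces $\sin\angle(\hat v, v) \leq 4\lambda\sqrt k / \delta$, with the explicit constant $4$ recovered after careful bookkeeping of the factors of $\sqrt 2$ arising from $\|u - u'\|_2 \leq \sqrt 2 \sin\angle(u, u')$. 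The main obstacle is precisely the $\ell_1$-cancellation in the third step: without it, the perturbation term would depend on $\|\hat v\|_1$, which can be of order $\sqrt p$ for a generic unit vector, and the benefit of the sparsity of $v$ would be entirely lost. It is the combination of (i) the sparsity of the target $v$ and (ii) the optimality of $\hat v$ under the $\ell_1$ penalty that forces $\hat v$ into a Lasso-type cone and restores $\sqrt k$ in place of $\sqrt p$.
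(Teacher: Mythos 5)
Your overall strategy is the same as the paper's: compare the objective at $(\hat v,\hat w)$ and at the feasible point $(v,w)$, lower-bound $\langle A, vw^\top-\hat v\hat w^\top\rangle$ by a curvature-type inequality in the spectral gap, and use the $\ell_1$-penalty to cancel the off-support part of the perturbation term so that only $\sqrt{k}$, not $\sqrt{p}$, survives. Your self-contained derivation of the curvature bound (writing $A=s_1vw^\top+E$ with $v^\top E=0$, $Ew=0$, and using $1-ab\geq\frac{1}{2}\{(1-a^2)+(1-b^2)\}$ together with $\sqrt{(1-a^2)(1-b^2)}\leq\frac{1}{2}\{(1-a^2)+(1-b^2)\}$) is correct and in fact needs no sign condition; it replaces the Wang--Samworth lemma that the paper cites. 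The cancellation step and the bounds $\|(T-A)\hat w\|_\infty\leq\lambda$, $\|v^\top(T-A)\|_2\leq\lambda\sqrt k$ are also fine. However, two points do not hold up as written. First, the sign normalisation: replacing $(\hat v,\hat w)$ by $(-\hat v,-\hat w)$ flips $a:=v^\top\hat v$ and $b:=w^\top\hat w$ simultaneously, so you can make both nonnegative only if $ab\geq 0$; your parenthetical ``at the optimum, $b\geq0$'' is precisely what needs an argument, and without it the conversions $\|v-\hat v\|_2\leq\sqrt2\sin\angle(\hat v,v)$ and $\|w-\hat w\|_2\leq\sqrt2\sin\angle(\hat w,w)$ are not both available. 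The gap is fixable: if $4\lambda\sqrt k\geq\delta$ the lemma is vacuous, while if $2\lambda\sqrt k<\delta$ comparing objective values gives $s_1(1-ab)\leq s_2+2\lambda\sqrt k<s_1$, hence $ab>0$; but this needs to be said.

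Second, and more substantively, your bookkeeping does not deliver the stated constant $4$. With your rank-one split $\hat v\hat w^\top-vw^\top=(\hat v-v)\hat w^\top+v(\hat w-w)^\top$ and the curvature bound in terms of the two sines, the best you obtain is $\frac{\delta}{2}\{\sin^2\angle(\hat v,v)+\sin^2\angle(\hat w,w)\}\leq 2\sqrt2\,\lambda\sqrt k\sin\angle(\hat v,v)+\sqrt2\,\lambda\sqrt k\sin\angle(\hat w,w)$, and solving this quadratic (even optimally, rather than via the Cauchy--Schwarz step you describe, which gives $2\sqrt{10}$) yields $\sin\angle(\hat v,v)\leq(2\sqrt2+\sqrt{10})\lambda\sqrt k/\delta\approx 6\lambda\sqrt k/\delta$; no amount of tracking factors of $\sqrt2$ recovers $4$, because both the triangle-inequality split of the perturbation and the passage from $\|\hat v\hat w^\top-vw^\top\|_{\mathrm F}$ to the pair of sines lose constants. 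The paper avoids this by working throughout with $F:=\|\hat v\hat w^\top-vw^\top\|_{\mathrm F}$: the curvature lemma gives $\frac{\delta}{2}F^2\leq\langle A,vw^\top-\hat v\hat w^\top\rangle$, the noise term is bounded by $\lambda n^{-1/2}\|\hat v\hat w^\top-vw^\top\|_1$ and split by rows, with the on-support part at most $\lambda\sqrt kF$ and the off-support part $\lambda\|\hat v_{S^{\mathrm c}}\|_1(n^{-1/2}\|\hat w\|_1-1)\leq0$ absorbing the penalty, while $\|v_S\|_1-\|\hat v_S\|_1\leq\sqrt kF$; this gives $\frac{\delta}{2}F^2\leq2\lambda\sqrt kF$ and hence $\sin\angle(\hat v,v)\leq F\leq4\lambda\sqrt k/\delta$. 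As it stands your argument proves the bound only with a larger constant; to prove the lemma as stated, adopt the Frobenius-norm bookkeeping (or otherwise avoid the rank-one split), and supply the missing sign argument.
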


\begin{proof}
Let $S := \{ j \in [p]:v_j \neq 0\}$.  By Lemma~2 in the supplementary material of \citet{wang2016highdimensional}, we have
\begin{align}
\label{Eq:Decomp}
    \frac{\delta}{2}\|\hat{v}\hat{w}^\top - vw^\top\|_{\mathrm{F}}^2 &\leq \langle A, vw^\top - \hat{v}\hat{w}^\top \rangle = \langle T, vw^\top - \hat{v}\hat{w}^\top \rangle + \langle A-T, vw^\top - \hat{v}\hat{w}^\top \rangle \nonumber \\
                                                                     &\leq \lambda(\|v\|_1 - \|\hat v\|_1) + \lambda n^{-1/2}\|\hat v\hat w^\top - vw^\top\|_1 \nonumber \\
                                                                     &= \lambda \biggl\{\|v_S\|_1 - \|\hat{v}_S\|_1+ n^{-1/2}\|\hat{v}_S\hat{w}^\top - v_Sw^\top\|_1 + \|\hat{v}_{S^{\mathrm{c}}}\|_1(\|\hat{w}\|_1n^{-1/2}-1)\biggr\} \nonumber \\
  &\leq \lambda \bigl(\|v_S\|_1 - \|\hat{v}_S\|_1+ \sqrt{k}\|\hat{v}\hat{w}^\top - vw^\top\|_{\mathrm{F}}\bigr).
\end{align}
Moreover, writing $w_0 := (\hat{w}+w)/2$ and $\Delta := w-w_0 = (w - \hat{w})/2$, we have
\begin{align}
\label{Eq:Decomp1}
\|\hat{v}\hat{w}^\top - vw^\top\|_{\mathrm{F}}^2 &= \|\hat{v}(w_0 - \Delta)^\top - v(w_0 + \Delta)^\top\|_{\mathrm{F}}^2 = \|(\hat{v} - v)w_0^\top\|_{\mathrm{F}}^2 + \|(\hat{v}+v)\Delta^\top\|_{\mathrm{F}}^2 \nonumber \\
& = \|w_0\|_2^2 \|\hat{v}-v\|_2^2 + \|\Delta\|_2^2\|\hat{v}+v\|_2^2 \nonumber \\
& \geq (\|w_0\|_2^2+\|\Delta\|_2^2)\min(\|\hat{v}-v\|_2^2, \|\hat{v}+v\|_2^2) \nonumber \\
& \geq 2(1 - |\hat{v}^\top v|) \geq 1-(\hat{v}^\top v)^2 = \sin^2\angle(\hat{v},v),
\end{align}
where the penultimate step uses the fact that $\|w_0\|_2^2+\|\Delta\|_2^2 = 1$.  It follows that
\begin{align}
\label{Eq:Decomp2}
  \|v_S\|_1 - \|\hat{v}_S\|_1 \leq \min(\|\hat{v}_S - v_S\|_1, \|\hat{v}_S + v_S\|_1) &\leq \sqrt{k}\min(\|\hat{v} - v\|_2, \|\hat{v} + v\|_2) \nonumber \\
  &\leq \sqrt{k}\|\hat{v}\hat{w}^\top - vw^\top\|_{\mathrm{F}}
\end{align}
Substituting~\eqref{Eq:Decomp1} and~\eqref{Eq:Decomp2} into~\eqref{Eq:Decomp}, we conclude that
\begin{equation}
\begin{aligned}
    \sin \angle(\hat{v},v) \leq \|\hat{v}\hat{w}^\top - vw^\top\|_{\mathrm{F}} \leq \frac{4\lambda \sqrt{k}}{\delta},
    \end{aligned}
  \end{equation}
as required.
\end{proof}

We state below a version of the Bernstein's inequality that is convenient to apply in our setting. 
\begin{lemma}
\label{Lem:Bernstein}
If $X_1,\ldots,X_n$ are independent with $X_i\sim \mathrm{Bern}(q_i)$ for $q_i\in(0,1)$. Let $a=(a_i)_{i\in[n]}\in\mathbb{R}^n$ and define $\|a\|_{2,\boldsymbol{q}}:=\bigl(\sum_{i\in[n]}a_i^2q_i\bigr)^{1/2}$. Writing $S:=\sum_{i\in[n]}a_i(X_i-q_i)$, we have for any $y>0$ and $\delta \in (0,1)$ that
\[
\mathbb{P}(S \geq y) \leq \exp\biggl(-\frac{1}{2}\frac{y^2}{\|a\|_{2,\boldsymbol{q}}^2+\|a\|_\infty y/3}\biggr)
\]
and
\[
\mathbb{P}\biggl(S \geq 2^{1/2}\|a\|_{2,\boldsymbol{q}}\log^{1/2}(1/\delta) + \frac{\|a\|_\infty}{3}\log(1/\delta)\biggr) \leq \delta.
\]
In particular, if $Y\sim \mathrm{Bin}(n,q)$ for $n\in\mathbb{N}$ and $q\in(0,1)$ and $H:= Y/(nq) - 1$, then for any $u>0$ and $\delta \in (0,1)$ we have
\[
\mathbb{P}(H \geq u) \leq \exp\biggl(-\frac{1}{2}\frac{nqu^2}{1+u/3}\biggr)
\]
and
\[
\mathbb{P}\biggl(H \geq \sqrt\frac{2\log(1/\delta)}{nq} + \frac{\log(1/\delta)}{3nq}\biggr\}\biggr) \leq \delta.
\]
Moreover, the same conclusions hold with $-S$ and $-H$ replacing $S$ and $H$ respectively above. 
\end{lemma}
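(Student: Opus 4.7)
Both inequalities are instances of the classical Bernstein bound for independent, centred, bounded random variables. The plan is to control the moment generating function (MGF) of $S = \sum_{i=1}^n Y_i$, where $Y_i := a_i(X_i-q_i)$, and then apply the Chernoff method. Note that each $Y_i$ is mean zero with $|Y_i| \leq |a_i|\max(q_i, 1-q_i) \leq \|a\|_\infty$ and $\mathbb{E}[Y_i^2] = a_i^2 q_i(1-q_i) \leq a_i^2 q_i$.

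I would first establish the MGF bound
\[
\log \mathbb{E}[e^{\lambda S}] \leq \frac{\|a\|_{2,\boldsymbol{q}}^2\,\lambda^2/2}{1 - \|a\|_\infty \lambda/3}, \qquad 0 < \lambda < 3/\|a\|_\infty,
\]
by Taylor expanding $e^{\lambda Y_i} = 1 + \lambda Y_i + \sum_{k\geq 2}(\lambda Y_i)^k/k!$, using $|Y_i|^k \leq \|a\|_\infty^{k-2} Y_i^2$ together with $k!\geq 2\cdot 3^{k-2}$ for $k\geq 2$ to sum the resulting geometric series, and then applying $\log(1+x)\leq x$ and independence.

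For the first inequality, I would insert $\lambda := y/(\|a\|_{2,\boldsymbol{q}}^2 + \|a\|_\infty y/3)$, which lies in $(0, 3/\|a\|_\infty)$, into the Chernoff bound $\mathbb{P}(S\geq y) \leq e^{-\lambda y} \mathbb{E} e^{\lambda S}$; a short algebraic simplification shows that the exponent collapses to $-y^2/(2(\|a\|_{2,\boldsymbol{q}}^2 + \|a\|_\infty y/3))$. For the second inequality, write $v := \|a\|_{2,\boldsymbol{q}}^2$, $c:= \|a\|_\infty/3$, $t := \log(1/\delta)$, $y := \sqrt{2vt}+ct$, and instead take $\lambda := c^{-1}\bigl(1 - 1/\sqrt{1+2cy/v}\bigr) \in (0, 1/c)$. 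Setting $r := c\sqrt{t/v}$, so that $cy/v = \sqrt{2}\,r + r^2$, the elementary identity $(1+\sqrt{2}\,r)^2 = 1+2\sqrt{2}\,r + 2r^2$ gives $\sqrt{1+2cy/v} = 1+\sqrt{2}\,r$, and the Chernoff exponent then simplifies exactly to $-(v/c^2)r^2 = -t$, yielding $\mathbb{P}(S\geq y)\leq e^{-t}=\delta$.

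The binomial statements follow by specialising to $a_i \equiv 1/(nq)$, so that $\|a\|_\infty = \|a\|_{2,\boldsymbol{q}}^2 = 1/(nq)$ and both inequalities reduce to the stated bounds on $H = Y/(nq)-1$. Finally, the lower-tail versions are obtained by repeating the same argument for $-S$ and $-H$, which are again centred, identically bounded in absolute value, and of the same variance, so the identical chain of inequalities applies. The only non-routine step is the Chernoff calculation for the second inequality, where the specific choice of $\lambda$ is tailored so that the exponent evaluates in closed form; the rest is an entirely standard derivation.
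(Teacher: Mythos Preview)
Your proof is correct and follows essentially the same route as the paper's: verify that the summands $Y_i=a_i(X_i-q_i)$ satisfy the Bernstein moment condition $\sum_i \mathbb{E}|Y_i|^r \leq \tfrac{r!}{2}(\|a\|_\infty/3)^{r-2}\|a\|_{2,\boldsymbol q}^2$ and then apply the Chernoff method. The paper simply cites \citet[(2.10) and Theorem~2.10]{boucheron2013concentration} for both tail bounds after checking the moment condition, whereas you carry out the Chernoff optimisation explicitly (in particular your clever choice of $\lambda$ for the second inequality, which is needed since substituting $y=\sqrt{2vt}+ct$ into the first bound does \emph{not} give exponent $-t$). For the binomial case the paper takes $a=(1,\dots,1)^\top$ and $y=nqu$, equivalent to your $a_i\equiv 1/(nq)$.
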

\begin{proof}
Writing $Y_i:=a_i(X_i-q_i)$ for $i \in [n]$, we have for any positive integer $r\geq 2$ that
\[
\mathbb{E}|Y_i|^r = a_i^r\{q_i(1-q_i)^r+(1-q_i)q_i^r\} \leq a_i^r q_i(1-q_i).
\]
Consequently, 
\[
\sum_{i=1}^n \mathbb{E}|Y_i|^r \leq \sum_{i=1}^n a_i^r q_i(1-q_i) \leq \sum_{i=1}^n \frac{r!}{2} 3^{-(r-2)} a_i^rq_i \leq \frac{r!}{2} \biggl(\frac{\|a\|_\infty}{3}\biggr)^{r-2}\|a\|_{2,\boldsymbol{q}}^2.
\]
Hence, the first two conclusions follows from  \citet[(2.10) and Theorem~2.10]{boucheron2013concentration}. The final two conclusions follows from the first two by setting $a = (1,\ldots,1)^\top \in\mathbb{R}^n$ and $y=nqu$.
\end{proof}

\noindent \textbf{Acknowledgements}: BF was supported by a Knox studentship from Trinity College, Cambridge and EOX funding from Ecole polytechnique; TW was supported by EPSRC grant EP/T02772X/1; RJS was supported by EPSRC grants EP/P031447/1 and EP/N031938.

\bibliographystyle{custom}
\bibliography{biblio}
\end{document}